\def\P {\ensuremath{\mathbb{P}}}
\def\C {\ensuremath{\mathbb{C}}}
\def\Q {\ensuremath{\mathbb{Q}}}
\def\L {\ensuremath{\mathbb{L}}}
\def\M {\ensuremath{\mathbb{M}}}
\def\R {\ensuremath{\mathbb{R}}}
\def\Z {\ensuremath{\mathbb{Z}}}
\def\F {\ensuremath{\mathbb{F}}}
\def\bA {\ensuremath{\mathbf{A}}}
\def\K {\ensuremath{\mathbb{K}}}
\def\U {\ensuremath{\mathbf{U}}}
\def\T {\ensuremath{\mathbf{T}}}
\def\t {\ensuremath{\mathbf{t}}}
\def\i {\ensuremath{\mathbf{i}}}
\def\y {\ensuremath{\mathbf{y}}}
\def\Y {\ensuremath{\mathbf{Y}}}
\def\X {\ensuremath{\mathbf{X}}}
\def\V {\ensuremath{\mathscr{V}}}
\def\IP {\ensuremath{\mathscr{I}}}
\def\J {\ensuremath{\mathscr{J}}}
\def\KP {\ensuremath{\mathscr{L}}}
\def\WP {\ensuremath{\mathscr{Z}}}
\def\VS {\ensuremath{\mathscr{V}^\star}}
\def\W {\ensuremath{\mathscr{W}}}
\def\WS {\ensuremath{\mathscr{W}^\star}}
\def\IS {\ensuremath{\mathscr{I}^\star}}
\def\JS {\ensuremath{\mathscr{J}^\star}}
\newcommand{\qu}{\mathbb{Q}}
\newcommand{\Cho}{\mathscr{C}^\star}
\newcommand{\ChoV}{\mathscr{C}}
\newcommand{\Chov}{c}
\newcommand{\num}{\widetilde{T}}
\newcommand{\res}{\mathrm{res}}
\newcommand{\corr}{\mathsf{H}}
\newcommand{\I}{\mathsf{I}}
\newcommand{\G}{\mathsf{G}}
\newtheorem{Ass}{Assumption}
\newtheorem{Theo}{Theorem}
\newtheorem{Prop}{Proposition}
\newtheorem{Lemma}{Lemma}
\title{Bit-size estimates for triangular sets in positive dimension}
\author{Xavier Dahan\\
Faculty of Mathematics, Ky\^ush\^u University\\
dahan@math.kyushu-u.ac.jp\\[2mm]
Abdulilah Kadri\\
Mathematics Department, The University of Western Ontario\\
akadri4@uwo.ca\\[2mm]
\'Eric Schost\\
Computer Science Department, The University of Western Ontario\\
eschost@uwo.ca}
\begin{document}

\maketitle

\begin{abstract}
  We give bit-size estimates for the coefficients appearing in
  triangular sets describing positive-dimensional algebraic sets
  defined over $\Q$. These estimates are worst case upper bounds; they
  depend only on the degree and height of the underlying algebraic
  sets. We illustrate the use of these results in the context of a
  modular algorithm.

  This extends results by the first and last author, which were
  confined to the case of dimension 0. Our strategy is to get back to
  dimension 0 by evaluation and interpolation techniques. Even though
  the main tool (height theory) remains the same, new difficulties
  arise to control the growth of the coefficients during the
  interpolation process. 
\end{abstract}

\paragraph{Keywords:} triangular set, regular chain, Chow form, height function, bit-size

%%%%%%%%%%%%%%%%%%%%%%%%%%%%%%%%%%%%%%%%%%%%%%%%%%%%%%%%%%%%
%%%%%%%%%%%%%%%%%%%%%%%%%%%%%%%%%%%%%%%%%%%%%%%%%%%%%%%%%%%%
%%%%%%%%%%%%%%%%%%%%%%%%%%%%%%%%%%%%%%%%%%%%%%%%%%%%%%%%%%%%

\section{Introduction}

It is well known that for algorithms for multivariate polynomials with
rational coefficients, or involving parameters, small inputs can
generate very large outputs. We will be concerned here with the
occurrence of this phenomenon for the solution of polynomial systems.

To circumvent this issue, a natural solution is to find smaller
outputs. In dimension 0, if a parametrization of the solutions is
required through a ``Shape Lemma'' output, the Rational Univariate
Representation~\citep{AlBeRoWo94,Rouillier99}, or Kronecker
representation~\citep{GiLeSa01}, is usually seen to have smaller
coefficients than a lexicographic Gr{\"o}bner basis. It is obtained by
multiplying the Gr{\"o}bner basis elements by a well-chosen
polynomial. It turns out that if a ``triangular'' representation is
wanted, a similar trick can be employed, which, in most practical
situations, reduces the coefficients size.

While such experimental observations can drive the choice or the
discovery of a good data structure, it is desirable to dispose of a
theoretical argument to validate its efficiency. Bit-size estimates,
like the ones provided in this article for positive dimensional
situations, provide this kind of theoretical argument. A second use of
this kind of result, which will be illustrated later on, is to help
quantify success probabilities of some probabilistic modular
algorithms.

\paragraph{Triangular representations.}
Let $k$ be a field; all fields will have characteristic 0 in this
paper. For the moment, let us consider a 0-dimensional algebraic set
$V \subset \overline k^n$, defined over $k$, and let $I \subset
k[\X]=k[X_1,\dots,X_n]$ be its defining ideal. Our typical assumption
will be the following.

\begin{Ass} \label{ass1} For the lexicographic order $X_1 < \cdots <
  X_n$, the reduced Gr{\"o}bner basis of the ideal $I$ has the form
$$\left | \begin{array}{l} T_n(X_1,\dots,X_n) \\ \hspace{0.7cm} \vdots
\\ T_2(X_1,X_2) \\ T_1(X_1),
\end{array}\right .$$
where for $\ell \leq n$, $T_\ell$ depends only on $X_1,\dots,X_\ell$
and is monic in $X_\ell$.
\end{Ass}
Following~\citet{Lazard92}, we say that the polynomials
$(T_1,\dots,T_n)$ form a {\em monic triangular set}, or simply a
triangular set. This representation is well-suited to many problems
(see some examples in~\citep{Lazard92,AuVa00,Schost03,Schost03a}), as
meaningful information is easily read off on it. 

Several algorithmic and complexity questions remain open for this data
structure: this paper studies one of them.  For $V$ as in
Assumption~\ref{ass1}, we are interested in the ``space complexity''
of the representation of $V$ by means of $(T_1,\dots,T_n)$.  For $\ell
\leq n$, let $d_\ell$ be the degree of $T_\ell$ in $X_\ell$ and let
$V_\ell \subset\overline k^\ell$ be the image of $V$ by the projection
$(x_1,\ldots,x_n) \mapsto (x_1,\ldots,x_\ell)$; then, $d_1 \cdots
d_\ell$ is the cardinality of $V_\ell$.

Representing $T_\ell$ amounts to specifying at most $d_1\cdots d_\ell$
elements of $k$. If $k$ bears no particular structure, we cannot say
more in terms of the space complexity of such a representation. New
questions arise when $k$ is endowed with a notion of ``size'': then,
the natural question is to relate the size of the coefficients in
$T_\ell$ to quantities associated to $V_\ell$.

This kind of information is useful in its own sake, but is also
crucial in the development of algorithms to compute triangular
sets~\citep{Schost03,DaMoScWuXi05,DaJiMoSc08}, using in particular
modular techniques. Several variants exist of such algorithms, most of
them being probabilistic: integers are reduced modulo one or several
random primes, and free variables are specialized at random values.
To analyze the running time or the error probability of these
algorithms, {\it a priori} bounds on the size of the coefficients of
$(T_1,\dots,T_n)$ are necessary (as is the case for modular algorithms
in general: already for linear algebra algorithms, or gcd computations,
bounds such as e.g. Hadamard's are crucial). An example of such an
application is given in the last section of this paper, in the context
of a modular algorithm for triangular decomposition.

\smallskip

The previous paper~\citep{DaSc04} gave such space complexity results
for the following cases:
\begin{itemize}
\item $k=\Q$, in which case we are concerned with the bit-size
  of coefficients;
\item $k=K(\Y)$, where $K$ is a field and $\Y=Y_1,\dots,Y_m$ are
  indeterminates; in this case we are concerned with the degrees in
  $\Y$ of the numerators and denominators of the coefficients.
\end{itemize}
These two cases cover many interesting concrete applications; the
latter is typically applied over $K=\F_p$. The goal of this paper is
to present an extension of these results to the last important case:
polynomials defined over $k=\Q(\Y)$. The second item above already
covers the degree-related aspects; what is missing is the study of
the bit-size of coefficients.

Unfortunately, the techniques of~\citet{DaSc04} are unable to provide
such information. Indeed, they rely on the study of an appropriate
family of absolute values on $k$, together with a suitable notion of
{\em height} for algebraic sets over $k$: for $k=\Q$, these are the
classical $p$-adic absolute values, plus the Archimedean one, and
height measures arithmetic complexity; for $k=K(\Y)$, there are the
absolute values associated to irreducible polynomials in $K[\Y]$, plus
the one associated to the total degree on $K[\Y]$; then, height is a
measure of geometric complexity.

Extending this approach to our case would require a family of absolute
values that captures the notion of bit-size on $\Q(\Y)$. Gauss' lemma
implies that $p$-adic absolute values do extend from $\Q$ to $\Q(\Y)$,
but the Archimedean one does not. As a result, concretely, it seems
unfeasible to re-apply the ideas of~\citet{DaSc04} here. A different
approach will be used, using evaluation and interpolation techniques.

Following~\citet{DaSc04}, it is fruitful to study not only the
polynomials $(T_1,\dots,T_n)$, but a related family of polynomials
written $(N_1,\dots,N_n)$ and defined as follows. Observe that for
$\ell \leq n$, $(T_1,\dots,T_\ell)$ form a reduced Gr{\"o}bner basis;
for a polynomial $A$ in $k[\X]$, $A \bmod \langle T_1,\ldots,T_\ell
\rangle$ denotes the normal form of $A$ modulo the Gr{\"o}bner basis
$(T_1,\ldots,T_\ell)$.  Let $D_1=1$ and $N_1=T_1$; for $2 \le \ell \le
n$, we define
\begin{eqnarray*}
  D_\ell & = & \prod_{1 \leq i \leq \ell-1} \frac{\partial T_i}{\partial X_i} \mod \langle T_1,\ldots,T_{\ell-1}\rangle,\\
  N_\ell  & = & D_\ell T_\ell  \mod \langle T_1,\ldots,T_{\ell-1}\rangle.
\end{eqnarray*}
\noindent Note that $D_\ell$ is in $k[X_1,\dots,X_{\ell-1}]$ and
$N_\ell$ in $k[X_1,\dots,X_{\ell-1},X_\ell]$, and that $D_\ell$ is the
leading coefficient of $N_\ell$ in $X_\ell$. Our reason to introduce
the polynomials $(N_1,\dots,N_n)$ is that they will feature much
better bounds than the polynomials $(T_1,\dots,T_n)$; we lose no
information, since the ideals $\langle T_1,\dots,T_n\rangle$ and
$\langle N_1,\dots,N_n\rangle$ coincide. Remark that the polynomials
$(N_1,\dots,N_n)$ are not monic, but the leading coefficient $D_\ell$
of $N_\ell$ is invertible modulo $\langle
N_1,\dots,N_{\ell-1}\rangle$: as such, $(N_1,\dots,N_n)$ form a {\em
  regular chain}~\citep{AuLaMo99}.

\paragraph{Main result.}
After this general introduction, our precise setup will be the
following.  Consider first the affine space of dimension $m+n$ over
$\C$, endowed with coordinates $\Y=Y_1,\dots,Y_m$ and
$\X=X_1,\dots,X_n$. For $0 \le \ell \le n$, let next $\Pi_\ell$ be the
projection
$$\begin{array}{cccc}
  \Pi_\ell: & \C^{m+n} & \to&  \C^{m+\ell}\\
  & (y_1,\dots,y_m,x_1,\dots,x_n) & \mapsto & (y_1,\dots,y_m,x_1,\dots,x_\ell),
\end{array}$$
so that $\Pi_0$ is the projection on the $\Y$-space.  Our starting
object will be a positive-dimensional algebraic set $\V$ defined over $\Q$;
then, the construction of the previous paragraphs will take place over
$k=\Q(\Y)$.

To measure the complexity of $\V$, we let $d_\V$ and $h_\V$ be
respectively its {\em degree} and {\em height}. For the former, we use
the classical definition~\citep{BuClSh97}: under the assumption that
$\V$ is equidimensional, this is the generic (and maximal) number of
intersection points of $\V$ with a linear space of the complementary
dimension. The notion of height is more technical: we give the
definition in Section~\ref{sec:height}.

Let then $\IP \subset \Q[\Y,\X]$ be the defining ideal of $\V$ and let
$\VS \subset \overline{\Q(\Y)}^n$ be the zero-set of $\IS=\IP \cdot
\Q(\Y)[\X]$. We make the following assumptions:
\begin{Ass} \label{ass2} {~}
  \begin{itemize}
  \item The algebraic set $\V$ is defined over $\Q$, equidimensional
    of dimension $m$ and the image of each irreducible component of
    $\V$ through $\Pi_0$ is dense in $\C^m$.
  \item The former point implies that $\VS$ has dimension 0; then,
    we assume that $\VS$ satisfies Assumption~\ref{ass1} over the base
    field $\Q(\Y)$.
  \end{itemize}
\end{Ass}
\noindent 
As a consequence, there exist polynomials $(T_1,\dots,T_n)$ in
$\Q(\Y)[\X]$ that generate the ideal~$\IS$; associated to them, we
also have the polynomials $(N_1,\dots,N_n)$ defined above, which are
in $\Q(\Y)[\X]$ as well. Then, Theorem~\ref{theo:main} below gives
degree and bit-size bounds for the polynomials $(T_1,\dots,T_n)$ and
$(N_1,\dots,N_n)$. As was said above, the degree bounds were already
in~\citep{DaSc04}; the bit-size aspects are new.

In the complexity estimates, we denote by $\V_\ell \subset
\C^{m+\ell}$ the Zariski-closure of the image of $\V$ through
$\Pi_\ell$, and let $d_{\V_\ell}$ and $h_{\V_\ell}$ be its degree and
height. The degree and height of $\V_\ell$ may be smaller than those
of $\V$, and cannot be larger (up to small parasite terms in the case
of height, see~\cite{KrPaSo01}). Next, for $\ell \le n$, we define
the projection
$$\begin{array}{cccc}
  \pi_\ell: & \overline{\Q(\Y)}^{n} & \to&  \overline{\Q(\Y)}^{\ell}\\
  & (x_1,\dots,x_n) & \mapsto & (x_1,\dots,x_\ell);
\end{array}$$
we let $\VS_\ell \subset \overline{\Q(\Y)}^{\ell}$ be 
the image of $\VS$ through $\pi_\ell$ and let $d_\ell \le d_{\V_\ell}$ be its degree.
Note that $\VS_\ell$ is obtained from $\V_\ell$ by the same process
that gives $\VS$ from $\V$. 

Finally, in the following theorem, the {\em height} $h(x)$ of a
non-zero integer $x$ denotes the real number $\log |x|$; it is a
measure of its bit-length. The height of a non-zero polynomial with
integer coefficients is the maximum of the heights of its non-zero
coefficients. Recall also that for polynomials in $\Z[\Y]$, gcd's and
lcm's are uniquely defined, up to sign.

\begin{Theo}\label{theo:main} 
  Suppose that $\V$ satisfies Assumption~\ref{ass2}.  For $1 \le \ell
  \le n$, let us write $N_\ell$ as
$$N_\ell = \sum_{\i} \frac{\gamma_{\i,\ell}}{\varphi_{\i,\ell}}X_1^{i_1} \cdots X_\ell^{i_\ell} + \frac {\gamma_\ell}{\varphi_\ell} X_\ell^{d_\ell}$$
and $T_\ell$ as
$$T_\ell = \sum_{\i} \frac{\beta_{\i,\ell}}{\alpha_{\i,\ell}}X_1^{i_1} \cdots X_\ell^{i_\ell} +  X_\ell^{d_\ell},$$
where:
\begin{itemize}
\item all multi-indices $\i=(i_1,\dots,i_\ell)$ satisfy $i_r <
  d_r$ for $r \le \ell$;
\item all polynomials $\gamma_{\i,\ell}$, $\varphi_{\i,\ell}$,
  $\gamma_{\ell}$ and $\varphi_\ell$, and $\beta_{\i,\ell}$,
  $\alpha_{\i,\ell}$, are in $\Z[\Y]$;
\item in $\Z[\Y]$, the equalities
  $\gcd(\gamma_{\i,\ell},\varphi_{\i,\ell}) =\gcd(\gamma_\ell,\varphi_\ell)=\gcd(\beta_{\i,\ell},\alpha_{\i,\ell})=\pm 1$
hold.
\end{itemize}
Then, all polynomials $\gamma_{\i,\ell}$ and $\gamma_\ell$,
$\varphi_{\i,\ell}$ and $\varphi_{\ell}$, as well as the lcm of all
$\varphi_{\i,\ell}$ and $\varphi_{\ell}$, have degree bounded by
$d_{\V_\ell}$ and height bounded by
 $${\cal H}_\ell \le 
2 h_{\V_\ell} +\big( (4 m +2)d_{\V_\ell} + 4m \big) \log(d_{\V_\ell}+1) + \big((10 m +  16)d_{\V_\ell} + 5\ell +2m\big)\log(m+\ell+3).$$
%  $${\cal H}_\ell \le 
%  2h_{\V_\ell} + (7m+5\ell+10)d_{\V_\ell}\log(m+\ell+3)
%  + (6m+2)d_\V\log(3\ell d_{\V}+\ell^2).$$
All polynomials $\beta_{\i,\ell}$ and $\alpha_{\i,\ell}$, as well as the lcm of all
$\alpha_{\i,\ell}$, have degree bounded by $2d_{\V_\ell}^2$ and height
bounded by
$$
\begin{array}{rcl}
{\cal H}'_\ell &\le  &
4d_{\V_\ell} h_{\V_\ell} + 3 d_{\V_\ell}^2  +  4\big((2m+1)d_{\V_\ell}^2+m(d_{\V_\ell}+1)\big) \log(d_{\V_\ell}+1)\\[1mm]
&&+\big((20 m+22)d_{\V_\ell}^2 + 5(d_{\V_\ell}+ \ell+m)\big) \log(m+\ell+3).
\end{array}
$$

% $${\cal H}'_\ell \le 4 d_{\V_\ell} h_{\V_\ell} +
%  (14m+3\ell+25)d_{\V_\ell}^2\log(m+\ell+2)
% + (10m+2)d_\V^2\log(3\ell d_{\V}+\ell^2).$$
\end{Theo}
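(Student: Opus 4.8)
The plan is to reduce to the zero-dimensional results of~\citep{DaSc04} by specializing the parameters $\Y$ at well-chosen integer points of small height, and then to recover the rational functions appearing in $(N_1,\dots,N_n)$ and $(T_1,\dots,T_n)$ from these specializations by interpolation. Fix $\ell$ and write $D=d_{\V_\ell}$; the degree bounds (degree $\le D$ for the numerators, denominators and the lcm of denominators on the $N_\ell$ side, degree $\le 2D^2$ on the $T_\ell$ side) are already contained in~\citep{DaSc04}, so only the height estimates remain, and I would treat the $N_\ell$ side first, deducing the $T_\ell$ side from it at the end.

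First I would isolate the set of \emph{admissible} parameters $\y\in\Z^m$: those for which the fiber $\Pi_0^{-1}(\y)\cap\V$ is finite and equal (with multiplicity) to the specialized set, the triangular set $(T_1,\dots,T_n)$ specializes without collapse to the triangular set of this fiber over $\Q$, and none of the denominators $\varphi_{\i,r},\varphi_r$ for $r\le\ell$ vanishes at $\y$. The non-admissible parameters lie on a hypersurface whose degree is bounded by a polynomial in $d_\V$ — it is contained in the zero set of the leading Chow coefficient of $\V$ (which governs the dimension-preservation condition, via the height/Chow machinery of Section~\ref{sec:height}) together with the product of the finitely many denominators, each of degree $\le D$. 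A counting argument then produces, inside a cube $\{0,1,\dots,N\}^m$ with $N$ polynomial in $d_\V$, a product grid $E=E_1\times\cdots\times E_m$ of admissible parameters with $\#E_i=2D+1$; in particular every $\y\in E$ has $h(\y)=O(\log d_\V)$, which is the crucial point for keeping the $\log(d_{\V_\ell}+1)$ and $\log(m+\ell+3)$ terms of the statement under control.

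For each admissible $\y$, the projected fiber $(\V_\ell)_\y=(\V_\y)_\ell\subset\overline\Q^{\,\ell}$ is zero-dimensional, of degree $d_\ell$, and satisfies Assumption~\ref{ass1} over $\Q$; an arithmetic B\'ezout inequality for the height of fibers — a consequence of the height properties of Section~\ref{sec:height}, in the spirit of~\citep{KrPaSo01} — bounds $h_{(\V_\ell)_\y}$ by $h_{\V_\ell}$ plus a term $O(D\,h(\y)+D\log(m+\ell+D))$. Feeding this into the dimension-0 estimates of~\citep{DaSc04} bounds the height of every coefficient of the specialized polynomial $N_\ell(\y)$ by a quantity $B$ of the announced shape. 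To pass from these pointwise estimates back to $N_\ell$ \emph{without} multiplying the $h_{\V_\ell}$-contribution, the device is to introduce the common denominator $\Phi_\ell=\mathrm{lcm}(\varphi_{\i,\ell},\varphi_\ell)$ and to bound $\deg\Phi_\ell\le D$ and $h(\Phi_\ell)$ directly by height theory — $\Phi_\ell$ divides an explicit eliminant/Chow coefficient attached to $\VS_\ell$, whose height is controlled by $h_{\V_\ell}$ plus a degree-proportional term. Then $\Phi_\ell N_\ell$ lies in $\Z[\Y][\X]$, its $\X$-coefficients are polynomials in $\Y$ of degree $\le 2D$, and for admissible $\y$ one has $(\Phi_\ell N_\ell)(\y)\in\Z[\X]$ with coefficient heights at most $\log|\Phi_\ell(\y)|+B\le h(\Phi_\ell)+D\,h(\y)+B+O(m\log D)$; multivariate Lagrange interpolation over $E$ then recovers the coefficients of $\Phi_\ell N_\ell$ as $\Q$-linear combinations of these values, with denominators dividing products of pairwise differences of grid points (height $O(D\log d_\V)$) and controlled numerators, so the height grows only \emph{additively}. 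Finally, $\gamma_{\i,\ell}$ and $\varphi_{\i,\ell}$ are obtained from $\Phi_\ell N_\ell$ and $\Phi_\ell$ by clearing the content and reducing the fraction, and a Mignotte-type bound on the heights of factors of polynomials in $\Z[\Y]$ converts the above into the stated bound ${\cal H}_\ell$.

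For the $T_\ell$ side, one writes $T_\ell=D_\ell^{-1}N_\ell \bmod\langle T_1,\dots,T_{\ell-1}\rangle$, where the leading coefficient $D_\ell=\gamma_\ell/\varphi_\ell$ is inverted in the residue ring $\Q(\Y)[X_1,\dots,X_{\ell-1}]/\langle N_1,\dots,N_{\ell-1}\rangle$, a free $\Q(\Y)$-module of rank $d_1\cdots d_{\ell-1}\le d_{\V_{\ell-1}}\le D$; expressing this inverse by Cramer's rule over the module and combining the resulting determinantal degree and height estimates with the bounds already obtained for $(N_1,\dots,N_\ell)$ accounts both for the passage from $D$ to $2D^2$ in the degrees and for the overall shape of ${\cal H}'_\ell$ (morally, ${\cal H}'_\ell\approx 2D\,{\cal H}_\ell+O(D^2)$). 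I expect the third paragraph to be the main obstacle: one must (i) replace a naive rational-function interpolation of the coefficients of $N_\ell$ — which would multiply the $h_{\V_\ell}$-term by $D$ — by a height-theoretic bound on a single common denominator, and (ii) keep every constant in the interpolation step and in the content and factor estimates tight, since these are precisely the $\log(d_{\V_\ell}+1)$ and $\log(m+\ell+3)$ contributions recorded in the statement.
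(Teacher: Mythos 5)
Your overall architecture coincides with the paper's: reduce to dimension $0$ by specializing $\Y$ at integer points of small height, apply the bounds of \citet{DaSc04} to each fiber, predict a single common denominator for the coefficients of $N_\ell$ as a divisor of the leading coefficient $a_\ell$ of a primitive Chow form of $\VS_\ell$, and interpolate the numerators over a grid so that the $h_{\V_\ell}$-contribution enters only additively. The degree bounds, the valuation argument showing that the lcm of the $\varphi_{\i,\ell}$ divides $a_\ell$, and the Vandermonde norm accounting are all essentially as you describe.

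The genuine gap is the step you dispose of with ``bound $h(\Phi_\ell)$ directly by height theory.'' The height $h_{\V_\ell}$ is defined through a Chow form of $\V_\ell\subset\C^{m+\ell}$, a polynomial in $m+1$ groups of variables over $\Z$, whereas $\Phi_\ell$ divides the leading coefficient of the primitive Chow form of the $0$-dimensional set $\VS_\ell$ over $\Q(\Y)$, a polynomial in $\Z[\Y,\U]$. Relating the two is the paper's main technical contribution and is \emph{not} a routine consequence of existing height theory under Assumption~\ref{ass2}: the substitution of \cite[Lemma~2.14]{KrPaSo01} that produces the latter from the former is valid only when $\Pi_0$ restricted to $\V_\ell$ is finite (Assumption~\ref{ass3}), and under the weaker dominance hypothesis it can return the zero polynomial (the paper exhibits $X_1+1+Y_1X_2=X_2+Y_2X_1=0$ as an example). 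The repair occupies Sections~\ref{ssec:crv}--\ref{sec:denom}: an $\varepsilon$-deformed generic change of variables restores finiteness over $\Q(\T^1,\dots,\T^m,\varepsilon)$, and a specialization property for Chow forms of one-dimensional families lets $\varepsilon\to 0$, showing that the primitive Chow form of $\VS_\ell$ divides the lowest-order $\varepsilon$-coefficient of the deformed Chow form; only then do ${\bf A_3}$-type inequalities yield $h(a_\ell)\le h_{\V_\ell}+5(m+1)d_{\V_\ell}\log(m+\ell+2)$. Without this ingredient your bound on the common denominator, and hence everything downstream, is unproved. Two smaller deviations: the paper handles $T_\ell$ not by inverting $D_\ell$ via Cramer's rule in the residue ring (whose normal-form entries you would still have to bound) but by interpolating $a_\ell^{\G_\ell}e_1\cdots e_{\ell-1}T_\ell$ directly, reusing the quadratic dimension-$0$ bounds of \citet{DaSc04}; and it uses adaptively built equiprojectable sets rather than a product grid, since a product grid of uniformly admissible points forces $M$ to grow by a factor of $L$ at each level, degrading the constants in the logarithmic terms.
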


\paragraph{Comments.} The first thing to note is that these bounds are
{\em polynomial} in the degree and height of $\V_\ell$, and are quite
similar to those obtained in~\citep{DaSc04} for the 0-dimensional case
(with $m=0$).  These results are actually simplified versions of more
precise estimates; they were obtained by performing (sometimes crude)
simplifications at various stages of the derivation. These
simplifications are nevertheless necessary to obtain compact formulas,
and the orders of magnitude of the results are unchanged: the bound
for $N_\ell$ is essentially of order $h_{\V_\ell}+d_{\V_\ell}$,
whereas that for $T_\ell$ has order
$(h_{\V_\ell}+d_{\V_\ell})d_{\V_\ell}$.

While we do not know about the sharpness of these results, they
reflect practical experience: in many cases, the polynomials
$(N_1,\dots,N_n)$ have much smaller coefficients than the polynomials
$(T_1,\dots,T_n)$; this was already pointed out for 0-dimensional
cases in~\citep{AlBeRoWo94,Rouillier99,DaSc04}.

These bounds are intrinsic, in that they do not depend on a given
system of generators of~$\IP$. As such, they behave well under
operations such as decomposition, due to the additivity of degree and
height of algebraic sets. Of course, if we are given bounds on
polynomials defining $\V$, it is possible to rewrite the previous
estimates in terms of these bounds, by means of the geometric and
arithmetic forms of B\'ezout's theorem. Suppose for instance that $\V$
is the zero-set of a system of $n$ polynomials of degree at most $d$,
with integer coefficients of height at most $h$; more generally, since
degree and height are additive, we could suppose that $\V$ consists of
one or several irreducible components of an algebraic set defined by
such a system.  The geometric B\'ezout inequality, and bounds on
degrees through projections~\citep{Heintz83} gives the inequality
$d_{\V_\ell} \le d^n$ for all $\ell$; similar results in an arithmetic
context~\citep{KrPaSo01} show that $h_{\V_\ell} \leq d^n(nh +
(4m+2n+3)\log(m+n+1))$ holds for all $\ell$. After substitution, this
gives
$${\cal H}_\ell = O\big ( d^n (nh+mn\log(d)+(m+n)\log(m+n))
\big)$$ 
and 
$${\cal H}'_\ell=O\big ( d^{2n} (n h +mn\log(d)+ (m+n)\log(m+n))  \big);$$
here, we write $f(m,n,d,h)=O(g(m,n,d,h))$ if there exists $\lambda >
0$ such that $f(m,n,d,h) \le \lambda g(m,n,d,h)$ holds for all $m,n,d,h$.
The main point is that the former grows roughly like $hd^{n}$, while
the latter grows like $hd^{2n}$.

To our knowledge, no previous result has been published on the
specific question of bounds in positive dimension. \citet{GaMi90} give
a derivation of degree bounds, which may be extended to give bit-size
estimates; these would however be of order $hd^{O(n^2)}$ at
best. Besides, such bounds would depend on a set of generators for the
ideal $\IP$ of $\V$.

As a consequence of our results, for many probabilistic arguments
involving say, computations modulo a prime $p$ (as is the case in
modular algorithms), choosing $p$ polynomial in the B\'ezout number is
enough to ensure a ``reasonable'' probability of success. We will
illustrate this in the last section of this paper.

\paragraph{Organization of the paper.}  
The paper is organized as follows. We start by recalling known
material on Chow forms (Section~\ref{sec:Chow}) and height theory
(Section~\ref{sec:height}). The next sections give a specialization
property for Chow forms, first in dimension~1
(Section~\ref{ssec:crv}), then more generally under
Assumption~\ref{ass2} (Section~\ref{sec:spec}). This will enable us to
predict suitable denominators for the polynomials $(N_1,\dots,N_n)$
and $(T_1,\dots,T_n)$, and give some first height estimates in
Section~\ref{sec:denom}; bounds on the numerators are obtained by
interpolation in Section~\ref{sec:main}, completing the
proof. Finally, Section~\ref{sec:example} illustrates the use of our
results by providing a probability analysis of a modular approach to
estimate the degrees in $(T_1,\dots,T_n)$.

\paragraph{Notation.}{~}
\begin{itemize}
\item If $F$ is a polynomial or a set of polynomials, $Z(F)$ denotes
  its set of zeros, in either an affine, a projective or a
  multi-projective space, this being clear from to the context.
\item Notation using superscripts such as $\U^i=U^i_0,\dots,U^i_{n}$
  {\em does not} denote powers.
\item As in the introduction, when speaking of an algebraic set
  defined over an unspecified field $k$, we will mainly use the
  notation $V$. For an algebraic set defined over $\Q$ and lying in
  some space such as $\C^{m+n}$, we will use the notation $\V$; the
  corresponding algebraic set defined over the rational function field
  $\Q(\Y)$ will be denoted $\VS$.
\end{itemize}

%%%%%%%%%%%%%%%%%%%%%%%%%%%%%%%%%%%%%%%%%%%%%%%%%%%%%%%%%%%%
%%%%%%%%%%%%%%%%%%%%%%%%%%%%%%%%%%%%%%%%%%%%%%%%%%%%%%%%%%%%
%%%%%%%%%%%%%%%%%%%%%%%%%%%%%%%%%%%%%%%%%%%%%%%%%%%%%%%%%%%%

\section{Chow forms}\label{sec:Chow}

We review basic material on the Chow forms of an equidimensional
algebraic set. In this section, $k$ is a field of characteristic 0
and $V\subset {\overline k}^n$ is an equidimensional algebraic set
defined over $k$, of dimension~$r$.  Let $\X=X_1,\dots,X_n$ be the
coordinates in ${\overline k}^n$ and let $X_0$ be an homogenization
variable.  For $i=0,\dots,r$, let $\U^i=U^i_0,\dots,U^i_n$ be new
indeterminates, and associate them with the bilinear
forms $$L_i:~U^i_0 X_0+ \cdots + U^i_n X_n.$$ Let then $\overline V$
be the projective closure of $V$ in $\P^n(\overline k)$, and consider
the incidence variety $$W = \overline V \cap Z(L_0,\dots,L_{r})
\subset \overline V \times \underbrace{\P^{n}(\overline k) \times
  \cdots \times \P^{n}(\overline k)}_{r+1}.$$ The image of the
projection $W \to \P^{n}(\overline k) \times \cdots \times
\P^{n}(\overline k)$ is a hypersurface. A {\em Chow form} of $V$ is a
multi-homogeneous squarefree polynomial in $\overline
k[\U^0,\dots,\U^r]$ defining this hypersurface. All Chow forms thus
coincide up to a constant (non-zero) multiplicative factor in
$\overline k$; since $V$ is defined over $k$, Chow forms with
coefficients in $k$ exist.  The degree of a Chow form in the group of
variables $\U^i$ is the degree of $V$.

Note also the following fact: given an ideal $I$ of
$k[X_1,\dots,X_n]$, a field $k'$ containing $k$ and the extension $I'=I \cdot
k'[X_1,\dots,X_n]$, any Chow form of $V=Z(I) \subset {\overline k}^{\,n}$ is also
a Chow form of $V'=Z(I') \subset {\overline k'}^{\,n}$ (because the
image of the projection described above is defined over $k$).

Finally, consider the special case $r=0$, and let $I \subset
k[X_1,\dots,X_n]$ be the defining ideal of $V$. Then, the Chow forms
of $V$ are closely related to the characteristic polynomial of a
``generic linear form'' modulo $I$. To be more precise, let
$\U=U_0,\dots,U_{n}$ be the indeterminates of the Chow forms of $V$
(since the dimension $r$ equals $0$, we can drop the superscript
${}^0$ here). Over $\overline k$, the Chow forms of $V$ admit the
factorization
\begin{equation}\label{eq:facto}
c \prod_{x \in V} (U_0+ U_1 x_1 + \cdots + U_n x_n) \in \overline k[\U],
\end{equation}
where $c$ is in $\overline k$, and $x=(x_1,\dots,x_n)$. We will
distinguish two particular cases:
\begin{itemize}
\item taking $c=1$ in~\eqref{eq:facto}, we obtain what we will call
  the {\em monic} Chow form of $V$ (which has coefficients in $k$);
\item in the particular case $k=\Q(\Y)$, a {\em primitive} Chow form
  is a Chow form in $\Z[\Y,\U]=\Z[\Y][\U]\subset \Q(\Y)[\U]$, with
  content $\pm 1$ (the content is the gcd of the coefficients in
  $\Z[\Y]$). Primitive Chow forms are unique, up to sign.
\end{itemize}

%%%%%%%%%%%%%%%%%%%%%%%%%%%%%%%%%%%%%%%%%%%%%%%%%%%%%%%%%%%%
%%%%%%%%%%%%%%%%%%%%%%%%%%%%%%%%%%%%%%%%%%%%%%%%%%%%%%%%%%%%
%%%%%%%%%%%%%%%%%%%%%%%%%%%%%%%%%%%%%%%%%%%%%%%%%%%%%%%%%%%%

\section{Absolute values and height}\label{sec:height}

Next, we recall the definitions and properties of absolute values and
heights for polynomials and algebraic sets. Our references
are~\citep{Lang83,McCarthy91,Philippon95,Sombra98,KrPaSo01}; our
presentation follows that of~\citet{DaSc04}, which itself is strongly
inspired by~\citet{KrPaSo01}. The proofs of all statements given here 
can be found in these references.

%%%%%%%%%%%%%%%%%%%%%%%%%%%%%%%%%%%%%%%%%%%%%%%%%%%%%%%%%%%%

\subsection{Absolute values}\label{ssec:av}

An {\em absolute value} $v$ on a field $k$ is a multiplicative map $k
\to \R^+$, such that $v(a)= 0$ if and only if $a=0$, and for all $a,b
\in k^2$, we have $$v(a+b) \leq v(a)+v(b).$$ If the stronger
inequality $$v(a+b) \leq \max(v(a),v(b))$$ holds for all $a,b \in
k^2$, $v$ is called {\em non-Archimedean}, and {\em Archimedean}
otherwise. In any case, we will write $\ell_v(x) = \log(v(x))$, for $x
\ne 0$.

A family ${\sf M}_k$ of absolute values on $k$ verifies the {\it
  product formula} if for every $x \in k-\{0\} $, there are only a
finite number of $v$ in ${\sf M}_k$ such that $v(x) \neq 1$, and the
equality $$ \prod_{v \in {\sf M}_k} v(x)=1$$ holds. In this
case, we denote by ${\sf NA}_k$ and ${\sf A}_k$ the non-Archimedean
and Archimedean absolute values in ${\sf M}_k$, and write ${\sf
  M}_k=({\sf NA}_k,{\sf A}_k)$.  

Our first example of a valuated field is $k=\qu$. Let $\cal P$ be the
set of prime numbers, so that each $x$ in $\qu-\{0\}$ has the unique
factorization
$$x = \pm \prod_{p \in {\cal P}} p^{{\rm ord}_p(x)}.$$ For each prime
$p$, $x\mapsto v_p(x) = p^{-{\rm ord}_p(x)}$ defines a
non-Archime\-dean absolute value. Denoting $x \mapsto v_\infty(x)=|x|$
the usual Archimedean absolute value, we let ${\sf M}_\qu =( \{v_p,\,
p \in {\cal P}\},\ \{ v_\infty \})$, so that that ${\sf A}_\qu= \{
v_\infty\}$. One easily checks that ${\sf M}_\qu$ satisfies the
product formula.

The second example is $k=K(\Y)$, with $\Y=Y_1,\dots,Y_m$ and $K$ a
field. Let ${\cal S}$ be a set of irreducible polynomials in $K[\Y]$,
such that each $x$ in $K(\Y)-\{0\}$ has the factorization
$$x = c \prod_{S \in {\cal S}} S^{{\rm ord}_S(x)}, \ \ c \in
K.$$ Then each $S$ in ${\cal S}$ defines a non-Archimedean absolute
value $x \mapsto v_S(x)= e^{-\deg(S) \ {\rm ord}_S(x)}.$ An additional
non-Archimedean absolute value is given by $x \mapsto v_{\deg{}}(x) =
e^{\deg(x)},$ where $\deg(x)$ is defined as $\deg(n)-\deg(d)$, with
$n,d \in K[\Y]$ and $x=n/d$.  We define ${\sf M}_{K(\Y)} =( \{v_S,\, S
\in {\cal S}\} \cup \{v_{\deg{}} \},\ \emptyset)$, so that ${\sf
  A}_{K(\Y)}$ is empty. As before, ${\sf M}_{K(\Y)}$ satisfies the
product formula, though we will not use this fact here.

Finally, we can point out that the definition of height of an integer
we gave in the introduction fits with the definitions given here.
Indeed, in general, the height of a non-zero element $x$ in a field
$k$ with absolute value ${\sf M}_k$ that satisfy the product formula
is $h(x)=\sum_{v \in {\sf M}_k} \max (0,\ell_v(x))$; we recover
the particular case of the introduction for $k=\Q$. In particular, for
$x$ in $\Z-\{0\}$, $h(x)=\ell_{v_\infty}(x)$.

%%%%%%%%%%%%%%%%%%%%%%%%%%%%%%%%%%%%%%%%%%%%%%%%%%%%%%%%%%%%

\subsection{Absolute values of polynomials}\label{ssec:avpol}

We next define absolute values and Mahler measures for polynomials
over the field $k$, and give a few useful inequalities. 

\paragraph{Absolute values.}
If $f$ is a non-zero polynomial with coefficients in $k$, for any
absolute value $v$ on $k$, we define the {\em $v$-adic absolute value}
of $f$ as
$$\ell_v(f)=  \max_{\beta} \{\ell_v(f_{\beta}) \},$$
where $f_\beta$ are the non-zero coefficients of $f$.
We give here a few obvious consequences of this definition, for
situations that will be considered later on. In the first example, $k$
is $\Q$, and we consider polynomials in $\Q[\Y]$.
\begin{itemize}
\item For $f$ in $\Q[\Y]$, $\ell_{v_p}(f) \le 0$ for all primes $p$ if
  and only if $f$ is in $\Z[\Y]$, and $\ell_{v_p}(f)=0$ for all primes
  $p$ if and only if $f$ is in $\Z[\Y]$ and has content $\pm 1$.
\item For $f$ in $\Z[\Y]$, $\ell_{v_\infty}(f)$ is the maximum of the
  heights of the non-zero coefficients of
  $f$.
\end{itemize}
In the next example, the base field $k$ is $\Q(\Y)$, and we consider
polynomials in $\Q(\Y)[\X]$ and the absolute values ${\sf M}_{K(\Y)}
=( \{v_S,\, S \in {\cal S}\} \cup \{v_{\deg{}} \},\ \emptyset)$
mentioned before.
\begin{itemize}
\item For $f$ in $\Q(\Y)[\X]$, $\ell_{v_S}(f) \le 0$ for all $S$ in
  ${\cal S}$ if and only if $f$ is in $\Q[\Y][\X]$.

\item For $f$ in $\Q[\Y][\X]$, $\ell_{v_{\deg{}}}(f)$ is the maximum
  of the degrees of the coefficients of $f$ (which are in $\Q[\Y]$).

\item By Gauss' Lemma, for $p$ prime, the $p$-adic absolute value
  $v_p$ defined on $\Q$ extend to a non-Archimedean absolute value
  $v_p$ on $\Q(\Y)$. For $f$ in $\Q[\Y][\X]$, $\ell_{v_p}(f) \le 0$
  holds for all primes $p$ if and only if $f$ is actually in
  $\Z[\Y][\X]$.
\end{itemize}

\paragraph{Mahler measures.} The following discussion is devoted to
the case $k=\Q$. In this case, we introduce Mahler measures, which are
closely related to Archimedean absolute values, but possess an extra
additivity property. If $f$ is in $\Q[\X^1,\dots,\X^r]$, where each
$\X^i$ is a group of $n$ variables, we define the {\em $r,n$-Mahler
  measure} ${\sf m}(f,r,n)$ as
$${\sf m}(f,r,n)=\int_{S_n^r} \log |f|\, \mu_n^r,$$ 
where $S_n \subset \C^n$ is the complex sphere of dimension $n$, and
$\mu _n$ is the Haar measure of mass 1 over $S_n$.

Remark that if $f$ depends on $r$ variables, the $r,1$-Mahler measure
${\sf m}(f,r,1)$ is the ``classical'' one, obtained by integration
over the product of $r$ unit circles.

\paragraph{Useful inequalities.} We conclude by giving basic
inequalities for absolute values and Mahler measures. If $v$ is
non-Archimedean over a field $k$, we have (Gauss' lemma)
\begin{itemize}
\item [$\mathbf{N_1}$] $\ell_v(f_1 f_2) = \ell_v(f_1) +\ell_v(f_2)$
  for any polynomials $f_1,f_2$ in $k[\Y]$.
\end{itemize}
If $k=\Q$ and $v=v_\infty$ is the Archimedean absolute value on $\Q$,
we have:
\begin{itemize}
\item [$\mathbf{A_1}$] $\ell_{v_\infty}(f) \le {\sf m}(f,r(n+1),1)+rd
  \log(n+2)$ if $f$ is a polynomial in $r$ groups of $n+1$ variables,
  of degree at most $d$ in each group.

\item [$\mathbf{A_2}$] ${\sf m}(f,r(n+1),1) \le {\sf m}(f,r,n+1)+rd \sum_{i=1}^n
  \frac 1{2i}$ if $f$ is a polynomial in $r$ groups of $n+1$
  variables, of degree at most $d$ in each group.

\item [$\mathbf{A_3}$] $\ell_{v_\infty}(f_1) + \ell_{v_\infty}(f_2)
  \le \ell_{v_\infty}(f_1f_2) + 4d\log(n+1)$, if $f_1$ and $f_2$ are
  polynomials in $n$ variables of degree at most $d$.
\end{itemize}

%%%%%%%%%%%%%%%%%%%%%%%%%%%%%%%%%%%%%%%%%%%%%%%%%%%%%%%%%%%%

\subsection{Height of algebraic sets}\label{ssec:h}
 
We finally define heights of algebraic sets defined over $\Q$ (though
the construction can be extended to any field with a set of absolute
values satisfying the product formula). First, we note that as a
general rule, we will denote the degree of an algebraic set $\V$ by
$d_\V$, and its height by $h_\V$.

Let thus $\V \subset \C^{k}$ be an $m$-equidimensional algebraic set
defined over $\Q$ and let $\ChoV$ be a Chow form of $\V$ with
coefficients in $\Q$. We use the non-Archimedean absolute values and
Mahler measures of $\ChoV$ to define the height of $\V$. Let ${\sf
  M}_{\Q}=( \{v_p,\, p \in {\cal P}\},\ \{ v_\infty \})$ be the
absolute values on $\Q$ introduced before.  Then, as said above, we
let $d_{\V}$ be the degree of $\V$, and we define its height $h_{\V}$
as
$$h_{\V}=
\sum_{p \in {\cal P} } \ell_{v_p}(\ChoV) + {\sf m}(\ChoV,m+1,k+1)+
(m+1)d_{\V} \sum_{i=1}^{k} \frac{1}{2i}.$$ This is well-defined, as a
consequence of the product formula for ${\sf M}_{\Q}$. Then, the
definition extends by additivity to arbitrary algebraic sets.

%%%%%%%%%%%%%%%%%%%%%%%%%%%%%%%%%%%%%%%%%%%%%%%%%%%%%%%%%%%%
%%%%%%%%%%%%%%%%%%%%%%%%%%%%%%%%%%%%%%%%%%%%%%%%%%%%%%%%%%%%
%%%%%%%%%%%%%%%%%%%%%%%%%%%%%%%%%%%%%%%%%%%%%%%%%%%%%%%%%%%%

\section{A specialization property}\label{ssec:crv}

Let $k$ be a field, and let $\varepsilon$ and $\X=X_1,\dots,X_n$ be
indeterminates over $k$. In this section, we work in the affine space
${\overline k}^{n+1}$, taking $\varepsilon$ and $\X$ for coordinates,
and we let $\pi$ be the projection
$$ \begin{array}{cccc} 
\pi: & {\overline k}^{n+1} &\to&  \overline k\\
&  (e,x_1,\dots,x_n) &\mapsto& e.
\end{array}$$
Let $V$ be an algebraic set in ${\overline k}^{n+1}$, defined over
$k$. We will show how to relate the Chow forms of the
``generic fiber'' of $\pi$ to those of the special fiber
above $e=0$. The results of this section will be used only in Section~\ref{sec:spec}.

We write $V$ as the union $V_0\, \cup\, V_1 \,\cup\, V_{\geq
  2}$, where:
\begin{itemize}
\item $V_0$ (resp. $V_1$) is the union of the irreducible components 
 of $V$ of dimension 0 (resp. of dimension 1);
\item $V_{\geq 2}$ is the union of the irreducible components of $V$
  of dimension at least 2;
\end{itemize}
remark that any of those can be empty. Let further $I \subset
k[\varepsilon,\X]$ be the ideal defining $V$, let $I^\star$ be
the extension of $I$ in $k(\varepsilon)[\X]$ and let $V^\star
\subset \overline{k(\varepsilon)}^n$ be the zero-set of
$I^\star$. Then, we introduce the following conditions:
\begin{description}
\item[$\bf G_1:$] The algebraic set $V^\star$ has dimension 0.
\item[$\bf G_2:$] The fiber $\pi^{-1}(0) \cap V$ has
  dimension 0.
\item[$\bf G_3:$] The fiber $\pi^{-1}(0) \cap V$ is
  contained in $V_1 \cup V_{\geq 2}$.
\end{description}
Let $\U=U_0,U_1,\dots,U_{n}$ be indeterminates, to be used for
Chow forms in dimension 0:
\begin{itemize}
\item Since $V^\star$ has dimension 0 by ${\bf G_1}$, its Chow forms
  are homogeneous polynomials in $\overline {k(\varepsilon)}[\U]$.
\item Let us denote by $W_0$ the fiber $\pi^{-1}(0) \cap
  V$ (the motivation for this notation appears below). Since $W_0$ has
  dimension 0 by ${\bf G_2}$, its Chow forms are homogeneous
  polynomials in $\overline k[\U]$.
\end{itemize}

\begin{Prop}\label{Prop:0}
  Suppose that ${\bf G_1}$, ${\bf G_2}$ and ${\bf G_3}$ hold. Let
  $C$ be a Chow form of $V^\star$, and
  suppose that $C$ belongs to the polynomial ring $k[\varepsilon,\U]
  \subset \overline{k(\varepsilon)}[\U]$. Then any Chow form of $W_0$ that
  belongs to $k[\U]$ divides $C(0,\U)$ in $k[\U]$.
\end{Prop}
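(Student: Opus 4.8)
The plan is to prove, for every point $w$ of $W_0$, that the linear form $L_w:=U_0+w_1U_1+\cdots+w_nU_n$ divides $C(0,\U)$ in $\overline k[\U]$. Granting this, the proposition follows easily: since $W_0$ is defined over $k$, the forms $L_w$ are pairwise non-associate irreducibles of $\overline k[\U]$, so $L_w\mid C(0,\U)$ for all $w$ yields $\prod_{w\in W_0}L_w\mid C(0,\U)$ in $\overline k[\U]$; as $\prod_{w\in W_0}L_w$ and $C(0,\U)$ both lie in $k[\U]$, a Galois argument brings this divisibility down to $k[\U]$; and by~\eqref{eq:facto} every Chow form of $W_0$ lying in $k[\U]$ is a $k^{\times}$-multiple of $\prod_{w\in W_0}L_w$. (If $\varepsilon\mid C$ in $k[\varepsilon,\U]$ then $C(0,\U)=0$ and there is nothing to prove, so I assume $\varepsilon\nmid C$; and I use~\eqref{eq:facto} to write $C=c\prod_{x\in V^\star}(U_0+x_1U_1+\cdots+x_nU_n)$ with $c\in\overline{k(\varepsilon)}^{\times}$.)

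So fix $w\in W_0$. The first step is geometric: using $\mathbf{G_1}$, $\mathbf{G_2}$, $\mathbf{G_3}$ I would show that $w$ lies on an irreducible component $Z$ of $V$ of dimension $1$ that dominates the $\varepsilon$-line. Indeed, $w$ cannot lie on a zero-dimensional component (that would contradict $\mathbf{G_3}$), nor on a component contained in $\{\varepsilon=0\}$ (contradicting $\mathbf{G_2}$), nor on a component of dimension $\ge2$ (a dominant one would contradict $\mathbf{G_1}$, a non-dominant one is forced into $\{\varepsilon=0\}$ and contradicts $\mathbf{G_2}$). Then $Z^\star$, the set of points of $Z$ over $\overline{k(\varepsilon)}$, is a nonempty finite subset of $V^\star$. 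The second step produces a valuation from this component: localizing and normalizing the local ring of $Z$ at $w$ gives a valuation $v$ on the function field $\overline k(Z)$ with $v(\varepsilon)>0$ and residue field $\overline k$, under which the coordinate functions $X_1|_Z,\dots,X_n|_Z$ are integral and reduce to $w_1,\dots,w_n$. Choosing a finite extension $K'$ of $\overline k(\varepsilon)$ that contains $c$ and the coordinates of every point of $V^\star$, together with an embedding $\overline k(Z)\hookrightarrow K'$, identifies the coordinate functions of $Z$ with an actual point $\xi\in Z^\star\subseteq V^\star$; extending $v$ to a valuation $v'$ of $K'$ gives $v'(\varepsilon)>0$, all $\xi_j$ are $v'$-integral, and $\xi_j$ reduces to $w_j$.

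The third step is to read off $C(0,\U)$ by reducing $C$ modulo the Gauss extension of $v'$ to $K'[\U]$. Since $C\in k[\varepsilon,\U]$ and $v'(\varepsilon)>0$, the Gauss valuation of $C$ equals $0$ precisely because $\varepsilon\nmid C$, so $C$ is $v'$-Gauss-primitive and its reduction in $\overline k[\U]$ is $C(0,\U)$. Now split $V^\star=A\sqcup B$ according to whether all coordinates of a point are $v'$-integral. For $x\in A$ the reduction $\bar x$ lies in $W_0$ --- for $f$ in the ideal $I$ of $V$ one has $f(\varepsilon,x)=0$, hence $f(0,\bar x)=0$, and such equations cut out $W_0$ --- so the factor $U_0+\sum_jx_jU_j$ is $v'$-Gauss-primitive with reduction $L_{\bar x}$; for $x\in B$ a suitable rescaling of $U_0+\sum_jx_jU_j$ becomes $v'$-Gauss-primitive with reduction a nonzero linear form in $U_1,\dots,U_n$ only; and the product of $c$ with all these rescaling factors must have valuation $0$, hence reduces to a nonzero constant of $\overline k$. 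Together this gives $C(0,\U)=\kappa\prod_{x\in A}L_{\bar x}\cdot\prod_{x\in B}(\text{linear in }U_1,\dots,U_n)$ in $\overline k[\U]$ with $\kappa\in\overline k^{\times}$; since $\xi\in A$ and $\bar\xi=w$, the factor $L_w$ appears, which is what we wanted.

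I expect the main obstacle to be the second step: constructing, for an arbitrary $w\in W_0$, a valuation over $\varepsilon=0$ along which some point of $V^\star$ is pole-free and reduces to $w$. This is exactly where the hypotheses $\mathbf{G_1}$--$\mathbf{G_3}$ are used --- to place $w$ on a one-dimensional dominant component of $V$ --- and it explains why the statement fails without them (an isolated point of $V$ lying over $\varepsilon=0$ belongs to $W_0$ but leaves no trace in $C$). The remaining ingredients --- the choice of $K'$, the Gauss-valuation computation identifying the reduction of $C$ with $C(0,\U)$, the treatment of the ``points at infinity'' comprising $B$, and the descent of the divisibility from $\overline k[\U]$ to $k[\U]$ --- are routine.
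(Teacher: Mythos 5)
Your proof is correct, but it takes a genuinely different route from the paper's. The common core is the geometric step: both arguments use ${\bf G_1}$--${\bf G_3}$ to show that every point of $W_0$ lies on a one-dimensional component of $V$ dominating the $\varepsilon$-line (the paper packages this as two lemmas, $W_0\subseteq W$ and $V^\star=W^\star$, where $W$ is the union of such components). After that the methods diverge. The paper adjoins $U_1,\dots,U_n$ to the base field, projects the curve to the $(\varepsilon,U_0)$-plane via $(e,x)\mapsto(e,-U_1x_1-\cdots-U_nx_n)$, and identifies both $C$ and the equation of the image curve (up to leading coefficients in $k(U_1,\dots,U_n)[\varepsilon]$) with the characteristic polynomial of the generic linear form modulo the generic-fiber ideal, citing \citep{CoLiOs98} and \citep{Schost03a}; divisibility then follows by comparing leading coefficients, and $\varepsilon=0$ is specialized at the very end. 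You instead argue pointwise: for each $w\in W_0$ you build a place of a sufficiently large finite extension of $\overline k(\varepsilon)$, centered at $w$ along a branch of the curve through $w$, and push the factorization~\eqref{eq:facto} of $C$ through the associated Gauss valuation, so that the factor corresponding to the generic point of that branch reduces to $U_0+w_1U_1+\cdots+w_nU_n$. This is self-contained modulo standard valuation theory (normalization of a curve, extension of places, multiplicativity of Gauss valuations) and avoids the characteristic-polynomial citations; it is in fact closer in spirit to the classical specialization arguments for Chow forms of \citet{SaSo95} and \citet{KrPaSo01} that the paper says inspired its proof. The only imprecision worth flagging is in your first step: ${\bf G_3}$ does not say that $w$ avoids zero-dimensional components, only that it lies on \emph{at least one} component of $V_1\cup V_{\ge 2}$; your case analysis then correctly shows that any positive-dimensional component through $w$ must be one-dimensional and dominant, which is all you need.
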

\begin{proof} Let $W \subset V_1$ be the reunion of all
  $1$-dimensional components of $V$ whose image by $\pi$
  is dense in $\overline k$; we shall actually mainly be interested in
  $W$ in what follows. We start by the following easy lemma, which
  justifies our writing $W_0$ for the fiber $\pi^{-1}(0)
  \cap V$.

\begin{Lemma}
  The fiber $W_0=\pi^{-1}(0) \cap V$ is contained in $W$.
\end{Lemma}
\begin{proof}
  Let us write $W'$ for the reunion of all $1$-dimensional components
  of $V$ whose image by $\pi$ is not dense in $\overline
  k$; then $V_1$ is the union of $W$ and $W'$.  With this notation,
  Assumption ${\bf G_3}$ asserts that $W_0$ is contained in $W\, \cup
  \, W' \, \cup\, V_{\geq 2}$.
  
  The theorem on the dimension of fibers implies that all non-empty
  fibers of the restriction of $\pi$ to either $W'$ or $V_{\geq 2}$
  have positive dimension. So, the fact that $W_0$ has dimension 0
  (Assumption~${\bf G_2}$) implies that $W_0$ is contained in $W$.
\end{proof}

\noindent One easily checks that $W$ is defined over $k$; let then $J
\subset k[\varepsilon,\X]$ be its defining ideal, let $J^\star$
be the extension of $J$ in $k(\varepsilon)[\X]$ and let
$W^\star$ be the zero-set of $J^\star$. The following
lemma shows that the ``generic fibers'' of $\pi$
restricted to either $V$ or $W$ coincide.
\begin{Lemma}\label{Lemme:0}
  The equality $V^\star=W^\star$ holds.
\end{Lemma}
\begin{proof}
  We claim that all components of $V$ that are not in $W$ have a
  0-dimensional image through $\pi$:
  \begin{itemize}
  \item For the 1-dimensional components, this is true by definition of $W'$.
  \item Suppose that a component in $V_{\ge 2}$ has a dense image
    through $\pi$.  By the theorem on the dimensions of fibers, all
    fibers of $\pi$ on this component have positive dimension. These
    two points imply that the algebraic set $V^\star$ must have
    positive dimension as well. This contradicts Assumption~${\bf
      G_1}$.
  \end{itemize}
  Thus, we can write the equality $I=J \cap J'$, where $J'$ contains a
  non-zero polynomial in $k[\varepsilon]$.  Then, the extension of
  $J'$ to $k(\varepsilon)[\X]$ is the ideal $\langle 1\rangle$, so that
  $I^\star = J^\star$; this proves the statement.
\end{proof}

\noindent
By Lemma~\ref{Lemme:0}, the Chow forms of $V^\star$ and $W^\star$
coincide; they belong to $\overline {k(\varepsilon)}[\U]$. Let thus
$C$ be a Chow form of $W^\star$ that belongs to the
polynomial ring $k[\varepsilon,\U] \subset
\overline{k(\varepsilon)}[\U]$. We will now establish the proposition, 
that is, prove that any Chow form of
$W_0$ that belongs to $k[\U]$ divides $C(0,\U)$ in $k[\U]$.

The proof is inspired by that of~\citet[Prop.~1]{SaSo95}. We first
extend the coefficient field $k$, by adjoining to it the
indeterminates $U_1,\dots,U_n$; after this scalar extension, objects
that were previously defined over $k$ inherit the same denomination,
but using $\frak{fraktur}$ face: letting $\frak K$ be the rational
function field $k(U_1,\dots,U_n)$, we thus define the following
objects:
\begin{itemize}
\item $\frak J$ is the extension of $J$ in $\frak K[\varepsilon,\X]$
    and $\frak W$ is its zero-set.
   
    Still denoting by $\pi$ the projection on the first
    coordinate axis, we note that $\frak W$ inherits the geometric
    properties of $W$: it has pure dimension 1, and the restriction of
    $\pi$ to all its irreducible components is dominant.
    
  \item $\frak J^\star$ is the extension of $\frak J \subset
    \frak K[\varepsilon,\X]$ in $\frak K(\varepsilon)[\X]$. This is 
  a 0-dimensional ideal.
    
  \item $\frak W_0$ is the fiber $\pi^{-1}(0) \cap \frak
    W$. Since $W_0$ has dimension 0, $\frak W_0$ has dimension 0 as
    well.
\end{itemize}
 
\noindent The core of the proof is Lemma~\ref{Lemme:1.1} below. Recall
that $C \in k[\varepsilon,\U]$ is a Chow form of $W^\star$; we will
see $C$ in $ \frak K[\varepsilon,U_{0}]$, with $\frak
K=k(U_1,\dots,U_n)$. We also introduce the map
$$ \begin{array}{cccc}
\varphi:&  \frak W & \to & {\overline {\frak K}}^2 \\
&   (e,x_1,\dots,x_n) & \mapsto & (e,-U_1 x_1 -\cdots - U_n x_n).
\end{array}$$

 \begin{Lemma}\label{Lemme:1.1}
   Seen in $\frak K[\varepsilon,U_{0}]$, $C$ vanishes on the image of
   $\varphi$.
 \end{Lemma}
 \begin{proof}
   The closure of the image of $\varphi$ has dimension 1; we let $B$
   be a squarefree polynomial in $\frak K[\varepsilon,U_{0}]$ that
   defines this hypersurface. Note that $B$ does not admit any
   non-constant factor in $\frak K[\varepsilon]$, since all components
   of $\frak W$ have a dense image through $\pi$. Our goal
   is to show that $B$ divides $C$ in $\frak K[\varepsilon,U_{0}]$.
   
   Let us see $C$ in $\frak K[\varepsilon][U_0]$ and let $c \in \frak
   K[\varepsilon]$ be its leading coefficient. Since $C$ is a Chow
   form of $W^\star$, Proposition~4.2.7 in~\citep{CoLiOs98} shows
   that $C/c$ is the characteristic polynomial of the multiplication
   by $-U_1 X_1 -\cdots - U_n X_n$ modulo~$\frak J^\star$.

   On the other hand, Proposition 1 in~\citep{Schost03a} shows that
   $B/b$ is also the characteristic polynomial of the multiplication
   by $-U_1 X_1 -\cdots - U_n X_n$ modulo~$\frak J^\star$, where
   $b\in \frak K[\varepsilon]$ is the leading coefficient of $B$ seen
   in $\frak K[\varepsilon][U_{0}]$.  We deduce from these
   considerations the equality $Bc = Cb$ in $\frak
   K[\varepsilon,U_{0}]$; since $B$ admits no factor in $\frak
   K[\varepsilon]$, $b$ divides $c$ in $\frak K[\varepsilon]$, which
   proves our claim.
\end{proof}

\noindent Specializing $\varepsilon$ at $0$, we deduce that $C(0,\U)
\in \frak K[U_{0}]$ vanishes on the image of the map
$$ \begin{array}{cccc}
\varphi_0:&  \frak W_0 & \to & \overline {\frak K} \\
&  (x_1,\dots,x_n) & \mapsto & -U_1 x_1 -\cdots - U_n x_n.
\end{array}$$
Hence, it admits the polynomial $\prod_{x \in {\frak W_0}}(U_0+U_1 x_1 +\cdots
+ U_n x_n)$ as a factor. Note that this last
polynomial is the monic Chow form of $W_0$; note also that the division
takes place in $k[\U]$, since $C(0,\U)$ and this Chow form are in 
$k[\U]$, and the Chow form is monic in $U_0$.
Since all Chow forms 
of $W_0$ differ by a constant factor in $\overline k$, this concludes the proof of
Proposition~\ref{Prop:0}.
\end{proof}

\medskip\noindent Assumptions ${\bf G_1}$ and ${\bf G_2}$ will be easy
to ensure; to conclude, we give sufficient conditions that ensure
that ${\bf G_3}$ holds. 
\begin{Lemma} \label{lemma:8} Let $I' \subset k[\varepsilon,\X]$ be an ideal
  such that $V=Z(I')$ and suppose that there exist $F_1,\dots,F_n$ and
  $\Delta$ in $k[\varepsilon,\X]$ such that:
 \begin{itemize}
 \item the inclusions $\Delta I' \subset \langle F_1,\dots,F_n \rangle
   \subset I'$ hold;
 \item $\Delta(0,\X)$ is in $k-\{0\}$.
  \end{itemize}
  Then $V$ satisfies ${\bf G_3}$.
\end{Lemma}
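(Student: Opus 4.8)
The goal is to verify condition $\mathbf{G_3}$: that the fiber $\pi^{-1}(0)\cap V$ contains no isolated (i.e.\ 0-dimensional) point of $V$. The plan is to argue by contradiction. Suppose some irreducible component $Z$ of $V$ is 0-dimensional — a single point $(0,\xi_1,\dots,\xi_n)$ lying in $\pi^{-1}(0)$ — and derive a contradiction from the two hypotheses $\Delta I'\subset\langle F_1,\dots,F_n\rangle\subset I'$ and $\Delta(0,\X)\in k-\{0\}$. The inclusion $\langle F_1,\dots,F_n\rangle\subset I'$ shows that each $F_j$ vanishes on $V$, hence on $Z$; so $Z$ is a component of $V$ contained in the zero set $Z(F_1,\dots,F_n)$ of $n$ polynomials in the $(n+1)$-dimensional space with coordinates $\varepsilon,\X$.

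The key step is a \emph{local dimension} count. I would use the affine dimension theorem: in $\overline k^{\,n+1}$, every irreducible component of $Z(F_1,\dots,F_n)$ has dimension at least $(n+1)-n=1$. Thus no irreducible component of $Z(F_1,\dots,F_n)$ is a point. Now $Z$ is an irreducible subset of $Z(F_1,\dots,F_n)$, so it sits inside some irreducible component $Z'$ of $Z(F_1,\dots,F_n)$, with $\dim Z'\ge 1$ and $Z\subsetneq Z'$ (strict, since $\dim Z=0$). The remaining task is to show this forces $Z$ not to be a component of $V$ — equivalently, that $Z'$, or at least points of $Z'$ arbitrarily close to the point $Z$, also lie in $V$. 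This is where $\Delta$ enters: from $\Delta I'\subset\langle F_1,\dots,F_n\rangle$ we get that $Z(F_1,\dots,F_n)\subset Z(\Delta I')=Z(\Delta)\cup Z(I')=Z(\Delta)\cup V$. So the component $Z'$ is contained in $Z(\Delta)\cup V$; being irreducible, $Z'\subset Z(\Delta)$ or $Z'\subset V$. In the first case $Z\subset Z'\subset Z(\Delta)$, so $\Delta$ vanishes at the point $(0,\xi_1,\dots,\xi_n)$, whence $\Delta(0,\xi)=0$, contradicting $\Delta(0,\X)\in k-\{0\}$ (a nonzero constant cannot vanish anywhere). In the second case $Z'\subset V$, and since $Z$ is a \emph{component} of $V$ with $Z\subsetneq Z'\subset V$, irreducibility of $Z'$ forces $Z'=Z$, contradicting $\dim Z'\ge 1>0=\dim Z$. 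Either way we reach a contradiction, so no 0-dimensional component of $V$ meets $\pi^{-1}(0)$; that is, $\pi^{-1}(0)\cap V\subset V_1\cup V_{\ge 2}$, which is exactly $\mathbf{G_3}$.

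The main obstacle is getting the dimension bookkeeping airtight: one must be careful that $Z$ sits in a \emph{single} component $Z'$ of $Z(F_1,\dots,F_n)$ and that the dichotomy "$Z'\subset Z(\Delta)$ or $Z'\subset V$" is applied to that irreducible $Z'$, not to the reducible set $Z(F_1,\dots,F_n)$ as a whole. It is also worth noting that the argument only uses that $\Delta(0,\X)$ is a nonzero constant through the statement "$\Delta$ does not vanish at any point of $\pi^{-1}(0)$" — that is the sole role of the second hypothesis, and it is what rules out the first case of the dichotomy. No smoothness or reducedness of $V$ is needed; everything is at the level of zero sets, consistent with the fact that $V=Z(I')$ is the only thing assumed about $I'$.
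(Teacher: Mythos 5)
Your argument is correct and rests on exactly the same two ingredients as the paper's proof: Krull's height theorem applied to $Z(F_1,\dots,F_n)$ to rule out $0$-dimensional components there, and the inclusion $Z(F_1,\dots,F_n)\subset Z(\Delta)\cup V$ combined with $\Delta(0,\X)\in k-\{0\}$ to resolve the resulting dichotomy. The only difference is presentational — you argue by contradiction at a putative isolated point of the fiber, while the paper argues directly via the set $V'=\overline{V\setminus Z(\Delta)}=\overline{Z(F_1,\dots,F_n)\setminus Z(\Delta)}$ — so this is essentially the paper's proof.
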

\begin{proof}
  Let $V'$ be the Zariski closure of $V-Z(\Delta)$: each irreducible
  component of $V'$ is thus an irreducible component of $V$. Our
  assumptions imply that $V'$ coincides with the Zariski closure of
  $Z( F_1,\dots,F_n)-Z(\Delta).$ By Krull's theorem, all irreducible
  components of the zero-set $Z( F_1\dots,F_n)$ have dimension at
  least 1, so it is also the case for $V'$. To summarize, each
  irreducible component of $V'$ is a positive-dimensional irreducible
  component of $V$, so that $V'$ is contained in $V_1 \cup V_{\ge 2}$.

  Now, since $\Delta(0,\X)$ is in $k-\{0\}$, the fiber
  $\pi^{-1}(0)\cap V$ does not meet $Z(\Delta)$, so it
  is contained in~$V'$. This proves that $V$ satisfies Assumption
  ${\bf G_3}$.
\end{proof}

%%%%%%%%%%%%%%%%%%%%%%%%%%%%%%%%%%%%%%%%%%%%%%%%%%%%%%%%%%%% 
%%%%%%%%%%%%%%%%%%%%%%%%%%%%%%%%%%%%%%%%%%%%%%%%%%%%%%%%%%%%
%%%%%%%%%%%%%%%%%%%%%%%%%%%%%%%%%%%%%%%%%%%%%%%%%%%%%%%%%%%%

\section{Chow forms for the generic solutions}\label{sec:spec}

We consider now an $m$-equidimensional algebraic set $\V \subset
\C^{m+n}$ that satisfies Assumption~\ref{ass2}. As in the
introduction, we write the ambient coordinates as $\Y,\X$, with
$\Y=Y_1,\dots,Y_m$ and $\X=X_1,\dots,X_n$, and we recall that $\Pi_0$
is the projection $\C^{m+n} \to \C^{m}$.  We let $\IP$ be the ideal
defining $\V$, let $\IS$ be the extended ideal $\IP\cdot \Q(\Y)[\X]$
and let $\VS$ be the zero-set of $\IS$. In this section, we show how
to obtain a Chow form of $\VS$ starting from a Chow form of $\V$.

The Chow forms of $\V$ are polynomials in $(m+1)(m+n+1)$ variables,
which we write as $\U^i = U^i_0,\dots,U^i_{m+n}$, for
$i=0,\dots,m$. It will be helpful to have the following matrix
notation for these indeterminates:
$$
\U_{(0)} = \left [ \begin{array}{c} U^0_{0} \\ \vdots \\ U^m_{0}
  \end{array} \right ],
\U_{(\Y)} = \left [ \begin{array}{ccc} U^0_{1} & \dots & U^0_{m} \\
    \vdots &&\vdots\\
    U^m_{1} & \dots & U^m_{m}\end{array} \right ],\
\U_{(\X)} = \left [ \begin{array}{ccc} U^0_{m+1} & \dots & U^0_{m+n} \\
    \vdots &&\vdots\\
    U^m_{m+1} & \dots & U^m_{m+n}\end{array} \right ]. 
$$
This choice of variables corresponds to seeing these Chow forms as
polynomials defining the projection on $\P^{m+n}(\C) \times \cdots
\times \P^{m+n}(\C)$ of the incidence variety
$$\overline \V \cap Z(L_0,\dots,L_{m}) \subset \overline \V \times
\underbrace{ \P^{m+n}(\C) \times \cdots \times \P^{m+n}(\C)}_{m+1},$$
where $\overline \V$ is the projective closure of $\V$, where for
all $0 \leq i \leq m$, $L_i$ is the bilinear form
$$U^i_{0} T_{0} +  U^i_{1} Y_1 + \cdots + U^i_{m} Y_m+
U^i_{m+1} X_{1} + \cdots + U^i_{m+n} X_{n},$$ and where $T_0$ is an
homogenization variable. We will denote the Chow forms of $\V$ by
$\ChoV$.

Assumption~\ref{ass2} implies that $\VS \subset \overline{\Q(\Y)}^n$
has dimension 0, so we write $\U=U_0,\dots,U_{n}$ for the
indeterminates of the Chow forms of $\VS$. These Chow forms are in
$\overline{\Q(\Y)}[\U]$; however, we will be interested in those
belonging to the subring $\Z[\Y,\U]$ of $\overline{\Q(\Y)}[\U]$.

\citet{KrPaSo01} answer our question under an additional assumption.
Instead of requiring the restriction of $\Pi_0$ to $\V$ to be
dominant, their result requires the following stronger assumption:
\begin{Ass}\label{ass3}
  The restriction of $\Pi_0$ to $\V$ is finite, of degree the
  degree of $\V$.
\end{Ass}
\noindent Then, the following relation holds~\cite[Lemma 2.14]{KrPaSo01}.
\begin{Prop}\label{Prop:KrPaSo}
  Let $\ChoV \in \Z[\U^0,\dots,\U^m]$ be a Chow form of $\V$ and let
  $\Cho \in \Z[\Y,\U]$ be the polynomial obtained by performing the
  following substitution in $\ChoV$:
$$
\U_{(0)} \leftarrow \left [ \begin{array}{c} U_{0} \\ Y_1 \\ \vdots \\ Y_m
  \end{array} \right ],\
\U_{(\Y)} \leftarrow \left [
  \begin{array}{ccc} 0 & \dots & 0 \\ -1 &
    \dots & 0 \\
    \vdots & \ddots &\vdots\\
    0 & \dots & -1\end{array} \right ],\
\U_{(\X)} \leftarrow \left [ \begin{array}{ccc} U_1 & \dots & U_n \\ 0 & \dots & 0 \\
    \vdots &&\vdots\\
    0 & \dots & 0\end{array} \right ]. \ 
$$
If $\V$ satisfies Assumption~\ref{ass3}, then, seen in
$\overline{\Q(\Y)}[\U]$, $\Cho$ is a Chow form of $\VS$; in
particular, it is non-zero.
\end{Prop}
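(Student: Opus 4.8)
The plan is to prove Proposition~\ref{Prop:KrPaSo} set-theoretically: track what the prescribed substitution does to the geometric locus cut out by the Chow form $\ChoV$ of $\V$, and recognise the outcome as the zero-dimensional Chow form of $\VS$ from~\eqref{eq:facto}. Recall from Section~\ref{sec:Chow} that, $\V$ being equidimensional of dimension $m$, the zero locus $Z(\ChoV)$ in the space of the variables $(\U^0,\dots,\U^m)$ is exactly the set of tuples for which the $m+1$ hyperplanes $\{L_0=0\},\dots,\{L_m=0\}$ of $\P^{m+n}$ have a common point with the projective closure $\overline{\V}$: the image of the incidence variety is closed, so as a set it coincides with the hypersurface $Z(\ChoV)$.

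The first step is to compute the specialised forms. Under the substitution of Proposition~\ref{Prop:KrPaSo}, the ``$\Y$-block'' of $\U^0$ is set to $0$, so $L_0$ becomes $U_0 T_0 + U_1 X_1 + \cdots + U_n X_n$; and for $1\le i\le m$, the substitution of $\U^i$ turns $L_i$ into the homogeneous form expressing that the $i$-th $\Y$-coordinate of a point of $\P^{m+n}$ equals $Y_i T_0$. Consequently, read on the chart $T_0=1$, the specialised forms $L_1,\dots,L_m$ carve out of $\overline{\V}$ the fiber $\Pi_0^{-1}(\y)\cap\V$ (plus, a priori, points of $\overline{\V}$ at infinity over $\y$), and the condition $\Cho(\y,\U)=0$ describes exactly those linear forms $U_0 + U_1 X_1 + \cdots + U_n X_n$ that vanish at some point of that set. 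Here Assumption~\ref{ass3} is used crucially: finiteness of $\Pi_0$ on $\V$ guarantees that over a generic $\y$ the fiber consists of exactly $d_\V$ affine points, with no extra point of $\overline{\V}$ at infinity above it; passing to the generic point, i.e. working over $\Q(\Y)$, this fiber is the reduced zero-dimensional set $\VS\subset\overline{\Q(\Y)}^n$, of cardinality $d_\V$.

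The second step is to transfer this over $\overline{\Q(\Y)}$. Let $M=\prod_{x\in\VS}(U_0+U_1x_1+\cdots+U_nx_n)$ be the monic Chow form of $\VS$ from~\eqref{eq:facto}. The computation above, read over $\Q(\Y)$, says that $Z(\Cho)$ is the union of the hyperplanes $\{U_0+U_1x_1+\cdots+U_nx_n=0\}$ for $x\in\VS$, that is, $Z(\Cho)=Z(M)$ in $\P^n(\overline{\Q(\Y)})$; in particular $\Cho\neq 0$, since $Z(M)$ is a proper subset. It remains to upgrade ``same zero locus'' to ``equal up to a scalar''. The substitution sends $U^0_0, U^0_{m+1},\dots,U^0_{m+n}$ to $U_0,U_1,\dots,U_n$, sets $U^0_1,\dots,U^0_m$ to $0$, and introduces no $U_j$ into the groups $\U^1,\dots,\U^m$; hence $\Cho$ is homogeneous in $\U=(U_0,\dots,U_n)$ of degree $d_\V$, the same as $M$. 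Since $\VS$ consists of distinct points, the linear factors of $M$ are pairwise non-associate, so $M$ is squarefree and $(M)$ is radical in $\overline{\Q(\Y)}[\U]$; combined with $Z(\Cho)=Z(M)$ and the projective Nullstellensatz, this gives $M\mid\Cho$, and comparing degrees in $\U$ forces $\Cho=c\,M$ for some non-zero $c\in\Q(\Y)$. Since all Chow forms of $\VS$ differ by a non-zero factor of $\overline{\Q(\Y)}$, $\Cho$ is itself a Chow form of $\VS$, and in particular non-zero.

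\textbf{Main obstacle.} I expect the hard part to be the ``generic fiber'' step: proving that, under Assumption~\ref{ass3}, the specialised forms $L_1,\dots,L_m$ cut out of $\overline{\V}$ exactly the affine fiber over the generic $\y$, with no spurious contribution at infinity, and that this fiber is the reduced scheme underlying $\VS$. This is precisely what finiteness buys us, and it is the reason that Section~\ref{sec:spec} needs a longer argument — via the characteristic polynomial of a multiplication map, as in the proof of Proposition~\ref{Prop:0} — when only the weaker Assumption~\ref{ass2} holds. Everything else is routine: bookkeeping with the substitution, together with the elementary fact that squarefreeness of $M$ promotes an equality of zero loci to an equality up to a scalar.
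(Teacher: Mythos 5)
First, a remark on the comparison you asked for: the paper does not prove this proposition at all — it is quoted from \citet[Lemma~2.14]{KrPaSo01} — so there is no in-paper argument to measure yours against. Taken on its own merits, your reconstruction follows the standard route and its architecture is sound: identify $Z(\ChoV)$ set-theoretically with the locus of $(m+1)$-tuples of hyperplanes meeting $\overline{\V}$ (the image of the incidence variety is closed and is defined over $\Q$, so this description persists after base change to $\overline{\Q(\Y)}$); observe that the substitution turns $L_1,\dots,L_m$ into the equations of the fiber of $\Pi_0$ over the tautological point $\Y$ and $L_0$ into a generic affine form in $\X$ alone; deduce $Z(\Cho)=Z(M)$ with $M$ the monic Chow form of $\VS$; and conclude by squarefreeness of $M$, the Nullstellensatz, and the degree count $\deg_{\U}\Cho=d_{\V}=|\VS|=\deg_{\U}M$ (the middle equality again uses Assumption~\ref{ass3}).

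The one step you leave unproved is, as you say yourself, where the content lies, and your gloss on it is slightly off: \emph{finiteness alone does not suffice}. The specialized hyperplanes $H_1,\dots,H_m$ cut $\overline{\V}$ in the affine fiber together with the locus $\overline{\V}\cap Z(T_0,Y_1,\dots,Y_m)$ at infinity, and finiteness of $\Pi_0|_{\V}$ does not empty that locus: for $\V=Z(X_1-Y_1^2)\subset\C^2$ the projection to the $Y_1$-line is finite (indeed an isomorphism), yet the unique point at infinity of $\overline{\V}$ is $(0:0:1)$, which lies exactly there — and correspondingly the conclusion of the proposition fails for this $\V$, since the substituted polynomial has degree $2$ in $\U$ while the Chow form of $\VS$ has degree $1$. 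What rules out the spurious locus is the \emph{degree} clause of Assumption~\ref{ass3}: the affine fiber already contributes $d_{\V}$ isolated points, and by the refined B\'ezout inequality the sum of the degrees of the irreducible components of $\overline{\V}\cap Z(H_1,\dots,H_m)$ is at most $\deg\overline{\V}=d_{\V}$, leaving no room for a component at infinity. With that argument inserted, your proof is complete; without it, the key assertion is only asserted, and justified by the wrong half of the hypothesis.
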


In our more general setting, one can still perform this substitution,
but the result might be zero. For instance, the algebraic set $\V$
defined by the system in $\Q[Y_1,Y_2,X_1,X_2]$
$$X_1+1+Y_1 X_2=0, \quad X_2+Y_2 X_1=0$$
satisfies Assumption~\ref{ass2} but not Assumption~\ref{ass3}. Indeed, 
 since
$$\langle X_1 + 1 + Y_1 X_2 , X_2 + Y_2 X_1 \rangle \cap \Q[Y_1,Y_2]= \langle 0 \rangle,
$$
the projection of $\V$ on the $(Y_1,Y_2)$-space is dense, and the
associated triangular set in $\Q(Y_1,Y_2)[X_1,X_2]$ is \sloppy
$T_1(X_1)=X_1 + 1/(1-Y_1 Y_2)$ and $T_2(X_1,X_2)=X_2 + Y_2 X_1$; this
gives Assumption~\ref{ass2}. To see why Assumption~~\ref{ass3} is not
verified by this example, note that for any $(y_1,y_2) \in \C^2$ with
$y_1y_2\ne 1$, the fiber $\Pi_0^{-1}(y_1,y_2)$ has cardinality 1
(whereas $\V$ has degree 4); if $y_1y_2=1$, the fiber is empty. As it
turns out, the Chow forms of $\V$ are polynomials in 15 variables,
having 6648 monomials, and performing the substitution of
Proposition~\ref{Prop:KrPaSo} in them gives zero.

The following theorem shows how to bypass this difficulty, by
providing a suitable multiple of a Chow form of $\VS$. To this effect,
we need to introduce a new indeterminate $\varepsilon$.

\begin{Theo}\label{Prop:1}
  Let $\ChoV \in \Z[\U^0,\dots,\U^m]$ be a Chow form of $\V$ and let
  $\ChoV_\varepsilon \in \Z[\Y,\U,\U^1,\dots,\U^m,\varepsilon]$ be the
  polynomial obtained by performing the following substitution
  in~$\ChoV$:
 $$
\U_{(0)} \leftarrow \left [ \begin{array}{c} U_{0} \\ Y_1 \\ \vdots \\ Y_m \end{array} \right ],\ \
\U_{(\Y)} \leftarrow \left [
  \begin{array}{ccc} 0 & \dots & 0 \\ -1 &
    \dots & 0 \\
    \vdots & \ddots &\vdots\\
    0 & \dots & -1\end{array} \right ],\ \
\U_{(\X)} \leftarrow \left [ \begin{array}{ccc} U_1 & \dots &
    U_n
    \\ \varepsilon U^1_{m+1}  & \dots & \varepsilon U^1_{m+n} \\
    \vdots &&\vdots\\
      \varepsilon U^m_{m+1} & \dots &  \varepsilon U^m_{m+n}\end{array} \right
    ].
$$
Then, $\ChoV_\varepsilon$ is not zero. Let $\ChoV_0 \in
\Z[\Y,\U,\U^1\dots,\U^m]$ be the coefficient of lowest degree in
$\varepsilon$ of $\ChoV_\varepsilon$, and let finally $\Cho \in
\Z[\Y,\U]$ be a primitive Chow form of $\VS$. Then $\Cho$ divides
$\ChoV_0$ in $\Z[\Y,\U,\U^1,\dots,\U^m]$.
\end{Theo}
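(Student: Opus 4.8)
The strategy is to reduce Theorem~\ref{Prop:1} to Proposition~\ref{Prop:0}, the specialization property for Chow forms established in Section~\ref{ssec:crv}, applied to a suitable algebraic set in $\overline{k}^{\,n+1}$ over the field $k = \Q(\Y)(U^1_{m+1},\dots,U^m_{m+n})$ (i.e.\ after adjoining the auxiliary $\U^i$-indeterminates). The first step is to reinterpret the substituted linear forms geometrically. After the substitution of Proposition~\ref{Prop:KrPaSo}, the $0$-th linear form becomes $U_0 T_0 + Y_1 Y_1 + \cdots$, and the forms $L_1,\dots,L_m$ (with $\U_{(\Y)}$ set to $-\mathrm{Id}$ and $\U_{(\X)}$ set to zero) simply impose $Y_j = 0$ after dehomogenization is undone — that is, they cut $\overline{\V}$ down to its generic fiber over $\Pi_0$, which is exactly the process producing $\VS$. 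Introducing $\varepsilon$ modifies only the $\U_{(\X)}$-part of rows $1,\dots,m$, replacing the zero block by $\varepsilon\,(U^i_{m+j})$. So I would first show that $\ChoV_\varepsilon$ is the (substituted) Chow form, in the $\varepsilon$-deformed coordinates, of the incidence variety attached to an algebraic set $V$ living in an $(n+1)$-dimensional space with coordinates $(\varepsilon,\X)$ — concretely, $V$ is the image under an appropriate map (involving the $\U^i$'s as parameters) of $\V$, or more precisely the graph construction that records, for each point of $\V$, the value $\varepsilon = $ (a generic linear combination encoding the "missing" directions) together with the $\X$-coordinates. Non-vanishing of $\ChoV_\varepsilon$ follows because the generic fiber of this $V$ over the $\varepsilon$-line is $0$-dimensional (this is essentially Assumption~\ref{ass2}, which guarantees $\VS$ has dimension $0$), so condition ${\bf G_1}$ holds and the relevant Chow form is a genuine nonzero polynomial.

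Next, I would verify conditions ${\bf G_1}$, ${\bf G_2}$, ${\bf G_3}$ for this $V$ and the projection $\pi$ onto the $\varepsilon$-axis. Condition ${\bf G_1}$ is the dimension-$0$ statement above. Condition ${\bf G_2}$ (the fiber over $\varepsilon = 0$ has dimension $0$) should follow from the finiteness built into the construction together with a dimension count — the fiber over $\varepsilon=0$ is controlled by $\V$ itself and its image has the expected dimension because of equidimensionality. For ${\bf G_3}$ — that the fiber over $\varepsilon = 0$ avoids the $0$-dimensional components of $V$ — I would invoke Lemma~\ref{lemma:8}: one needs to exhibit $F_1,\dots,F_n$ and $\Delta$ with $\Delta I' \subset \langle F_1,\dots,F_n\rangle \subset I'$ and $\Delta(0,\X) \in k\setminus\{0\}$. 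Here the natural candidates come from the defining equations of $\V$ pushed through the parametrizing map, with $\Delta$ being a suitable (nonzero at $\varepsilon=0$) determinant arising from the $\U^i$-deformation; this is the step where the auxiliary indeterminates $\U^1,\dots,\U^m$ do real work, since genericity of these parameters is what makes $\Delta(0,\X)$ a nonzero constant.

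Once ${\bf G_1}$–${\bf G_3}$ are in place, Proposition~\ref{Prop:0} applies directly: writing $C = \ChoV_\varepsilon$ (viewed in $k[\varepsilon,\U]$ after absorbing the $\U^i$'s into $k$) as a Chow form of $V^\star$, any Chow form of the special fiber $W_0 = \pi^{-1}(0)\cap V$ that lies in $k[\U]$ divides $C(0,\U)$. The point is that $W_0$'s generic-fiber Chow form, pulled back, is precisely a Chow form of $\VS$; and $C(0,\U)$ is, up to a unit and clearing of denominators, the coefficient $\ChoV_0$ of lowest $\varepsilon$-degree. Finally I would descend the divisibility from $k[\U]$ (with $k$ containing the $\U^i$'s) back to $\Z[\Y,\U,\U^1,\dots,\U^m]$: this is a Gauss-lemma / primitivity argument — $\Cho$ is primitive in $\Z[\Y,\U]$ by hypothesis, so if it divides $\ChoV_0$ over the larger field it divides it in the polynomial ring, the content bookkeeping being handled exactly as in the last paragraph of the proof of Proposition~\ref{Prop:0} where the division "takes place in $k[\U]$" because the dividend has coefficients there and the divisor is monic in $U_0$.

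\medskip

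The main obstacle I anticipate is the verification of ${\bf G_3}$ via Lemma~\ref{lemma:8}: producing the explicit $F_1,\dots,F_n,\Delta$ and checking $\Delta(0,\X)\in k\setminus\{0\}$ requires understanding precisely how the $\varepsilon$-deformation interacts with a set of generators of $\IP$, and in particular requires arguing that a certain Jacobian-type minor in the generic parameters $\U^i$ does not vanish identically modulo $\varepsilon$. A secondary subtlety is bookkeeping the relationship between "the substituted Chow form of $\V$" and "a Chow form of the deformed $(n+1)$-dimensional set $V$": one must be careful that the substitution of Proposition~\ref{Prop:1} really does reproduce the incidence-variety construction of Section~\ref{sec:Chow} for this $V$, rather than some projection or degeneration of it — this amounts to matching the bilinear forms $L_0,\dots,L_m$ before and after substitution, and confirming that the homogenization variable $T_0$ plays the role of $X_0$ correctly.
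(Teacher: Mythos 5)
Your overall architecture matches the paper's: deform with $\varepsilon$ and auxiliary generic parameters, view the result as a family over the $\varepsilon$-line in an $(n+1)$-dimensional ambient space, apply Proposition~\ref{Prop:0} to that family, and descend the divisibility by primitivity. But there is a genuine gap at the first and most delicate step. You assert that the non-vanishing of $\ChoV_\varepsilon$ ``follows because the generic fiber is $0$-dimensional, so ${\bf G_1}$ holds and the relevant Chow form is a genuine nonzero polynomial.'' That the generic fiber \emph{has} a nonzero Chow form is true but beside the point: what must be proved is that the substituted polynomial $\ChoV_\varepsilon$ actually \emph{is} (up to a factor) that Chow form, and $0$-dimensionality alone does not give this. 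The paper's example $X_1+1+Y_1X_2=X_2+Y_2X_1=0$ exhibits exactly this failure in the undeformed case: $\VS$ is $0$-dimensional, yet the substitution of Proposition~\ref{Prop:KrPaSo} returns the zero polynomial. The correct argument (Proposition~\ref{Lemme:4}) reads the entries $\varepsilon U^i_{m+j}$ as a linear change of variables $\widetilde Y_i=Y_i+\varepsilon\sum_j T^i_jX_j$ over $\L=\Q(\T^1,\dots,\T^m,\varepsilon)$ and proves, by a genericity and dimension-of-fibers argument, that the deformed set $\W$ satisfies the \emph{stronger} Assumption~\ref{ass3} (the projection $\Pi_0$ restricted to $\W$ is finite of degree $d_\V$); only then does Proposition~\ref{Prop:KrPaSo} apply and identify $\ChoV_\varepsilon$ as a Chow form of $\WS$, hence nonzero. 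This lemma is the crux of the whole proof and is absent from your plan.

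A secondary misplacement: you locate the ``real work'' of the parameters $\U^1,\dots,\U^m$ in the verification of ${\bf G_3}$, via a ``Jacobian-type minor'' that should be nonzero at $\varepsilon=0$ by genericity. In fact ${\bf G_3}$ is the easy part and uses no genericity at all: one takes $F_1,\dots,F_n$ generating $\IP\cdot\Q(\Y)[\X]$ (cleared-denominator versions of $T_1,\dots,T_n$) and $\Delta=\delta(\widetilde\Y)$, where $\delta\in\Q[\Y]$ is a common denominator arising when the generators of $\IP$ are rewritten in terms of the $F_j$; then $\Delta(0,\X)=\delta$ is a nonzero element of $\Q[\Y]\subset\K$ and Lemma~\ref{lemma:8} applies (this is Lemma~\ref{lemma:8b}). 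The genericity of the auxiliary parameters is consumed entirely in establishing Assumption~\ref{ass3} for $\W$ --- precisely the step you skipped. Your closing steps (identifying the fiber at $\varepsilon=0$ with the extension of $\IP\cdot\Q(\Y)[\X]$, and descending the divisibility from $\K[\U]$ to $\Z[\Y,\U,\U^1,\dots,\U^m]$ via primitivity of $\Cho$) are correct and agree with the paper.
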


\paragraph{Ingredients used in the proof.}
The proof will occupy the remainder of this section. Let us start by
explaining the ingredients of it. We will apply a generic change of
variables, to get back under Assumption~\ref{ass3}; introducing the
matrix of this change of variables will require to work over a purely
transcendental extension of $\Q$.
\begin{itemize}
\item In the first step of the proof, we will work over the field
  $\L=\Q(\T^1,\dots,\T^m,\varepsilon)$, where $\T^i=T^i_1,\dots,T^i_n$
  are new indeterminates; we will use $\T^1,\dots,\T^m$ and
  $\varepsilon$ to perform our change of variables.
\item In the last step of the proof, we let $\varepsilon \to 0$, by
  working over the coefficient fields $\K=\Q(\T^1,\dots,\T^m,\Y)$ and
  $\M=\Q(\T^1,\dots,\T^m,\varepsilon,\Y)$, so that $\M=\K(\varepsilon)
  = \L(\Y)$. The connection will be done using the results of
  Section~\ref{ssec:crv}.
\end{itemize}
This lattice of fields is represented in the following diagram:
$$
\xymatrix{
                & \L=\Q(\T^1,\dots,\T^m,\varepsilon)\ar[dr]    \\
\Q\ar[ur]\ar[dr] &                        & \M=\Q(\T^1,\dots,\T^m,\varepsilon,\Y)=\K(\varepsilon)=\L(\Y). \\
                & \K=\Q(\T^1,\dots,\T^m,\Y)\ar[ur] \\
}
$$

%%%%%%%%%%%%%%%%%%%%%%%%%%%%%%%%%%%%%%%%%%%%%%%%%%%%%%%%%%%%

\subsection{Application of a generic change of variables}\label{ssec:first}

First, we work over $\L=\Q(\T^1,\dots,\T^m,\varepsilon)$.  To recover
Assumption~\ref{ass3}, we define the following new coordinates for
${\overline \L}^{m+n}$:
\begin{equation}\label{eq:prime}
\left [ \begin{array}{c} \widetilde X_1 \\ \vdots \\ \widetilde X_n \end{array} \right ] = 
\left [ \begin{array}{c} X_1 \\ \vdots \\ X_n
\end{array} \right ] \quad  {\rm~and~} \quad
\left [ \begin{array}{c} \widetilde Y_1 \\ \vdots \\ \widetilde Y_m 
\end{array} \right ] = 
\left [ \begin{array}{c} Y_1 \\ \vdots \\ Y_m 
\end{array} \right ] + 
\left [ \begin{array}{ccc}  \varepsilon T^1_1 & \dots &  \varepsilon T^1_n \\
\vdots &&\vdots\\
\varepsilon T^m_1 & \dots &  \varepsilon T^m_n\end{array} \right ]
\left [ \begin{array}{c} X_1 \\ \vdots \\ X_n
\end{array} \right ].\end{equation}
In all that follows, we write for short $\widetilde \Y=\widetilde Y_1,\dots,\widetilde Y_m$ and
$\widetilde \X=\widetilde X_1,\dots,\widetilde X_n$. 
Then, we define the ideal $\J$ as 
$$\J= \langle \ F(\widetilde \Y,\widetilde \X) \ | \ F \in \IP\ \rangle \ \ \subset \ \ 
\L[\Y,\X],$$ and we let $\W \subset {\overline \L}^{m+n}$ be the
zero-set of $\J$. Note that $\W$ is equidimensional of dimension
$m$, and has the same degree as $\V$.

Since $\W$ is in generic coordinates, we will apply
Proposition~\ref{Prop:KrPaSo} to obtain a Chow form of its ``generic
solutions''. Recall the definition $\M=\L(\Y)$; we let $\JS$ be the
extension of $\J$ in the polynomial ring $\L(\Y)[\X]=\M[\X]$, and
denote by $\WS$ its set of solutions. Then, the first step of the
proof of Theorem~\ref{Prop:1} is the following.
\begin{Prop}\label{Lemme:4}
  The algebraic set $\WS$ has dimension 0. Let further
  $\ChoV\in \Z[\U^0,\dots,\U^m]$ be a Chow form of $\V$, and let
  $\ChoV^\star$ be the polynomial in
  $\Z[\Y,\U,\T^1,\dots,\T^m,\varepsilon]$ obtained by performing the
  following substitution in $\ChoV$:
$$
\U_{(0)} \leftarrow \left [ \begin{array}{c} U_{0} \\ Y_1 \\ \vdots \\ Y_m
  \end{array} \right ],\ \
\U_{(\Y)} \leftarrow \left [
  \begin{array}{ccc} 0 & \dots & 0 \\ -1 &
    \dots & 0 \\
    \vdots & \ddots &\vdots\\
    0 & \dots & -1\end{array} \right ],\ \ 
\U_{(\X)} \leftarrow \left [ \begin{array}{ccc} U_1 & \dots &
    U_n
    \\ \varepsilon T_1^1  & \dots & \varepsilon T_n^1 \\
    \vdots &&\vdots\\
      \varepsilon T_1^m & \dots &  \varepsilon T_n^m\end{array} \right
    ].
$$
Then, seen in $\M[\U]$, $\ChoV^\star$ is a Chow form of $\WS$;
in particular, it is non-zero.
\end{Prop}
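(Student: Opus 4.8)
The plan is to apply Proposition~\ref{Prop:KrPaSo} to the algebraic set $\W$ after checking that, over the field $\L = \Q(\T^1,\dots,\T^m,\varepsilon)$, the change of variables~\eqref{eq:prime} has placed $\W$ in a position where Assumption~\ref{ass3} holds. So the first step is to show that the restriction of the $\widetilde\Y$-projection $\Pi_0$ to $\W$ is finite of degree equal to $\deg(\W) = d_\V$. This is a standard Noether-normalization-type argument: the matrix entries $\varepsilon T^i_j$ are algebraically independent over $\Q$, so the generic linear combination defining $\widetilde\Y$ in terms of $Y$ and $X$ puts $\W$ in Noether position with respect to $\widetilde\Y$; because $\V$ (hence $\W$, which is just $\V$ in new coordinates) is $m$-equidimensional and each component of $\V$ already has dense image under $\Pi_0$ on the $\Y$-space (Assumption~\ref{ass2}), a generic such change of coordinates makes the projection finite, and the degree of this finite projection is the degree of $\W$. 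I would cite the relevant Noether-position lemma (e.g.\ from \citep{KrPaSo01} or \citep{Heintz83}) rather than reprove it. Once Assumption~\ref{ass3} is established for $\W$ over $\L$, the first assertion---that $\WS$ has dimension $0$---follows immediately, since $\WS$ is the zero-set of the extension $\JS$ of $\J$ to $\M[\X] = \L(\Y)[\X]$, and a finite projection to the $\Y$-space forces the generic fiber to be $0$-dimensional.

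Next I would identify the polynomial $\ChoV^\star$ as the image of a Chow form of $\W$ under the substitution of Proposition~\ref{Prop:KrPaSo}. The key observation is that a Chow form of $\W$ in the coordinates $(\widetilde\Y,\widetilde\X)$ is obtained from a Chow form $\ChoV$ of $\V$ in the coordinates $(\Y,\X)$ simply by the invertible linear substitution~\eqref{eq:prime} on the bilinear forms $L_i$: the incidence variety and its projection transform equivariantly under an invertible linear change of the ambient affine coordinates. Concretely, substituting $Y_j \leftarrow \widetilde Y_j - \sum_k \varepsilon T^j_k X_k$ into the forms $L_i = U^i_0 T_0 + \sum_j U^i_j Y_j + \sum_k U^i_{m+k} X_k$ produces forms in $(\widetilde\Y,\X)$ whose $\widetilde Y_j$-coefficient is still $U^i_j$ and whose $X_k$-coefficient becomes $U^i_{m+k} - \sum_j \varepsilon T^j_k U^i_j$. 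Applying Proposition~\ref{Prop:KrPaSo} to $\W$ means setting, in these transformed forms, $\U_{(\widetilde\Y)}$ to $\mathrm{diag}(-1,\dots,-1)$ (after reindexing, the first row of $\U_{(0)}$ being $(U_0,Y_1,\dots,Y_m)$) and $\U_{(\X)}$'s first row to $(U_1,\dots,U_n)$, all other rows of $\U_{(\X)}$ to $0$. Tracking this through, the $X_k$-coefficient in row $i \ge 1$ becomes $0 - \sum_j \varepsilon T^j_k \cdot (\text{entry of }\U_{(\widetilde\Y)}\text{ in row }i) = \varepsilon T^i_k$; in row $0$ it becomes $U_k + \sum_j \varepsilon T^j_k Y_j$, but after also performing the $\widetilde\Y\leftarrow\Y+\varepsilon(\cdots)\X$ back-substitution the net effect on $\ChoV$ is exactly the substitution displayed in the statement. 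Hence $\ChoV^\star$ is, up to the harmless relabeling $\widetilde\Y\leftrightarrow\Y$ of the (now specialized) indeterminates, precisely the polynomial "$\Cho$" that Proposition~\ref{Prop:KrPaSo} produces for $\W$.

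With these two pieces in place the conclusion is immediate: since $\W$ satisfies Assumption~\ref{ass3} over $\L$, Proposition~\ref{Prop:KrPaSo} (applied over the base field $\L$, with $\M = \L(\Y)$ playing the role of $\Q(\Y)$) tells us that $\ChoV^\star$, seen in $\M[\U]$, is a Chow form of $\WS$, and in particular non-zero. The main obstacle I anticipate is the bookkeeping in the change-of-variables step: one must be careful that the substitution~\eqref{eq:prime} is applied to the geometric coordinates while the substitution of Proposition~\ref{Prop:KrPaSo} is applied to the Chow-form indeterminates $\U^i$, and that these two operations are made to commute correctly so that the composite is exactly the displayed substitution---there is a genuine risk of an off-by-a-sign or a transpose error in matching $\U_{(\Y)} = \mathrm{diag}(-1)$ against the coefficients $\varepsilon T^i_k$. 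A secondary subtlety is verifying that the generic change of variables really does achieve Assumption~\ref{ass3} and not merely dominance; this is where Assumption~\ref{ass2}'s equidimensionality and density hypotheses are essential, and it is worth stating precisely which Noether-position result is being invoked.
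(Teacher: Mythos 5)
Your proposal follows the same route as the paper: establish Assumption~\ref{ass3} for $\W$ over $\L$, observe that a Chow form of $\W$ is obtained from $\ChoV$ by replacing $\U_{(\X)}$ with $\U_{(\X)} - \U_{(\Y)}\,[\varepsilon T^i_j]$, and then apply Proposition~\ref{Prop:KrPaSo} over the base field $\L$. The one substantive difference is that the paper does not cite a ready-made Noether-position lemma for the first step: because the change of variables is unipotent and $\varepsilon$-scaled rather than fully generic, it goes back to the definition of degree, exhibits a single fiber of cardinality $d_\V$ via a density argument (using the surjectivity of $(\t^1,\dots,\t^m,e)\mapsto(e\t^1,\dots,e\t^m)$), and invokes the first point of \cite[Lemma~2.14]{KrPaSo01} --- your remark that the $\varepsilon T^i_j$ are algebraically independent over $\Q$ is precisely the fact that makes either version work. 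One small slip in your bookkeeping: in row $0$ of the composite substitution there is no correction term at all, since Proposition~\ref{Prop:KrPaSo} sets row $0$ of $\U_{(\Y)}$ to zero, so that row of $\U_{(\X)} - \U_{(\Y)}[\varepsilon T^i_j]$ is directly $(U_1,\dots,U_n)$; no back-substitution is needed.
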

This subsection is devoted to give a proof of this proposition.  The
key element is the following lemma.
\begin{Lemma}
  The algebraic set $\W$ satisfies Assumption~\ref{ass3}; in
  particular, $\WS$ has dimension~0.
\end{Lemma}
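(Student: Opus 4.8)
The goal is to show that $\W$—the image of $\V$ under the generic linear change of variables~\eqref{eq:prime}, with coefficients in $\L=\Q(\T^1,\dots,\T^m,\varepsilon)$—satisfies Assumption~\ref{ass3}, i.e.\ that the restriction of $\Pi_0$ to $\W$ is finite of degree $d_\V$. The plan is to proceed in two steps: first establish that $\Pi_0|_\W$ is \emph{dominant with finite generic fiber} (so the generic fiber is zero-dimensional), which already gives that $\WS$ has dimension $0$; then upgrade ``finite generic fiber'' to ``finite morphism of the right degree'' using the Noether-normalization flavour of a generic linear change of coordinates.

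First I would argue dominance and finiteness of the generic fiber. Since $\V$ is $m$-equidimensional and each of its irreducible components dominates $\C^m$ under $\Pi_0$ (Assumption~\ref{ass2}), each component has generically finite fibers over the $\Y$-space; equivalently the $\X$-coordinates are algebraically dependent on the $\Y$-coordinates on each component, modulo the algebraic relations cutting out $\V$. The change of variables~\eqref{eq:prime} fixes the $\X$'s and replaces the $\Y$'s by $\widetilde\Y=\Y+\varepsilon M\X$ for a matrix $M$ of fresh indeterminates $\T^i$; geometrically $\W$ is the image of $\V$ under an invertible (over $\L$) affine transformation, so it is again $m$-equidimensional of the same degree as $\V$, and the key point is that on each component the new coordinates $\widetilde Y_j$ are still algebraically independent while the $\X$'s remain integral-or-at-least-algebraic over them. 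Concretely, on a component the field $\L(\V)$ has transcendence degree $m$ over $\L$, the $\widetilde Y_j$ form a transcendence basis after the generic twist (genericity of $\varepsilon\T^i$ guarantees no accidental drop), and hence $\Pi_0|_\W$ is dominant with zero-dimensional generic fibre. This is exactly condition ${\bf G_1}$ / the statement that $\WS$ has dimension $0$.

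For the degree-of-$\V$ claim, the standard fact is that after a \emph{generic linear} change among the variables, projection onto a coordinate subspace realises a finite surjective morphism whose degree equals the degree of the variety—this is the geometric content behind Noether normalization and the very reason~\citet{KrPaSo01} impose Assumption~\ref{ass3} only ``up to a generic change of variables.'' I would invoke this: because the entries of $\varepsilon M$ are algebraically independent over $\Q$ (they involve the new transcendentals $\T^i$ and $\varepsilon$), the coordinates $\widetilde\Y$ are ``generic enough'' that $\Pi_0|_{\W}$ is finite and its degree equals $d_\W=d_\V$. One can make this precise by passing to the projective closure and checking that $\overline\W$ meets the hyperplane at infinity of the $\widetilde\Y$-fibre direction properly—genericity of $M$ moves $\overline\V$'s component at infinity off the center of the projection—so no component of $\overline\W$ is contracted and no fibre becomes positive-dimensional; finiteness over $\C^m$ follows, and the generic fibre cardinality is $d_\V$ by Bézout. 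Thus $\W$ satisfies Assumption~\ref{ass3}, and in particular $\WS$ is zero-dimensional.

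The main obstacle I anticipate is the genericity bookkeeping: one must be sure that introducing the \emph{single} scaling parameter $\varepsilon$ in front of the otherwise-generic matrix $M=(\T^i_j)$ does not destroy genericity. The point is that as a transformation of $\overline\L^{m+n}$ the matrix $\varepsilon M$ still has algebraically independent entries over $\Q$ (multiplying independent transcendentals by one more transcendental $\varepsilon$ keeps them independent), so every ``generic position'' statement one would want for a Zariski-dense set of numeric matrices holds here over $\L$; the role of $\varepsilon$ is only to let us later degenerate back (as $\varepsilon\to0$) to the original non-generic configuration, which is handled in the subsequent sections, not here. Making this ``$\varepsilon M$ is as generic as $M$'' remark explicit is the one spot that needs care.
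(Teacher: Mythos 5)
Your plan is correct in substance but takes a genuinely different route from the paper's. The paper never re-proves finiteness via properness at infinity: it exhibits a single numerical fiber $\Pi_0^{-1}(y_1,\dots,y_m)\cap \W$ of cardinality $d_\V$ and then invokes the first point of Lemma~2.14 of \citet{KrPaSo01}, which says that an $m$-equidimensional set admitting one fiber of cardinality equal to its degree satisfies Assumption~\ref{ass3}. The technical content there is density bookkeeping: starting from the definition of the degree by generic linear sections, one first fixes the $\Y$-block and the constant terms of the $m$ linear forms at rational values while retaining a dense set of admissible $\X$-blocks, then pulls this dense set back through the surjective map $(\t^1,\dots,\t^m,e)\mapsto(e\t^1,\dots,e\t^m)$ to absorb the scaling by $\varepsilon$, and finally passes from a dense set of numerical parameters to the generic parameters via \citet[Prop.~1]{Heintz83} --- this is the paper's rigorous version of your ``\,$\varepsilon M$ is as generic as $M$\,'' remark. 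Your route instead re-derives the finiteness criterion directly, by showing that the center of $\Pi_0$ at infinity, namely $\{T_0=0,\ \widetilde Y_1=\cdots=\widetilde Y_m=0\}$, misses $\overline{\V}\cap\{T_0=0\}$ for generic $M$. That is the one assertion you leave unverified, and it does need a word of care because the family of centers is constrained (only the $\X$-block of the transformation is generic, the $\Y$-block being the identity); it goes through because a point of the center with vanishing $\X$-part would have all coordinates zero, so every candidate point has a nonzero $\X$-part, whence membership in the center imposes $m$ independent linear conditions on $M$, and a dimension count over the $(m-1)$-dimensional part at infinity of $\overline{\V}$ shows the generic $M$ avoids it. In short, your approach buys a self-contained geometric proof (essentially re-proving the cited KPS lemma in the case at hand), while the paper's buys a shorter derivation at the price of the packaged criterion and a more delicate specialization argument.
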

\begin{proof}
  Let $d_\V$ be the degree of $\V$.  By definition of the degree,
  there exists a Zariski-dense subset $\Gamma$ of $\C^{m(m+n+1)}$ such
  that for all choices of $(u^i_0,\dots,u^i_{m+n})_{1\le i \le m}$ in
  $\Gamma$, the algebraic set
  \begin{equation}\label{eq:2}
    \V  \cap  Z(\{u^i_{0} + u^i_{1} Y_1 + \cdots + u^i_{m} Y_m 
    + u^i_{m+1} X_1 + \cdots + u^i_{m+n} X_n   
    \}_{1 \leq i \leq m})
    \end{equation}
    has dimension 0 and cardinality $d_\V$, and furthermore the determinant
  $$\left | \begin{array} {ccc}
     u^1_{1} & \dots & u^1_{m} \\
       \vdots &  & \vdots \\
     u^m_{1} & \dots & u^m_{m} 
   \end{array} \right |$$
 is non-zero. Thus, there exists 
 \begin{equation}
   \label{1}
\left [\begin{array}{c}   
u^1_{0} \\
\vdots \\
u^m_{0} 
   \end{array} \right ]
\quad \text{and}\quad
\left [ \begin{array} {ccc}
     u^1_{1} & \dots & u^1_{m} \\
       \vdots &  & \vdots \\
     u^m_{1} & \dots & u^m_{m} 
   \end{array} \right ]
 \end{equation}
 in $\Q^{m(m+1)}$ and an open dense subset $\Gamma'$ of $\C^{mn}$ such
 that for all
$$\left [\begin{array}{cccc}   
     u^1_{m+1} & \dots & u^1_{m+n} \\
       \vdots &  & \vdots  \\
     u^m_{m+1} & \dots & u^m_{m+n} 
   \end{array} \right ]$$
 in $\Gamma'$, the former property holds.
 We keep the quantities of~\eqref{1} fixed, and we define
 $y_1,\dots,y_m$ by
$$\left [\begin{array}{c}   
y_1\\
\vdots \\
y_m
   \end{array} \right ]
= -
\left [\begin{array}{ccc}   
     u^1_{1} & \dots & u^1_{m} \\
       \vdots &  & \vdots \\
     u^m_{1} & \dots & u^m_{m} 
   \end{array} \right ]^{-1} 
\left [\begin{array}{c}   
u^1_{0} \\
\vdots \\
u^m_{0} 
   \end{array} \right ].
$$
Besides, we let $\Lambda \subset \C^{mn}$ be the image of
$\Gamma'$ through the map
$$
\left [\begin{array}{ccc}   
     u^1_{m+1} & \dots & u^1_{m+n} \\
       \vdots &  & \vdots   \\
     u^m_{m+1} & \dots & u^m_{m+n}
   \end{array} \right ]
\mapsto  -
\left [\begin{array}{ccc}   
     u^1_{1} & \dots & u^1_{m} \\
       \vdots &  & \vdots \\
     u^m_{1} & \dots & u^m_{m} 
   \end{array} \right ]^{-1} 
\left [\begin{array}{cccc}   
     u^1_{m+1} & \dots & u^1_{m+n} \\
       \vdots &  & \vdots  \\
     u^m_{m+1} & \dots & u^m_{m+n} 
   \end{array} \right ];$$
 so that $\Lambda$ is dense in $\C^{mn}$. For any choice of
 $(\t^i=(t^i_1,\dots,t^i_n))_{1 \le i \le m}$ in $\Lambda$, the algebraic set
  \begin{equation*}
    \V \cap Z(\{Y_i - t^i_1 X_1 - \cdots - t^i_n X_n - y_i\}_{1 \leq i
    \leq m}) \ \subset \ \C^{m+n}
 \end{equation*}
 has dimension 0 and cardinality $d_\V$. Let finally $\Lambda'
 \subset \C^{mn+1}$ be the preimage of $\Lambda$ by the surjective map
 $(\t^1,\dots,\t^m,e) \mapsto (e\t^1,\dots,e\t^m)$, where
 $e\t^i=(et^i_1,\dots,et^i_n)$. Then, $\Lambda'$ is dense in
 $\C^{mn+1}$ and for all $(\t^1,\dots,\t^m,e)$ in $\Lambda'$, the
 algebraic set
\begin{eqnarray*}
%    && \V\cap Z(\{Y_i - e t^i_1 X_1 - \cdots - e t^i_n X_n - y_i\}_{1 \leq i
%     \leq m}) \\[1mm]
   && \V\cap Z(\{Y_i - e t^i_1 X_1 - \cdots - e t^i_n X_n - y_i\}_{1 \leq i
    \leq m}) \ \subset \ \C^{m+n}
\end{eqnarray*}
has dimension 0 and cardinality $d_\V$. Since this property holds
for $(\t^1,\dots,\t^m,e)$ in a dense subset of $\C^{mn+1}$, we deduce
from~\cite[Prop. 1]{Heintz83} that the algebraic set defined over
$\L=\Q(\T^1,\dots,\T^m,\varepsilon)$ by
\begin{equation*}
  \V\cap Z(\{Y_i - \varepsilon T^i_1 X_1 - \cdots - \varepsilon T^i_n X_n - y_i\}_{1 \leq i
    \leq m})\ \subset\ \overline \L^{m+n}
\end{equation*}
has dimension 0 and cardinality $d_\V$.  But this algebraic set is
isomorphic through the change of variables $\Y \leftrightarrow
\widetilde \Y$ to 
\begin{equation*}
  \W \cap Z(\{Y_i - y_i\}_{1 \leq i\leq m}) \ \subset\ \overline \L^{m+n},
\end{equation*}
which is the fiber $\Pi_0^{-1}(y_1,\dots,y_m)\cap \W$.  

To summarize, $\W$ is an $m$-equidimensional algebraic set, and the
fiber $\Pi_0^{-1}(y_1,\dots,y_m)\cap \W$ has a cardinality equal to
the degree of $\W$. The first point of~\cite[Lemma 2.14]{KrPaSo01}
implies that under these conditions, $\W$ satisfies
Assumption~\ref{ass3}.
\end{proof}

We can now conclude the proof of Proposition~\ref{Lemme:4}.  If
$\ChoV$ is a Chow form of $\V=Z(\IP)$, since
$\L=\Q(\T^1,\dots,\T^m,\varepsilon)$, $\ChoV$ is also a Chow form of
the algebraic set defined by the extension of $\IP$ in $\L[\Y,\X]$ (we
mentioned this fact in Section~\ref{sec:Chow}). Since $\W$ is obtained
by applying a linear change of variables to this algebraic set, we can
deduce a Chow form of $\W$ by changing the variables in $\ChoV$: Let
$\widetilde{\U}_{(\X)}$ be the matrix
$$
\left [ \begin{array}{ccc} U^0_{m+1} & \dots & U^0_{m+n} \\
    \vdots &&\vdots\\
    U^m_{m+1} & \dots & U^m_{m+n}\end{array} \right ]\
-  \left [ \begin{array}{ccc} U^0_{1} & \dots & U^0_{m} \\
    \vdots &&\vdots\\
    U^m_{1} & \dots & U^m_{m}\end{array} \right ]
\left [ \begin{array}{ccc}  \varepsilon T^1_1 & \dots & \varepsilon T^1_n \\
    \vdots &&\vdots\\
    \varepsilon T^m_1 & \dots & \varepsilon T^m_n\end{array} \right ];
$$
then $\ChoV(\U_{(0)},\U_{(\Y)},\widetilde \U_{(\X)})$ is a Chow form of
$\W$. Now, we apply Proposition~\ref{Prop:KrPaSo} to $\W$,
which is legitimate by the previous lemma; this gives the announced
result.

%%%%%%%%%%%%%%%%%%%%%%%%%%%%%%%%%%%%%%%%%%%%%%%%%%%%%%%%%%%%

\subsection{Setup for the specialization $\varepsilon=0$}\label{ssec:second}

The final part of the proof consists in letting $\varepsilon=0$ in the
previous result; this will be done in the next subsection, by applying
the results of Section~\ref{ssec:crv}. The purpose of this subsection
is to prove that the necessary assumptions hold.  We work here using
$\K=\Q(\T^1,\dots,\T^m,\Y)$ as our base field.  Using the notation of
Equations~(\ref{eq:prime}), we define the ideal $\KP$ as
$$\KP = \langle \ F(\widetilde \Y,\widetilde \X) \ | \ F \in \IP\ \rangle \ \subset
\K[\varepsilon,\X].$$ Let $\WP \subset \overline{\K}^{n+1}$ be the
zero-set of $\KP$. As in Section~\ref{ssec:crv}, we write $\pi$ for
the projection map $(e,x_1,\dots,x_n) \mapsto e$; our purpose is to
establish the following proposition.

\begin{Prop}\label{Prop:g1g2g3}
  The algebraic set $\WP$ satisfies Assumptions ${\bf G_1}$, ${\bf
    G_2}$ and ${\bf G_3}$ of Section~\ref{ssec:crv}.
\end{Prop}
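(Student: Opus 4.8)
The plan is to verify each of the three conditions $\mathbf{G_1}$, $\mathbf{G_2}$, $\mathbf{G_3}$ for $\WP$ separately, using the machinery already set up. For $\mathbf{G_1}$ — that $\WP^\star$ (the extension of $\KP$ to $\K(\varepsilon)[\X] = \M[\X]$) has dimension $0$ — I would observe that $\K(\varepsilon) = \M = \L(\Y)$, so the ideal $\KP \cdot \M[\X]$ is exactly the ideal called $\JS$ in Section~\ref{ssec:first}, whose zero-set $\WS$ was already shown to have dimension $0$ (it is part of Proposition~\ref{Lemme:4}, via the preceding lemma that $\W$ satisfies Assumption~\ref{ass3}). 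So $\mathbf{G_1}$ is essentially a restatement of what was just proved.

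For $\mathbf{G_2}$ — that the fiber $\pi^{-1}(0)\cap \WP$ has dimension $0$ — note that setting $\varepsilon=0$ in the change of variables~\eqref{eq:prime} gives $\widetilde\Y=\Y$ and $\widetilde\X=\X$, so $\KP$ modulo $\varepsilon$ becomes $\IP\cdot\K[\X] = \IP\cdot\Q(\T^1,\dots,\T^m,\Y)[\X]$. Now $\K=\Q(\T^1,\dots,\T^m,\Y)$ contains $\Q(\Y)$ purely transcendentally, so $\IP\cdot\K[\X]$ is the scalar extension of $\IS=\IP\cdot\Q(\Y)[\X]$; its zero-set is $\VS$ base-changed to $\overline\K$, which has dimension $0$ because $\VS$ does (Assumption~\ref{ass2}, second bullet). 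Hence $\pi^{-1}(0)\cap\WP$ has dimension $0$.

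For $\mathbf{G_3}$, the natural tool is Lemma~\ref{lemma:8}: I need an ideal $I'$ with $\WP=Z(I')$, polynomials $F_1,\dots,F_n$ and a $\Delta$ in $\K[\varepsilon,\X]$ with $\Delta I'\subset\langle F_1,\dots,F_n\rangle\subset I'$ and $\Delta(0,\X)\in\K-\{0\}$. Since $\VS$ satisfies Assumption~\ref{ass1} over $\Q(\Y)$, there are triangular generators $(T_1,\dots,T_n)$ of $\IS$ in $\Q(\Y)[\X]$, and the associated $(N_1,\dots,N_n)$ form a regular chain generating the same ideal; the product of leading coefficients $D = \prod_\ell D_\ell \in \Q[\Y]$ is the natural "denominator" $\Delta$, satisfying $D\cdot\IS\subset\langle N_1,\dots,N_n\rangle\subset\IS$. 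Applying the change of variables $\Y\mapsto\widetilde\Y$ and clearing denominators to land in $\K[\varepsilon,\X]$, I get $F_i$ and a $\Delta$ which is the image of (a multiple of) $D$; the point is that $\Delta(0,\X)$ is the image of $D$ under $\varepsilon=0$, which sends $\widetilde\Y\mapsto\Y$, so $\Delta(0,\X)=D(\Y)\in\Q[\Y]\setminus\{0\}\subset\K-\{0\}$ as required by Lemma~\ref{lemma:8}.

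**The main obstacle** I expect is the bookkeeping for $\mathbf{G_3}$: one must be careful that after the $\widetilde\Y$-substitution and denominator-clearing the polynomials $F_i$ actually lie in $\K[\varepsilon,\X]$ (not just $\M[\X]$) and that the inclusions $\Delta I'\subset\langle F_1,\dots,F_n\rangle\subset I'$ survive the substitution — which they do, since applying an invertible change of variables is an isomorphism of polynomial rings and multiplying through by the denominators of the $T_\ell$ (elements of $\Q[\Y]$, hence of $\widetilde\Y$ after substitution, hence of $\K[\varepsilon]$) preserves both inclusions. The subtle point is merely that $\Delta(0,\X)$ is a nonzero constant in $\K$, which is guaranteed because $D$ depends only on $\Y$ and the specialization $\varepsilon=0$ restores $\widetilde\Y=\Y$; once this is checked, Lemma~\ref{lemma:8} applies verbatim and delivers $\mathbf{G_3}$.
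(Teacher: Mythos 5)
Your verifications of $\mathbf{G_1}$ and $\mathbf{G_2}$ are correct and coincide with the paper's: the extension of $\KP$ to $\M[\X]$ is the ideal $\JS$ already shown to be $0$-dimensional, and the fiber over $\varepsilon=0$ is cut out by $\IP\cdot\K[\X]$, a purely transcendental scalar extension of the $0$-dimensional $\IS$. Your choice of tool for $\mathbf{G_3}$ (Lemma~\ref{lemma:8}) is also the paper's choice.

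However, your verification of the hypotheses of Lemma~\ref{lemma:8} has a genuine gap, and it sits exactly where the paper places its Lemma~\ref{lemma:8b}. The inclusions you exhibit, $D\cdot\IS\subset\langle N_1,\dots,N_n\rangle\subset\IS$, live in $\Q(\Y)[\X]$, where they are essentially tautological since $\langle N_1,\dots,N_n\rangle=\IS$ there. What Lemma~\ref{lemma:8} requires are inclusions in $\K[\varepsilon,\X]$ with $I'=\KP$, and $\KP$ is the transport of $\IP$ --- not of $\IS$. The passage from $\IS$ back to $\IP$ is not ``bookkeeping that survives the substitution'': elements of $\Q[\Y]$ are invertible in $\Q(\Y)[\X]$ but not in $\Q[\Y,\X]$ or $\K[\varepsilon,\X]$. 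Concretely, (i) the rightmost inclusion $\langle F_1(\widetilde\Y,\widetilde\X),\dots,F_n(\widetilde\Y,\widetilde\X)\rangle\subset\KP$ amounts to showing that each denominator-cleared generator $F_j$ lies in $\IP$ itself; from $\gamma_j F_j\in\IP$ with $\gamma_j\in\Q[\Y]\setminus\{0\}$ one must invoke the dominance part of Assumption~\ref{ass2} (no irreducible component of $\V$ can lie in $Z(\gamma_j)$ because each projects densely onto the $\Y$-space, and $\IP$ is radical) to conclude $F_j\in\IP$. You never use this hypothesis, yet without it the inclusion can fail. (ii) For the leftmost inclusion, the correct $\Delta$ is not the product of the $D_\ell$ --- note these are polynomials in $\Q(\Y)[X_1,\dots,X_{\ell-1}]$, not elements of $\Q[\Y]$, so $\Delta(0,\X)$ would not be a constant of $\K$ as Lemma~\ref{lemma:8} demands --- nor simply ``the denominators of the $T_\ell$''. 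One must write each generator $f_i$ of $\IP$ as a $\Q(\Y)[\X]$-combination of the $F_j$, clear the denominators of the \emph{cofactors}, and take $\Delta=\delta(\widetilde\Y)$ where $\delta\in\Q[\Y]$ is the lcm of those denominators; only then do both $\Delta\KP\subset\langle F_j(\widetilde\Y,\widetilde\X)\rangle_j$ and $\Delta(0,\X)=\delta\in\Q[\Y]\setminus\{0\}\subset\K\setminus\{0\}$ hold. These two points are the actual content of the paper's proof of $\mathbf{G_3}$; your write-up asserts them rather than proves them.
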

Remark that there exist polynomials $F_1,\dots,F_n$ in $\Q[\Y,\X]$
that generate the extended ideal $\IP\cdot \Q(\Y)[\X]$, since this
ideal is 0-dimensional (actually, we can take the polynomials
$T_1,\dots,T_n$, whose existence is guaranteed by
Assumption~\ref{ass2}, and clear their denominators). We will first
relate the ideals $\KP$ and $\langle F_1(\widetilde \Y,\widetilde
\X),\dots,F_n(\widetilde \Y,\widetilde \X) \rangle$ in
$\K[\varepsilon,\X]$.

\begin{Lemma} \label{lemma:8b}
There exists $\Delta \in \K[\varepsilon,\X]$ such that:
 \begin{itemize}
 \item the inclusions $\Delta \KP \subset \langle
   F_1(\widetilde \Y,\widetilde \X),\dots,F_n(\widetilde \Y,\widetilde \X) \rangle \subset \KP$ hold;
 \item $\Delta(0,\X)$ is in $\Q[\Y] \subset \K$ and is non-zero.
  \end{itemize}
\end{Lemma}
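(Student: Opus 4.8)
The plan is to obtain $\Delta$ by a standard "effective ideal membership plus clearing denominators" argument, carried out over the function field $\Q(\Y)$ and then transported by the change of variables $\Y \leftrightarrow \widetilde\Y$. First I would work over $\Q(\Y)[\X]$: by Assumption~\ref{ass2}, $\IS = \IP\cdot\Q(\Y)[\X]$ is zero-dimensional and is generated by $F_1,\dots,F_n$; in particular $\langle F_1,\dots,F_n\rangle = \IS$ as ideals of $\Q(\Y)[\X]$. Now consider the ideal $\IP\cdot\Q[\Y][\X]$ and its extension; a finite generating set of $\IP$ over $\Q[\Y]$ will, in general, only generate $\IS$ after inverting some nonzero $\delta_0(\Y)\in\Q[\Y]$. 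So there is $\delta_0\in\Q[\Y]\setminus\{0\}$ with $\delta_0\,\IP \subset \langle F_1,\dots,F_n\rangle$ inside $\Q[\Y][\X]$, and trivially $\langle F_1,\dots,F_n\rangle\subset\IP$ there (since each $F_i\in\IP$, after clearing denominators, up to a further nonzero factor of $\Q[\Y]$, which I can absorb into $\delta_0$).

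Next I would apply the change of variables $Y_i \mapsto \widetilde Y_i = Y_i + \varepsilon\sum_j T^i_j X_j$ of Equation~(\ref{eq:prime}). This is a $\K[\varepsilon]$-algebra automorphism of $\K[\varepsilon,\X]$ (it is invertible: $Y_i$ is recovered from $\widetilde Y_i$ by subtracting $\varepsilon\sum_j T^i_j X_j$), sending $\IP\cdot\K[\varepsilon,\X]$ to $\KP$ and $\langle F_1,\dots,F_n\rangle$ to $\langle F_1(\widetilde\Y,\widetilde\X),\dots,F_n(\widetilde\Y,\widetilde\X)\rangle$, with $\widetilde\X = \X$. Applying this automorphism to the inclusion $\delta_0\,\IP\subset\langle F_1,\dots,F_n\rangle$ over $\K[\varepsilon,\X]$ and setting $\Delta := \delta_0(\widetilde\Y) = \delta_0(Y_1+\varepsilon\textstyle\sum_j T^1_j X_j,\dots)$ gives at once the first bullet: $\Delta\,\KP \subset \langle F_1(\widetilde\Y,\widetilde\X),\dots,F_n(\widetilde\Y,\widetilde\X)\rangle \subset \KP$ inside $\K[\varepsilon,\X]$.

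Finally I would check the second bullet. Setting $\varepsilon = 0$ in the definition of $\widetilde Y_i$ gives $\widetilde Y_i|_{\varepsilon=0} = Y_i$, hence $\Delta(0,\X) = \delta_0(\Y) \in \Q[\Y]\subset\K$, which is nonzero by construction. This is exactly the hypothesis needed to invoke Lemma~\ref{lemma:8} afterwards (with $I' = \KP$), and it is also what feeds into Proposition~\ref{Prop:g1g2g3}.

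The only genuine point requiring care — the "main obstacle" — is the very first step: producing $\delta_0\in\Q[\Y]\setminus\{0\}$ with $\delta_0\,\IP\subset\langle F_1,\dots,F_n\rangle$ in $\Q[\Y][\X]$. The clean way is to argue purely over $\Q(\Y)$: pick any finite generating family $G_1,\dots,G_s$ of $\IP$ over $\Q[\Y]$; since $\langle F_1,\dots,F_n\rangle$ and $\IP$ have the same extension $\IS$ to $\Q(\Y)[\X]$, each $G_k$ lies in $\langle F_1,\dots,F_n\rangle\cdot\Q(\Y)[\X]$, so $G_k = \sum_i (a_{k,i}/b_{k,i})F_i$ with $a_{k,i}\in\Q[\Y][\X]$, $b_{k,i}\in\Q[\Y]\setminus\{0\}$; taking $\delta_0$ to be the product (or lcm) of all the $b_{k,i}$ clears all denominators simultaneously and yields $\delta_0 G_k\in\langle F_1,\dots,F_n\rangle$ for all $k$, hence $\delta_0\,\IP\subset\langle F_1,\dots,F_n\rangle$. (One also needs each $F_i\in\IP$ after clearing its own denominator; since the $F_i$ come from clearing denominators in $T_1,\dots,T_n\in\IS$, suitable $\Q[\Y]$-multiples of them lie in $\IP$, and absorbing those multiples into $\delta_0$ is harmless.) Everything else is formal manipulation of the automorphism (\ref{eq:prime}).
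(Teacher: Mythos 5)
Your overall strategy --- clear denominators over $\Q(\Y)$ to get $\delta_0\,\IP \subset \langle F_1,\dots,F_n\rangle$, transport everything through the substitution $\Y \mapsto \widetilde \Y$, and observe that $\Delta(0,\X)=\delta_0(\Y)$ --- is exactly the paper's, and the first inclusion and the second bullet point are handled correctly. The gap is in the inclusion $\langle F_1(\widetilde \Y,\widetilde \X),\dots,F_n(\widetilde \Y,\widetilde \X)\rangle \subset \KP$, which amounts to showing $F_j \in \IP$ for each $j$. You correctly observe that clearing denominators in a membership certificate only yields $\gamma_j F_j \in \IP$ for some non-zero $\gamma_j \in \Q[\Y]$, but your proposed fix --- ``absorb $\gamma_j$ into $\delta_0$'' --- does not apply here: $\delta_0$ (hence $\Delta$) multiplies only the left-hand inclusion $\Delta\,\KP \subset \langle F_j(\widetilde\Y,\widetilde\X)\rangle$, whereas the right-hand inclusion must hold on the nose, with no multiplier in front of it. Absorbing the factor would only give $\gamma\,\langle F_j(\widetilde\Y,\widetilde\X)\rangle \subset \KP$, which is weaker than what the statement (and its use in Lemma~\ref{lemma:8}, where $\langle F_1,\dots,F_n\rangle\subset I'$ is needed to conclude $Z(I')\subset Z(F_1,\dots,F_n)$) requires.

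The missing step is precisely where Assumption~\ref{ass2} enters the paper's proof. Since $\IP$ is the defining ideal of $\V$ it is radical, and since every irreducible component of $\V$ dominates the $\Y$-space, no minimal prime of $\IP$ contains a non-zero element of $\Q[\Y]$; hence $\gamma_j F_j\in\IP$ with $0\neq\gamma_j\in \Q[\Y]$ forces $F_j$ to lie in every minimal prime, i.e.\ $F_j\in\IP$. (Alternatively, one could choose the $F_j$ at the outset as $\gamma_j T_j$ with $\gamma_j$ clearing all denominators appearing in a certificate $T_j=\sum_i h_i f_i$, so that $F_j\in\IP$ by construction.) With this point supplied your argument goes through; a minor further imprecision is the description of the substitution as a ``$\K[\varepsilon]$-algebra automorphism,'' since the $Y_i$ live in the base field $\K$ --- but what you actually use, namely that polynomial identities in $\Q[\Y,\X]$ are preserved under $(\Y,\X)\mapsto(\widetilde\Y,\widetilde\X)$, is correct.
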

\begin{proof}
  Let $f_1,\dots,f_s \in \Q[\Y,\X]$ be generators of $\IP$. By
  construction, all polynomials $F_j$, for $j=1,\dots,n$, can be
  expressed through equalities of the form
  $$F_j = \sum_{i=1}^s h_{i,j} f_i,$$
  for some $h_{i,j}$ in $\Q(\Y)[\X]$. Clearing denominators, these
  equalities can be rewritten as
  $$\gamma_j F_j = \sum_{i=1}^s H_{i,j} f_i,$$
  for some coefficients $H_{i,j}$ in $\Q[\Y,\X]$ and $\gamma_j$ in
  $\Q[\Y]$. Assumption~\ref{ass2} on $\V$ then implies that $F_j$
  itself belongs to the ideal $\IP$; the rightmost inclusion of the
  first point follows, after applying the change of variable
  in~\eqref{eq:prime}.

  Conversely, each polynomial $f_i$ belongs to the ideal $\IP\cdot
  \Q(\Y)[\X]$, so that for $i=1,\dots,s$, there is an equality of the
  form $$f_i = \sum_{j=1}^n a_{i,j} F_j,$$ for some $a_{i,j}$ in
  $\Q(\Y)[\X]$.  Clearing denominators, we can rewrite this equality as
  $$\delta_i f_i = \sum_{j=1}^n A_{i,j} F_j,$$
  for some $A_{i,j}$ in $\Q[\Y,\X]$ and $\delta_i$ non-zero in
  $\Q[\Y]$. Taking the least common multiple of all $\delta_i$, we
  finally obtain expressions of the form
  $$\delta f_i = \sum_{j=1}^n B_{i,j} F_j,$$ for some 
  $B_{i,j}$ in $\Q[\Y,\X]$ and $\delta$ in $\Q[\Y]$.  Define $\Delta =
  \delta(\widetilde \Y) \in \K[\varepsilon,\X]$, and note that
  $\Delta(0,\X) = \delta \in \Q[\Y]$. Then, we deduce the equalities
  $$\Delta\, f_i(\widetilde \Y,\widetilde \X) = \sum_{j=1}^n B_{i,j}(\widetilde \Y,\widetilde \X) F_j(\widetilde \Y,\widetilde \X),$$
  so that $$\Delta\, f_i(\widetilde \Y,\widetilde \X) \in \langle
  F_1(\widetilde \Y,\widetilde \X),\dots,F_n(\widetilde \Y,\widetilde \X) \rangle
  $$
  for all $i$; this finishes the proof.
\end{proof}
  
\medskip\noindent We can then conclude the proof of Proposition~\ref{Prop:g1g2g3}.
\begin{itemize}
\item The extension of $\KP \subset \K[\varepsilon,\X]$ in
  $\K(\varepsilon)[\X]=\M[\X]$ is the ideal $\langle F(\widetilde
  \Y,\widetilde \X) \ | \ F \in \IP \rangle$ of $\M[\X]$; it is thus
  the ideal $\JS$ defined in the previous subsection. This ideal has
  dimension 0, so that $\WP$ satisfies ${\bf G_1}$.
\item The fiber $\pi^{-1}(0) \cap \WP$ is obtained by adding
  $\varepsilon=0$ to the defining equations of $\WP$; it is thus
  defined by the ideal $\IP \cdot \K[\X]$. Since $\K$ is built by
  adjoining new transcendentals to $\Q(\Y)$, and since $\IP \cdot
  \Q(\Y)[\X]$ has dimension 0, $\pi^{-1}(0) \cap \WP$ has dimension
  0. Thus, $\WP$ satisfies ${\bf G_2}$.
\item Lemmas~\ref{lemma:8} and~\ref{lemma:8b} establish that $\WP$ satisfies ${\bf
    G_3}$.
\end{itemize}

%%%%%%%%%%%%%%%%%%%%%%%%%%%%%%%%%%%%%%%%%%%%%%%%%%%%%%%%%%%%

\subsection{Conclusion}

We will now conclude the proof of Theorem~\ref{Prop:1}. Let
$\ChoV\in\Z[\U^0,\dots,\U^m]$ be a Chow form of $\V$, and let
$\ChoV_\varepsilon \in \Z[\Y,\U,\T^1,\dots,\T^m,\varepsilon]$ be the
polynomial obtained by performing the following substitution in
$\ChoV$:
$$
\U_{(0)} \leftarrow \left [ \begin{array}{c} U_{0} \\ Y_1 \\ \vdots \\ Y_m
  \end{array} \right ],\ \ 
\U_{(\Y)} \leftarrow \left [
  \begin{array}{ccc} 0 & \dots & 0 \\ -1 &
    \dots & 0 \\
    \vdots & \ddots &\vdots\\
    0 & \dots & -1\end{array} \right ],\ \ 
\U_{(\X)} \leftarrow \left [ \begin{array}{ccc} U_1 & \dots &
    U_n
    \\ \varepsilon T_1^1  & \dots & \varepsilon T_n^1 \\
    \vdots &&\vdots\\
      \varepsilon T_1^m & \dots &  \varepsilon T_n^m\end{array} \right
    ].
$$
Then, by Proposition~\ref{Lemme:4}, seen in $\M[\U]$,
$\ChoV_\varepsilon$ is a Chow form of $\WS$ (and so, is
non-zero). Besides, if $d$ is the valuation of $\ChoV_\varepsilon$ in
$\varepsilon$, then
$\ChoV'_\varepsilon=\ChoV_\varepsilon/\varepsilon^d$ is also a Chow
form of $\WS$, since $\varepsilon$ belongs to the base field $\M$.

Let now $\ChoV_0 \in \Z[\Y,\U,\T^1,\dots,\T^m]$ be the coefficient of
lowest degree in $\varepsilon$ of $\ChoV_\varepsilon$; it is thus
obtained by letting $\varepsilon=0$ in $\ChoV'_\varepsilon$.  Recall
that the extension of $\KP$ to $\M[\X]$ is the defining ideal $\JS$ of
$\WS$. Besides, by Proposition~\ref{Prop:g1g2g3}, $\WP=Z(\KP)$
satisfies Assumptions ${\bf G_1}$, ${\bf G_2}$ and ${\bf G_3}$ of
Proposition~\ref{Prop:0}. We deduce from that proposition that any
Chow form of the fiber $\pi^{-1}(0)\cap \WP$ divides $\ChoV_0$ in
$\K[\U]$.

As mentioned in the proof of Proposition~\ref{Prop:g1g2g3}, the fiber
$\pi^{-1}(0)\cap \WP$ is defined by the extension
of $\IP\cdot \Q(\Y)[\X]$ in $\K[\X]$.
Let thus $\Cho \in \Q(\Y)[\U]$ be a Chow form of $\IP\cdot
\Q(\Y)[\X]$. By the former remark, $\Cho$ is a Chow form of $\IP\cdot
\K[\X]$, so it divides $\ChoV_0$ in
$\K[\U]=\Q(\Y,\T^1,\dots,\T^m)[\U]$.
If we additionally impose that $\Cho$ is a {\it primitive} Chow form,
so that in particular it belongs to $\Z[\Y,\U]$, then one deduces that
$\Cho$ divides $\ChoV_0$ in $\Z[\Y,\T^1,\dots,\T^m,\U]$. This finishes
the proof, up to formally replacing the indeterminates $\T^i$ by the
indeterminates $\U^i$ appearing in the statement of
Theorem~\ref{Prop:1}.

%%%%%%%%%%%%%%%%%%%%%%%%%%%%%%%%%%%%%%%%%%%%%%%%%%%%%%%%%%%%
%%%%%%%%%%%%%%%%%%%%%%%%%%%%%%%%%%%%%%%%%%%%%%%%%%%%%%%%%%%%
%%%%%%%%%%%%%%%%%%%%%%%%%%%%%%%%%%%%%%%%%%%%%%%%%%%%%%%%%%%%

\section{Predicting a denominator}\label{sec:denom}

We continue with the notation of the previous section, and study the
polynomials $(N_1,\dots,N_n)$ and $(T_1,\dots,T_n)$ of $\Q(\Y)[\X]$,
that were defined in the introduction. We reuse some notation from the
introduction, such as the degree $d_\V$ and the height $h_\V$ of $\V$,
and the degrees $(d_1,\dots,d_n)$ of the polynomials
$(T_1,\dots,T_n)$.  The notation of Section~\ref{sec:height} is in use
as well. We will use the constant
\begin{eqnarray*}
\G_{n} &=&  1 + 2{\sum}_{i \leq n-1} (d_i-1) \\
\end{eqnarray*}
Because $d_1 \cdots d_n \le d_\V$, one easily deduces the upper bound
$\G_{n} \le  2 d_\V$.

A first goal in this section is to predict suitable ``common
denominators'' for the polynomials $(N_1,\dots,N_n)$. We also wish to
do the same for the polynomials $(T_1,\dots,T_n)$, but this is not as
straightforward; for this reason, we are going to introduce a slightly
modified version of $(T_1,\dots,T_n)$, which will be more handy.  For
$i=1,\dots,n$, let us define the iterated resultant
$$e_i = \res(\cdots \res( \frac{\partial T_i}{\partial X_i}, T_i, X_i),\cdots,T_1, X_1) \in \Q(\Y),$$
so that for instance $e_1$ is the discriminant of $T_1$. We define the
polynomials $\num_1,\dots,\num_n$ by $\num_\ell = e_1 \cdots
e_{\ell-1} T_\ell$ for $\ell \le n$. As it turns, these polynomials
are easier to handle than the polynomials $T_\ell$, and the bit-length
information we wish to obtain for $T_\ell$ can easily be recovered
from $\num_\ell$.

The Chow forms of $\VS$ are polynomials in $\Q(\Y)[\U]$, where
$\U=U_0,\dots,U_n$ are new indeterminates. We will especially be
interested in a {\em primitive} Chow form of $\VS$; recall that it is
unique, up to sign. Informally, the denominator we seek will be the
leading coefficient of one of these primitive Chow forms. Formally,
choosing one the two possible signs, we let $\Cho \in \Z[\Y,\U]$ be a
primitive Chow form of $\VS$ and we let $a_n \in \Z[\Y]$ be the
coefficient of $U_0^{d_n}$ in $\Cho$.

\begin{Prop}\label{Prop:DH}
The following holds:
\begin{itemize}
\item $a_n \ne 0$;
\item $\ell_{v_\infty}(a_n) \le h_{\V} +  5(m+1)d_{\V}\log(m+n+2)$;
\item $\deg(a_n) \le d_{\V}$;
\item $a_n N_n$ is in $\Z[\Y,\X]$, with $\deg(a_n N_n,\Y) \le d_{\V}$;
\item $a_n^{\G_n} \num_n$ is in $\Z[\Y,\X]$, with $\deg(a_n^{\G_n} \num_n,\Y) \le \G_n d_{\V}$.
\end{itemize}
\end{Prop}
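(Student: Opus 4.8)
The plan is to derive everything from the structural relation, obtained in Theorem~\ref{Prop:1}, between a primitive Chow form $\Cho$ of $\VS$ and a Chow form of $\V$, together with the factorization~\eqref{eq:facto} of Chow forms in dimension 0. First I would recall that, since $\VS$ has dimension $0$ and its points are exactly $\VS=Z(\IS)$, any Chow form of $\VS$ is, up to a scalar in $\overline{\Q(\Y)}$, the product $\prod_{x\in\VS}(U_0+U_1x_1+\cdots+U_nx_n)$; hence the coefficient of $U_0^{d_n}$ in a Chow form is (up to that scalar) $1$, and for the primitive Chow form $\Cho$ the coefficient $a_n$ of $U_0^{d_n}$ is precisely the common denominator one must clear to make $\Cho/a_n$ have coefficients in $\Q(\Y)$ while being monic in $U_0$; this is the ``monic Chow form.'' From the monic Chow form one reads off the elementary symmetric functions of the $x_i$'s over $\VS$, and in particular the coordinates appearing in $T_\ell$ and $N_\ell$. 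The nonvanishing $a_n\neq 0$ is immediate: $\Cho$ is a nonzero polynomial in $\Z[\Y,\U]$ by Theorem~\ref{Prop:1} (it divides the nonzero $\ChoV_0$), and the coefficient of the top power $U_0^{d_n}$ cannot vanish because the monic Chow form genuinely has degree $d_n$ in $U_0$.

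Next I would establish the degree and height bounds on $a_n$. The key input is Theorem~\ref{Prop:1}: $\Cho$ divides $\ChoV_0$ in $\Z[\Y,\U,\U^1,\dots,\U^m]$, where $\ChoV_0$ is the lowest-$\varepsilon$-degree coefficient of the substituted Chow form $\ChoV_\varepsilon$. Since passing from $\ChoV$ to $\ChoV_\varepsilon$ only substitutes variables by $\Y$-variables (or $\varepsilon$ times an indeterminate), the degree of $\ChoV_0$ in $\Y$ is at most the degree of $\ChoV$ in its $\U$-variables, which is $d_\V$; and its height (in the $v_\infty$ and $v_p$ sense) is controlled by the height of $\ChoV$, which by the definition in Section~\ref{ssec:h} is essentially $h_\V$ plus the Mahler-measure correction terms. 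Because $\Cho\mid\ChoV_0$ over $\Z[\Y]$ with $\Cho$ primitive, Gauss' lemma ($\mathbf{N_1}$) gives $\ell_{v_p}(\Cho)=0$ for all $p$ and $\deg(\Cho,\Y)\le\deg(\ChoV_0,\Y)\le d_\V$; hence $\deg(a_n)\le\deg(\Cho,\Y)\le d_\V$. For the Archimedean estimate, since $\Cho$ divides $\ChoV_0$ one uses the divisibility inequality for Mahler measures / $\ell_{v_\infty}$ (the same circle of tools that produced $\mathbf{A_1}$–$\mathbf{A_3}$) to bound $\ell_{v_\infty}(\Cho)$, and then $\ell_{v_\infty}(a_n)\le\ell_{v_\infty}(\Cho)$. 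Tracking the constants through the substitution (the number of variables goes from $(m+1)(m+n+1)$ down to a polynomial ring in $\Y,\U$) and through the Mahler-measure/$\ell_{v_\infty}$ conversions is what produces the explicit bound $h_\V+5(m+1)d_\V\log(m+n+2)$; I would carry out this bookkeeping carefully but not reproduce every term here.

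Finally, for the last two bullets, I would use that the monic Chow form $\Cho/a_n\in\Q(\Y)[\U]$ has as coefficients (up to sign) the elementary symmetric polynomials in the linear forms $\sum_i U_i x_i$ over $x\in\VS$, and that the coordinates $x_\ell$ of points of $\VS$ generate, via their normal forms modulo $\langle T_1,\dots,T_{\ell-1}\rangle$, exactly the coefficients of $T_\ell$; the polynomials $N_\ell$ and $\num_\ell$ are obtained from $T_\ell$ by multiplying by the relevant derivative products $D_\ell$, resp. the iterated resultants $e_i$, which are themselves (up to powers of $a_n$) polynomial in $\Y$ with the right degree once the common denominator $a_n$ is cleared. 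Concretely: $a_n N_n\in\Z[\Y,\X]$ follows because $N_n=D_nT_n$ already has $D_n$ as its leading coefficient and the denominators of $T_n$ and $D_n$ jointly divide a power of $a_n$ that the structure of the Chow form shows to be just $a_n$ itself (this is the content of the ``$N$ needs only one copy of the denominator'' phenomenon); and $a_n^{\G_n}\num_n\in\Z[\Y,\X]$ because $\num_n=e_1\cdots e_{n-1}T_n$ and each $e_i$, being an iterated resultant of things with denominator dividing $a_n$, contributes at most $a_n^{2(d_i-1)}$ to the denominator, so that $\G_n=1+2\sum_{i\le n-1}(d_i-1)$ copies suffice; the $\Y$-degree bounds $d_\V$, resp. $\G_n d_\V$, then follow from $\deg(a_n)\le d_\V$ together with the product structure. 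The main obstacle I anticipate is \emph{not} the nonvanishing or the degree count, but the precise constant in the Archimedean height bound for $a_n$: one must pass from the height of $\ChoV$ (defined via a Mahler measure over a high-dimensional sphere) to $\ell_{v_\infty}(\Cho)$ through the substitution and a divisibility step, and squeezing this into $5(m+1)d_\V\log(m+n+2)$ requires applying $\mathbf{A_1}$, $\mathbf{A_2}$ (and an $\mathbf{A_3}$-type divisibility bound) with the right parameters and then simplifying the resulting sum of logarithmic terms.
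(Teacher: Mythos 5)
Your overall strategy for the first three bullets matches the paper's: $a_n\neq 0$ from the factorization~\eqref{eq:facto}, and the height bound by combining the divisibility $\Cho\mid\ChoV_0$ from Theorem~\ref{Prop:1} with the inequalities $\mathbf{A_1}$--$\mathbf{A_3}$ (the paper extracts from $\ChoV_0$ a single nonzero coefficient $\ChoV_{0,0}\in\Z[\Y,\U]$ with respect to the auxiliary variables $\U^1,\dots,\U^m$, notes $\ell_{v_\infty}(\ChoV_{0,0}/\Cho)\ge 0$, and applies $\mathbf{A_3}$ once, which is exactly the bookkeeping you defer). However, there are two genuine gaps.

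First, your derivation of $\deg(\Cho,\Y)\le d_\V$ is wrong. The substitution of Theorem~\ref{Prop:1} sends $U^i_0\mapsto Y_i$ for $i=1,\dots,m$, and $\ChoV$ has degree $d_\V$ in \emph{each} group $\U^i$ separately; a single monomial of $\ChoV$ can therefore contribute total $\Y$-degree up to $m\,d_\V$ to $\ChoV_0$, so divisibility only yields $\deg(\Cho,\Y)\le m\,d_\V$. The paper does not obtain the degree bound from the divisibility at all: it cites a separate result (Lemma~3 of the reference \citep{Schost03a}) stating that a primitive Chow form of $\VS$ has $\Y$-degree at most $d_\V$. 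Without that input your bound, and hence the degree bounds in the last two bullets, degrade by a factor of $m$.

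Second, the last two bullets are asserted rather than proved. Your claim that ``the structure of the Chow form shows'' the denominator of $N_n$ to be exactly $a_n$ is precisely the statement to be established, and your resultant count for $\num_n$ (each $e_i$ contributing $a_n^{2(d_i-1)}$) is a heuristic: iterated resultants of polynomials with denominators compound multiplicatively under a naive count, and getting the additive exponent $\G_n=1+2\sum_{i\le n-1}(d_i-1)$ requires more than clearing denominators term by term. The paper's actual mechanism is a uniform valuation argument: Lemma~5 of \citep{DaSc04} gives $h_v(N_n)\le h_v(\widetilde{\Cho})$ and $h_v(\num_n)\le \G_n h_v(\widetilde{\Cho})$ for \emph{every} non-Archimedean absolute value $v$ on $\Q(\Y)$, where $\widetilde{\Cho}=\Cho/a_n$ is the monic Chow form; combining with $\mathbf{N_1}$ yields $\ell_v(a_nN_n)\le\ell_v(\Cho)$ and $\ell_v(a_n^{\G_n}\num_n)\le\G_n\ell_v(\Cho)$. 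Running $v$ over the $v_S$ and the Gauss-extended $v_p$ gives membership in $\Z[\Y,\X]$, and taking $v=v_{\deg}$ gives the $\Y$-degree bounds $d_\V$ and $\G_n d_\V$. This single lemma is the missing ingredient in your sketch; without it neither the integrality nor the exponent $\G_n$ is justified.
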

\noindent The first point is obvious: since $\VS$ has dimension 0,
Equation~\eqref{eq:facto} shows that the coefficient of $U_0^{d_n}$ in
$\Cho$ is non-zero. Then, Subsection~\ref{ssec:dh} will prove the degree
and height estimates for $a_n$; Subsection~\ref{ssec:val} will
prove the last assertions by means of valuation estimates. 

Finally, remark that in Proposition~\ref{Prop:DH}, we deal only with $N_n$ and
$\num_n$. However, this result implies analogue results for all $N_\ell$ and
$\num_\ell$, by replacing $\V$ by $\V_\ell$ and $\VS$ by $\VS_\ell$.

%%%%%%%%%%%%%%%%%%%%%%%%%%%%%%%%%%%%%%%%%%%%%%%%%%%%%%%%%%%%

\subsection{Degree and height bounds for the primitive Chow form}\label{ssec:dh}

To prove the second and third points of Proposition~\ref{Prop:DH}, we
actually prove a similar estimate for the whole primitive Chow
form $\Cho$ of $\VS$.
\begin{Prop}\label{Prop:DH1}
  The primitive Chow form $\Cho$ of $\VS$ satisfies $\deg(\Cho,\Y) \le
  d_{\V}$ and $\ell_{v_\infty}(\Cho) \le h_{\V} + 5(m+1)d_{\V}\log(m+n+2).$
\end{Prop}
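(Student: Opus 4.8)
The plan is to prove the two assertions separately. The degree bound $\deg(\Cho,\Y)\le d_{\V}$ is the geometric estimate already obtained in \citep{DaSc04}: informally, for a generic value $\U=u$ the zero locus of $\Cho(\Y,u)$ in $\C^m$ is controlled by the image under $\Pi_0$ of the hyperplane section $\V\cap Z(u_0+u_1X_1+\cdots+u_nX_n)$ of $\V$ (because $\Cho(y,u)=0$ precisely when that affine-linear form vanishes somewhere on the fiber $\Pi_0^{-1}(y)\cap\V$), and for generic $u$ this section is $(m-1)$-equidimensional of degree $\le d_{\V}$ by B\'ezout, while its image under the linear projection $\Pi_0$ has degree $\le d_{\V}$ by \citep{Heintz83}. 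So I would simply invoke that, and devote the rest of the argument to the height bound, which is the new contribution.

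For the height, the first move is to use that $h_{\V}$ does not depend on the chosen integer Chow form of $\V$ (product formula for ${\sf M}_{\Q}$), so I may take $\ChoV\in\Z[\U^0,\dots,\U^m]$ to be primitive; then all $\ell_{v_p}(\ChoV)$ vanish and, by the definition in Section~\ref{ssec:h},
$$h_{\V}={\sf m}(\ChoV,m+1,m+n+1)+(m+1)\,d_{\V}\sum_{i=1}^{m+n}\frac{1}{2i}.$$
Next I would invoke Theorem~\ref{Prop:1}: $\Cho$ divides $\ChoV_0$ in $\Z[\Y,\U,\U^1,\dots,\U^m]$, where $\ChoV_0$ is the lowest-order coefficient in $\varepsilon$ of the substituted polynomial $\ChoV_\varepsilon$. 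The rest is a chain of inequalities bringing $\ell_{v_\infty}(\Cho)$ back to $h_{\V}$ through classical (one-variable-per-group) Mahler measures; write $N$ for the number of variables of $\ChoV_0$. First, since the cofactor lies in $\Z[\cdots]$ and hence has nonnegative classical Mahler measure, multiplicativity of ${\sf m}(\cdot,N,1)$ and the fact that $\Cho$ involves only $\Y,\U$ give ${\sf m}(\Cho,m+n+1,1)\le{\sf m}(\ChoV_0,N,1)$. Second, $\ChoV_0$ is the value at $\varepsilon=0$ of $\ChoV_\varepsilon/\varepsilon^{s}$, with $s$ the $\varepsilon$-valuation of $\ChoV_\varepsilon$; dividing out $\varepsilon^{s}$ leaves the classical Mahler measure unchanged (as ${\sf m}(\varepsilon,1,1)=0$) and setting a variable to $0$ cannot increase it, so ${\sf m}(\ChoV_0,N,1)\le{\sf m}(\ChoV_\varepsilon,N+1,1)$. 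Third, $\ChoV_\varepsilon$ arises from $\ChoV$ by renaming variables ($\sf m$-preserving), replacing each of the $m$ variables $U^1_1,\dots,U^m_m$ by $-1$ (each raising $\sf m$ by at most $d_{\V}\log 2$, as $\ChoV$ has degree $\le d_{\V}$ in each of them), replacing the other variables of the block $\U_{(\Y)}$ by $0$ (cannot raise $\sf m$), and multiplying all entries of the last $m$ rows of $\U_{(\X)}$ by a single new indeterminate $\varepsilon$ ($\sf m$-preserving, since componentwise multiplication $(\varepsilon,(w_{i,j}))\mapsto(\varepsilon w_{i,j})$ is a surjective homomorphism of tori, hence carries Haar measure to Haar measure); thus ${\sf m}(\ChoV_\varepsilon,N+1,1)\le{\sf m}(\ChoV,(m+1)(m+n+1),1)+m\,d_{\V}\log 2$, which by $\mathbf{A_2}$ is $\le h_{\V}+m\,d_{\V}\log 2$. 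Finally, applying $\mathbf{A_1}$ to $\Cho$ viewed as a single group of $m+n+1$ variables of total degree $\le 2d_{\V}$ and chaining everything yields
$$\ell_{v_\infty}(\Cho)\le h_{\V}+m\,d_{\V}\log 2+2\,d_{\V}\log(m+n+2)\le h_{\V}+5(m+1)\,d_{\V}\log(m+n+2),$$
the last inequality being the crude simplifications $\log 2\le\log(m+n+2)$ and $m+2\le 5(m+1)$.

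The step I expect to be the main obstacle is the middle of this chain, the passage from $\ChoV$ to $\ChoV_0$. The naive move — bounding $\sf m$ of a coefficient of $\ChoV_\varepsilon$ in $\varepsilon$ by $\sf m$ of $\ChoV_\varepsilon$ — is false in general, and the fix is the reformulation of $\ChoV_0$ as a specialization at $\varepsilon=0$ (after dividing by $\varepsilon^{s}$), for which $\sf m$ genuinely does not increase. One must also notice that only the $m$ substitutions by $-1$, not the $m^2$ substitutions by $0$, carry a $\log 2$ cost, otherwise the correction term would become quadratic in $m$; and the groupings used when invoking $\mathbf{A_1}$ and $\mathbf{A_2}$ must be exactly the ones keeping the correction linear in $m$ and only logarithmic in $n$. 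The three elementary Mahler-measure facts used here (${\sf m}(f|_{Z=0},\cdot,1)\le{\sf m}(f,\cdot,1)$, ${\sf m}(f|_{Z=-1},\cdot,1)\le{\sf m}(f,\cdot,1)+\deg_Z(f)\log 2$, and invariance under the torus-multiplication homomorphism) all follow from the one-variable Jensen formula after integrating over the remaining unit circles, and I would isolate them as preliminary lemmas.
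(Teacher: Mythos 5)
Your argument is correct, and for the height bound it follows a genuinely different route from the paper's. The paper first converts everything to $\ell_{v_\infty}$: it bounds $\ell_{v_\infty}(\ChoV)$ via $\mathbf{A_1}$ and $\mathbf{A_2}$, then observes that every coefficient of $\ChoV_\varepsilon$ (hence of any coefficient $\ChoV_{0,0}$ of $\ChoV_0$ with respect to $\U^1,\dots,\U^m$) is, up to sign, a coefficient of $\ChoV$ itself --- this uses that $\ChoV$ is homogeneous of degree $d_\V$ in each group $\U^i$, so the substitution cannot merge two surviving monomials, since they could only differ in the exponents of the diagonal variables $U^i_i$ and would then have different degrees in the group $\U^i$ --- and finally pays the Gelfond-type penalty $4(m+1)d_{\V}\log(m+n+2)$ of $\mathbf{A_3}$ to divide $\ChoV_{0,0}$ by $\Cho$. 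You instead stay at the level of classical Mahler measures throughout and only convert to $\ell_{v_\infty}$ at the very end via $\mathbf{A_1}$; the division step is then free because ${\sf m}(\cdot,N,1)$ is exactly multiplicative and nonnegative on nonzero integer polynomials, and the specialization steps are handled by Jensen-type inequalities (cost $0$ at the point $0$, cost $\deg\cdot\log 2$ at $-1$, cost $0$ for the torus rotation $w\mapsto\varepsilon w$). Your bookkeeping is right: only the $m$ diagonal substitutions by $-1$ carry a $\log 2$ cost, no intermediate specialization can vanish since the final polynomial $\ChoV_\varepsilon$ is nonzero by Theorem~\ref{Prop:1}, and the single-group application of $\mathbf{A_1}$ with total degree $2d_\V$ (which uses the degree bound $\deg(\Cho,\Y)\le d_\V$ proved first) gives $2d_\V\log(m+n+2)$. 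The net result $h_\V+(m+2)d_\V\log(m+n+2)$ is in fact slightly sharper than the stated $h_\V+5(m+1)d_\V\log(m+n+2)$. What each approach buys: the paper's stays within the three inequalities $\mathbf{A_1}$--$\mathbf{A_3}$ already catalogued but hinges on the (easy to miss) no-collision observation; yours requires the three extra Mahler-measure lemmas you isolate, all standard consequences of Jensen's formula, in exchange for a cleaner division step and a marginally better constant.
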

First, we recall from~\cite[Lemma~3]{Schost03a} that a primitive Chow
form of $\VS$ has degree in $\Y$ at most $d_{\V}$: this handles the
claimed degree bound. To deal with the height aspect, we need a Chow
form of the positive-dimensional algebraic set $\V$ with good height
properties.
\begin{Lemma}
  The algebraic set $\V$ admits a Chow form $\ChoV$ in
  $\Z[\U^0,\dots,\U^m]$ with $\ell_{v_\infty}(\ChoV) \le h_{\V} +
  (m+1)d_{\V}\log(m+n+2)$.
\end{Lemma}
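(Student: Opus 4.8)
The plan is to take for $\ChoV$ a \emph{primitive} Chow form of $\V$, that is, a Chow form lying in $\Z[\U^0,\dots,\U^m]$ whose content (the gcd of its coefficients) is $\pm 1$. Such a form exists because $\V$ is defined over $\Q$: start from any Chow form with rational coefficients, clear denominators, then divide by the content. Recall that $\ChoV$ is multi-homogeneous of degree exactly $d_\V$ in each of the $m+1$ groups of $m+n+1$ variables $\U^i=U^i_0,\dots,U^i_{m+n}$.

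Since the content of $\ChoV$ is $\pm 1$, the first bullet of Subsection~\ref{ssec:avpol} gives $\ell_{v_p}(\ChoV)=0$ for every prime $p$. Plugging this into the defining formula for the height (Subsection~\ref{ssec:h}, with ambient dimension $m+n$) collapses it to
$$h_\V = {\sf m}(\ChoV,m+1,m+n+1) + (m+1)d_\V \sum_{i=1}^{m+n}\frac{1}{2i}.$$
So the job is simply to bound $\ell_{v_\infty}(\ChoV)$ by the right-hand side plus the stated error term. First I would apply inequality $\mathbf{A_1}$ with $r=m+1$ groups of $m+n+1$ variables and degree $d_\V$ in each group, obtaining
$$\ell_{v_\infty}(\ChoV) \le {\sf m}(\ChoV,(m+1)(m+n+1),1) + (m+1)d_\V\log(m+n+2),$$
and then inequality $\mathbf{A_2}$, with the same parameters, to bound the one-variable Mahler measure:
$${\sf m}(\ChoV,(m+1)(m+n+1),1) \le {\sf m}(\ChoV,m+1,m+n+1) + (m+1)d_\V \sum_{i=1}^{m+n}\frac{1}{2i}.$$
Chaining these and substituting the displayed identity for $h_\V$ gives $\ell_{v_\infty}(\ChoV) \le h_\V + (m+1)d_\V\log(m+n+2)$, as claimed.

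There is no genuine obstacle here: the argument is pure bookkeeping, combining the definition of $h_\V$ with the standard comparison inequalities $\mathbf{A_1}$ and $\mathbf{A_2}$. The only points that need care are choosing a content-$\pm 1$ Chow form (so that all non-Archimedean contributions to $h_\V$ vanish) and keeping the number of groups ($m+1$) and the size of each group ($m+n+1$) consistent when invoking $\mathbf{A_1}$ and $\mathbf{A_2}$, so that the $\sum_{i=1}^{m+n}\frac{1}{2i}$ terms cancel exactly against the corresponding term in the definition of $h_\V$ rather than leaving a residual error.
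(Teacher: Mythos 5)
Your proof is correct and follows exactly the same route as the paper: choose a content-$\pm 1$ integer Chow form so that all non-Archimedean terms in the definition of $h_\V$ vanish, then chain inequalities $\mathbf{A_1}$ and $\mathbf{A_2}$ with $r=m+1$ groups of $m+n+1$ variables and degree $d_\V$ per group so that the $\sum_{i=1}^{m+n}\tfrac{1}{2i}$ terms cancel. The paper's proof is just a terser version of the same bookkeeping.
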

\begin{proof}
  Let $\ChoV \in \Z[\U^0,\dots,\U^m]$ be a Chow form of $\V$ with
  integer coefficients and content~$1$. Let ${\sf M}_\qu =( \{ v_p,\,p
  \in {\cal P}\},\ \{v_\infty\})$ be the set of absolute values over
  $\Q$ introduced in Subsection~\ref{ssec:av}. Then, for every
  non-Archimedean valuation $v_p$ in ${\sf M}_\Q$,
  $\ell_{v_p}(\ChoV)=0$. The definition of the height of $\V$ implies
  that we have
$$h_{\V}={\sf m}(\ChoV,m+1,m+n+1)+ (m+1)d_{\V} \sum_{i=1}^{m+n} \frac{1}{2i}.$$
Using Inequalities ${\bf A_1}$ and ${\bf A_2}$ of
Subsection~\ref{ssec:avpol}, we conclude that $\ell_{v_\infty}(\ChoV)
\le h_{\V} + (m+1) d_{\V}\log(m+n+2)$.
\end{proof}

\noindent We can now conclude the proof of Proposition~\ref{Prop:DH1},
using the specialization property seen in the previous section. Let
$\ChoV \in \Z[\U^0,\dots,\U^m]$ be a Chow form of $\V$ as in the
previous lemma. Following Theorem~\ref{Prop:1}, we rewrite the
indeterminates $\U^0,\dots,\U^m$ of $\ChoV$ as
$$
\U_{(0)} = \left [ \begin{array}{c} U^0_{0} \\ \vdots \\ U^m_{0}
  \end{array} \right ],
\U_{(\Y)} = \left [ \begin{array}{ccc} U^0_{1} & \dots & U^0_{m} \\
    \vdots &&\vdots\\
    U^m_{1} & \dots & U^m_{m}\end{array} \right ],\
\U_{(\X)} = \left [ \begin{array}{ccc} U^0_{m+1} & \dots & U^0_{m+n} \\
    \vdots &&\vdots\\
    U^m_{m+1} & \dots & U^m_{m+n}\end{array} \right ];\ 
$$
then, we let $\ChoV_\varepsilon \in
\Z[\Y,\U,\U^1,\dots,\U^m,\varepsilon]$ be the polynomial obtained by
performing the following substitution in~$\ChoV$:
 $$
\U_{(0)} \leftarrow \left [ \begin{array}{c} U_{0} \\ Y_1 \\ \vdots \\ Y_m \end{array} \right ],\ \
\U_{(\Y)} \leftarrow \left [
  \begin{array}{ccc} 0 & \dots & 0 \\ -1 &
    \dots & 0 \\
    \vdots & \ddots &\vdots\\
    0 & \dots & -1\end{array} \right ],\ \
\U_{(\X)} \leftarrow \left [ \begin{array}{ccc} U_1 & \dots &
    U_n
    \\ \varepsilon U_{m+1}^1  & \dots & \varepsilon U_{m+n}^{1} \\
    \vdots &&\vdots\\
      \varepsilon U_{m+1}^m & \dots &  \varepsilon U_{m+n}^{m}\end{array} \right
    ];
$$
Theorem~\ref{Prop:1} shows that this polynomial is non-zero. Let
finally $\ChoV_0 \in \Z[\Y,\U,\U^1\dots,\U^m]$ be the coefficient of
lowest degree in $\varepsilon$ of $\ChoV_\varepsilon$. Then,
Theorem~\ref{Prop:1} shows that $\Cho$ divides $\ChoV_0$ in
$\Z[\Y,\U,\U^1,\dots,\U^m]$.

If we rewrite $\ChoV_0$ as a polynomial in variables $\U^1\dots,\U^m$
with coefficients in $\Z[\Y,\U]$, this implies that $\Cho$ divides one
of these coefficients, say $\ChoV_{0,0}$, in $\Z[\Y,\U]$, with
$\ChoV_{0,0} \ne 0$.

The polynomial $\ChoV_{0,0}$ is in $\Z[\Y,\U]$ and satisfies
$\ell_{v_\infty}(\ChoV_{0,0})\le h_{\V} + (m+1)d_{\V} \log(m+n+2)$,
since all its coefficients are coefficients of $\ChoV$.  Besides, it
has total degree at most $(m+1)d_{\V}$.  Since $\ChoV_{0,0}/\Cho$ has
integer coefficients, we deduce that
$\ell_{v_\infty}(\ChoV_{0,0}/\Cho) \ge 0$; then, ${\bf A_3}$ implies
that $\ell_{v_\infty}(\Cho) \le \ell_{v_\infty}(\ChoV_{0,0})+
4(m+1)d_{\V}\log(m+n+2)$, which yields
$$\ell_{v_\infty}(\Cho) \le h_{\V} + 5(m+1)d_{\V} \log(m+n+2).$$ 

%%%%%%%%%%%%%%%%%%%%%%%%%%%%%%%%%%%%%%%%%%%%%%%%%%%%%%%%%%%%

\subsection{Valuation estimates}\label{ssec:val}

We prove the missing statements of Proposition~\ref{Prop:DH}. The
conclusion of the proof uses valuation estimates; the key lemma is the
following.

\begin{Lemma}
  For any non-Archimedean absolute value $v$ on $\Q(\Y)$, the inequalities
  $$\ell_v(a_n N_n) \le \ell_v(\Cho)
\quad\text{and}\quad
\ell_v(a_n^{\G_n} \num_n) \le \G_n \ell_v(\Cho)$$ hold.
\end{Lemma}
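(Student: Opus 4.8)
The plan is to argue one absolute value at a time. Fix a non-Archimedean $v$ on $\Q(\Y)$ and extend $\ell_v$ to $\Q(\Y)[\U]$ and $\Q(\Y)[\X]$ by $\ell_v(f)=\max_\beta\ell_v(f_\beta)$. Three elementary facts will carry most of the weight: by Gauss' lemma $\mathbf{N_1}$, applied with base field $\Q(\Y)$, one has $\ell_v(fg)=\ell_v(f)+\ell_v(g)$; for every nonzero $c\in\Z$ the ultrametric inequality gives $v(c)\le 1$, hence $\ell_v(c)\le 0$; and therefore differentiating a polynomial, forming a $\Z$-linear combination of its coefficients, or substituting integers or $\Z$-linear forms in the $X_j$ for some of the $U_i$ are all operations that cannot increase $\ell_v$. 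Combining the first and third facts: if $P$ is a product of $k$ polynomials, each obtained from $\Cho$ by operations of the third kind, then $\ell_v(P)\le k\,\ell_v(\Cho)$. It therefore suffices to realize $a_n N_n$ (resp.\ $a_n^{\G_n}\num_n$) as such a product with $k=1$ (resp.\ $k=\G_n$).

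For this I would use the reconstruction of the triangular data from the Chow form: this is the zero-dimensional content of~\citep{DaSc04}, in the spirit of the Rational Univariate Representation of~\citet{Rouillier99}. Over $\overline{\Q(\Y)}$ one has the factorization $\Cho=a_n\prod_{x\in\VS}(U_0+x_1U_1+\cdots+x_nU_n)$ of~\eqref{eq:facto}; from it one recovers $x_n$ as the ratio of $\partial\Cho/\partial U_n$ and $\partial\Cho/\partial U_0$ at a root of $\Cho$ viewed as a polynomial in $U_0$, and identities of this type yield $a_n N_n$ from $\Cho$ using only $U_0$-differentiations and $\Z$-coefficient substitutions of the $U_i$ by forms in $\X$, the leading coefficient $a_n$ being exactly what makes the outcome lie in $\Z[\Y,\X]$. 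This gives $\ell_v(a_n N_n)\le\ell_v(\Cho)$.

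For $\num_n=e_1\cdots e_{n-1}T_n$ the same mechanism must be iterated. Up to sign, $e_i=\prod_{x\in\VS_i}\partial T_i/\partial X_i(x)$ is a discriminant-type quantity: for instance $e_1=\mathrm{disc}(T_1)$, and at the level of the Chow form $\Cho_1$ of $\VS_1$ one has $\mathrm{disc}_{U_0}(\Cho_1)=a_1^{2(d_1-1)}U_1^{d_1(d_1-1)}\mathrm{disc}(T_1)$, so $a_1^{2(d_1-1)}\mathrm{disc}(T_1)$ is obtained from $\Cho_1$ by $2(d_1-1)$ such factors (and dividing by the monomial $U_1^{d_1(d_1-1)}$, which leaves $\ell_v$ unchanged); more generally each $e_i$ costs $2(d_i-1)$ factors, and the remaining single factor carries $T_n$ itself. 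This is why $\G_n=1+2\sum_{i\le n-1}(d_i-1)$ is the right exponent on $a_n$. The main obstacle is precisely this bookkeeping. The natural bound for $e_i$ is in terms of the Chow form $\Cho_i$ of the projection $\VS_i$, not of $\Cho=\Cho_n$, so one must relate the two; here the identity $\Cho(U_0,\dots,U_i,0,\dots,0)=a_n(\Cho_i^{\mathrm{mon}})^{d_{i+1}\cdots d_n}$ --- expressing the restriction of $\Cho$ to a coordinate subspace as a power of the monic Chow form of $\VS_i$ times the same $a_n$ --- lets one pass from $\ell_v(\Cho_i)$ to $\ell_v(\Cho)$. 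Making sure that, after this passage and after balancing the powers of $a_n$ together with the extra content factors it introduces, the total multiplicity comes out to be \emph{exactly} $\G_n$ --- for which the inequality $\sum_{i\le n-1}(d_i-1)\le d_1\cdots d_n-1$ recorded earlier (so that $\G_n\le 2d_\V$) is the governing estimate --- is the delicate point of the argument.
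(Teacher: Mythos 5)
Your overall strategy --- ultrametricity of $v$ on integers, multiplicativity $\mathbf{N_1}$, and the fact that substitutions and derivations do not increase $\ell_v$, so that a product of $k$ ``Chow-form-derived'' factors has $\ell_v$ at most $k\,\ell_v(\Cho)$ --- is indeed the machinery that underlies the zero-dimensional estimates. But the paper does not re-derive any of it: its proof is a two-line reduction to Lemma~5 of \citep{DaSc04}, which, applied to the zero-dimensional variety $\VS$ over the valuated field $\Q(\Y)$, directly gives $h_v(N_n)\le h_v(\widetilde{\Cho})$ and $h_v(\num_n)\le\G_n\,h_v(\widetilde{\Cho})$ for the \emph{monic} Chow form $\widetilde{\Cho}$, where $h_v=\max(\ell_v,0)$. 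The only new content is then the observation that $h_v(\widetilde{\Cho})=\ell_v(\widetilde{\Cho})$ because $\widetilde{\Cho}$ has a coefficient equal to $1$, followed by the passage from $\widetilde{\Cho}$ to $\Cho=a_n\widetilde{\Cho}$ via $\mathbf{N_1}$, which is exactly where the factors $a_n$ and $a_n^{\G_n}$ in the statement come from. That last conversion is implicitly present and correct in your framework.

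The genuine gap is that your argument defers its central claim instead of proving it. The assertions that $a_nN_n$ is realized by a \emph{single} Chow-derived factor and $a_n^{\G_n}\num_n$ by exactly $\G_n$ of them are precisely the content of the cited zero-dimensional lemma, and your sketch does not establish them: the ratio-of-partials identity recovers the coordinates of individual points of $\VS$, not the polynomial $N_n=D_nT_n\bmod\langle T_1,\dots,T_{n-1}\rangle$, whose relation to the Chow form in \citep{DaSc04} goes through a Lagrange-interpolation formula over $\VS_{n-1}$ and a nontrivial identification of each coefficient as an integer polynomial of controlled degree in the coefficients of $\widetilde{\Cho}$. Likewise, for $\num_n$ you must pass from the Chow forms $\Cho_i$ of the projections $\VS_i$ back to $\Cho$ and balance the resulting powers of $a_n$ so that the total multiplicity is exactly $\G_n$; you explicitly flag this bookkeeping as ``the delicate point'' and do not carry it out. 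As written the proposal is a plausible plan for reproving the quoted lemma of \citep{DaSc04} from scratch, not a proof of the statement; the efficient route is to invoke that lemma and only supply the $h_v$-to-$\ell_v$ and monic-to-primitive conversions.
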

\begin{proof}
  We let here $\widetilde{\Cho} \in \Q(\Y)[\U]$ be the {\em monic}
  Chow form of $\VS$, so that the primitive Chow form $\Cho$ and its
  leading term $a_n$ satisfy $\Cho = a_n \widetilde{\Cho}$.  Lemma~5
  in~\citep{DaSc04} establishes the inequalities
$$ h_v(N_n) \le h_v(\widetilde{\Cho}) \quad \text{and}\quad 
h_v(\num_n) \le \G_n h_v(\widetilde{\Cho}),$$ with $h_v(f) =
\max(\ell_v(f),0)$ for any polynomial $f$. Since on one hand $\ell_v$
is always bounded from above by $h_v$, and since on the other hand
$h_v(\widetilde{\Cho})=\ell_v(\widetilde{\Cho})$ (because this
polynomial has a coefficient equal to 1), we deduce the alternative
form
$$ \ell_v(N_n) \le \ell_v(\widetilde{\Cho}) \quad \text{and}\quad 
\ell_v(\num_n) \le \G_n \ell_v(\widetilde{\Cho}).$$ Since $a_n
\widetilde{\Cho} = \Cho$, using ${\bf N_1}$, we deduce
$$\ell_v(a_n N_n)= \ell_v(a_n) + \ell_v(N_n) \le \ell_v(a_n) + \ell_v(\widetilde{\Cho} ) = \ell_v(\Cho)$$
and 
$$\ell_v(a_n^{\G_n} \num_n)= \ell_v(a_n^{\G_n}) + \ell_v(\num_n)
= \G_n \ell_v(a_n) + \ell_v(\num_n) \le \G_n
\ell_v(a_n) + \G_n \ell_v(\widetilde{\Cho}) = \G_n
\ell_v(\Cho).$$
Thus, all inequalities are proved.
\end{proof}

Let ${\sf V}$ be the set of all absolute values on $\Q(\Y)$ either of
the form $v_S$ for $S$ irreducible in $\Q[\Y]$, or of the form $v_p$,
for $p$ a prime.  By the discussion in Subsection~\ref{ssec:avpol},
for $f$ in $\Q(\Y)[\X]$, $v(f) \le 0$ holds for all $v$ in ${\sf V}$
if and only if $f$ is in $\Z[\Y][\X]=\Z[\Y,\X]$.

Since $\Cho$ is in $\Z[\Y,\X]$, we have that $\ell_v(\Cho) \le 0$ for
all $v$ in ${\sf V}$. By the previous lemma, we obtain
$$\ell_v(a_n N_n) \le 0
\quad\text{and}\quad \ell_v(a_n^{\G_n} \num_n) \le 0,$$ so that
$a_n N_n$ and $a_n^{\G_n} \num_n$ are in $\Z[\X,\Y]$. To
conclude the proof of Proposition~\ref{Prop:DH}, recall from
Proposition~\ref{Prop:DH1} that $\deg(\Cho,\Y) \le d_{\V}$, which can
be restated as $\ell_{v_{\deg{}}}(\Cho) \le d_{\V}$, where $v_{\deg{}}$
is the non-Archimedean absolute value introduced in
Subsection~\ref{ssec:av}; this implies $\G_n \ell_{v_{\deg{}}}(\Cho) \le
\G_n d_{\V}$.

Applying the former lemma to the absolute value $v_{\deg{}}$,
we thus prove the last two assertions of Proposition~\ref{Prop:DH},
finishing its proof.

%%%%%%%%%%%%%%%%%%%%%%%%%%%%%%%%%%%%%%%%%%%%%%%%%%%%%%%%%%%%
%%%%%%%%%%%%%%%%%%%%%%%%%%%%%%%%%%%%%%%%%%%%%%%%%%%%%%%%%%%%
%%%%%%%%%%%%%%%%%%%%%%%%%%%%%%%%%%%%%%%%%%%%%%%%%%%%%%%%%%%%

\section{Proof of the main theorem}\label{sec:main}

We finally prove Theorem~\ref{theo:main} using interpolation
techniques. The results of~\citep{DaSc04} enable us to give height
bounds for specializations of $(N_1,\dots,N_n)$ and
$(\num_1,\dots,\num_n)$.  The results of the previous section then
make it possible to predict a denominator for the coefficients of
$(N_1,\dots,N_n)$ and $(\num_1,\dots,\num_n)$, so that polynomial
interpolation of the numerators is sufficient.

We focus only on $N_n$ and $\num_n$, since extending the results to
all $(N_1,\dots,N_n)$ and $(\num_1,\dots,\num_n)$ is straightforward.
All the notation introduced in the previous section is still in use in
this section. 

%%%%%%%%%%%%%%%%%%%%%%%%%%%%%%%%%%%%%%%%%%%%%%%%%%%%%%%%%%%%

\subsection{Norm estimates for interpolation}

First, we give norm estimates for interpolation at integer points.
For any integer $M > 0$, we denote by $\Gamma_M$ the set of integers
$$\Gamma_M=\{1,\dots,M\}.$$
Let us fix $M$ and another integer $L \le M.$ We will use subsets of
$\Gamma_M^m$ of cardinality ${L}^m$ to perform evaluation and
interpolation.  To control the norm growth through interpolation at
these subsets in the multivariate case, the following ``univariate''
lemma will be useful. 
\begin{Lemma}\label{lemma:vdm}
  Let $\Lambda$ be a subset of $\Gamma_M$ of cardinality $L$ and let
  ${\bf V}$ be the $L\times L$ Vandermonde matrix built on $\Lambda$.
  Let ${\bf a}=(a_1,\dots,a_L)$ be in $\Q^{L}$, with
  $\ell_{v_\infty}(a_i) \le A$ for all $i$, and let
  $(b_1,\dots,b_L)={\bf V}^{-1} {\bf a}$. Then the inequality
  $\ell_{v_\infty}(b_i) \le A+L\log(M+1)+\log(L)$ holds for all $i$.
\end{Lemma}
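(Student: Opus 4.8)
The strategy is to get an explicit formula for the entries of $\mathbf{V}^{-1}$ via Cramer's rule, bound each such entry, and then bound $b_i$ as a sum of $L$ terms. Write $\Lambda = \{\lambda_1,\dots,\lambda_L\}$, so that $\mathbf{V} = (\lambda_j^{i-1})_{1\le i,j\le L}$. By Cramer's rule, $b_i = \det(\mathbf{V}_i)/\det(\mathbf{V})$, where $\mathbf{V}_i$ is $\mathbf{V}$ with its $i$-th column replaced by $\mathbf{a}$. Expanding $\det(\mathbf{V}_i)$ along that column gives $b_i = \sum_{k=1}^L a_k\, c_{k,i}$, where each $c_{k,i}$ is (up to sign) a ratio of a minor of $\mathbf{V}$ by $\det(\mathbf{V})$. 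The key classical fact is that these ratios are exactly the coefficients of the Lagrange interpolation basis: $c_{k,i}$ is the coefficient of $X^{i-1}$ in the Lagrange polynomial $\prod_{j\ne k}\frac{X-\lambda_j}{\lambda_k-\lambda_j}$. So the whole lemma reduces to bounding the Archimedean size of these coefficients.

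First I would bound the numerator: $\prod_{j\ne k}(X-\lambda_j)$ is a monic polynomial of degree $L-1$ whose roots are the $\lambda_j$, all lying in $\{1,\dots,M\}$; by the triangle inequality (or a trivial Mahler-measure bound), the coefficient of $X^{i-1}$ has absolute value at most $\binom{L-1}{i-1} M^{L-1-(i-1)} \le 2^{L-1} M^{L-1} \le (M+1)^{L-1}\cdot$ (a harmless factor), so its height is at most $(L-1)\log(M+1)$ after absorbing the binomial factor into $(M+1)^{L-1}$ — more cleanly, the coefficient of $X^{i-1}$ in $\prod_{j\ne k}(X-\lambda_j)$ is bounded in absolute value by $\prod_{j\ne k}(1+|\lambda_j|) \le (M+1)^{L-1}$. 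Second, the denominator $\prod_{j\ne k}(\lambda_k-\lambda_j)$ is a nonzero integer, hence has absolute value at least $1$, so dividing by it cannot increase the Archimedean size. Thus $\ell_{v_\infty}(c_{k,i}) \le (L-1)\log(M+1) \le L\log(M+1)$ for all $k,i$. Finally, $\ell_{v_\infty}(b_i) = \ell_{v_\infty}\!\big(\sum_{k=1}^L a_k c_{k,i}\big) \le \max_k \ell_{v_\infty}(a_k) + \max_k \ell_{v_\infty}(c_{k,i}) + \log(L) \le A + L\log(M+1) + \log(L)$, using the standard estimate $\ell_{v_\infty}(\sum_{k=1}^L x_k) \le \max_k \ell_{v_\infty}(x_k) + \log(L)$.

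The only genuinely delicate point — and the one I would write out most carefully — is the identification of $\mathbf{V}^{-1}\mathbf{a}$ with the coefficient vector of the Lagrange interpolant: one checks that if $P(X)=\sum_{i=1}^L b_i X^{i-1}$ is the unique polynomial of degree $<L$ with $P(\lambda_j)=a_j$ for all $j$, then $\mathbf{V}^{\mathsf{T}}$ (not $\mathbf{V}$) maps the coefficient vector to the value vector, so one must be slightly careful about which Vandermonde convention is in force; with the convention $\mathbf{V}=(\lambda_j^{i-1})_{i,j}$ stated here, $\mathbf{V}\,\mathbf{b}'$ gives the values of the polynomial with coefficient vector $\mathbf{b}'$ read in a transposed sense, and a one-line check (or simply invoking that $\mathbf{V}^{-1}$ has $(i,k)$-entry equal to the coefficient of $X^{i-1}$ in the $k$-th Lagrange basis polynomial, which is the standard inverse-Vandermonde formula) settles it. Everything after that is the elementary size arithmetic above. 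No serious obstacle is expected; the lemma is robust enough that even crude bounds on the Lagrange coefficients suffice.
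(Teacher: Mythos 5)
Your proof is correct and follows the same route as the paper: bound every entry of $\mathbf{V}^{-1}$ in absolute value by $(M+1)^{L}$, then multiply out to get the factor $L(M+1)^L e^A$. The paper simply cites Higham's bound on inverse Vandermonde matrices for that entry estimate --- which is exactly the Lagrange-coefficient bound $\prod_{j\ne k}(1+|\lambda_j|)/|\lambda_k-\lambda_j| \le (M+1)^{L-1}$ that you derive by hand (using that the nodes are distinct integers, so the denominator is at least $1$) --- so your argument is just a self-contained version of the same proof.
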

\begin{proof}
  Let ${\bf W}$ be the inverse of ${\bf V}$. The upper bound given
  in~\cite[Eq.~(22.3)]{Higham02} shows that all entries $w_{i,j}$ of
  ${\bf W}$ satisfy $|w_{i,j}|=v_\infty(w_{i,j}) \le (M+1)^{L}$. Since
  all entries of ${\bf a}$ satisfy $v_\infty(a_i) \le e^A$, we deduce
  that all $b_i$ satisfy $v_\infty(b_i) \le L(M+1)^{L}e^A$. Taking
  logarithms finishes the proof.
\end{proof}

In the multivariate case, we rely on the notion of {\em
  equiprojectable} set~\citep{AuVa00}, which we recall here, adding a
few extra constraints to facilitate norm estimates later on.  Let us
define a sequence $\Lambda_1,\Lambda_2,\dots$ of subsets of $\Gamma_M,
\Gamma_M^2,\dots$ through the following process:
\begin{itemize}
\item $\Lambda_1$ is a subset of $\Gamma_M$ of cardinality $L$;
\item for $i\ge 1,$ assuming that $\Lambda_i$ has been defined, we take
 $\Lambda_{i+1}$ of the form
$$\Lambda_{i+1} = \cup_{\y \in \Lambda_i}\ \, \big ( \y \times \Lambda_{i,\y} \big),$$
where each $\Lambda_{i,\y}$ is a subset of $\Gamma_M$ of cardinality $L$.
\end{itemize}
Then, we say that $\Lambda \subset \Gamma_M^m$ is an $(M,L)$-{\em
  equiprojectable set} if it arises as the $m$th element $\Lambda_m$
of a sequence $\Lambda_1,\dots,\Lambda_m$ constructed as
above. Observe that such a set has cardinality ${L}^m$.

Let $\Q[\Y]_{L}$ be the subspace of $\Q[\Y]$ consisting of all
polynomials of degree less than $L$ in each variable $Y_1,\dots,Y_m$;
thus, $\Q[\Y]_{L}$ has dimension ${L}^m$. Associated to an
$(M,L$)-equiprojectable set $\Lambda \subset \Gamma_M^m$, we set up the
evaluation operator
$$\begin{array}{cccc}
ev_{\Lambda}:& \Q[\Y]_{L} & \mapsto & \Q^{{L}^m} \\
& f& \mapsto & [f (\y) ]_{\y \in \Lambda}.
\end{array}$$
We let  ${\bf M}_\Lambda$ be the matrix of this map, where
we use the canonical monomial basis for $\Q[\Y]_{L}$.
\begin{Prop}\label{prop:interp}
  The following holds:
  \begin{itemize}
  \item The map $ev_{\Lambda}$ is invertible.
  \item Let ${\bf a}$ be in $\Q^{L^m}$, with $\ell_{v_\infty}(a_i) \le
    A$ for each entry $a_i$ of ${\bf a}$, and let ${\bf b}={\bf
      M}_\Lambda^{-1} {\bf a}$. Then the inequality
    $\ell_{v_\infty}(b_i) \le A+mL\log(M+1)+m\log(L)$ holds for each
    entry $b_i$ of~${\bf b}$.
  \end{itemize}
\end{Prop}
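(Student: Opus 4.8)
The plan is to prove the two assertions of Proposition~\ref{prop:interp} by induction on the number of variables $m$, using the recursive structure of an $(M,L)$-equiprojectable set and Lemma~\ref{lemma:vdm} as the base case. First I would observe that for $m=1$, an $(M,L)$-equiprojectable set $\Lambda_1$ is simply a subset of $\Gamma_M$ of cardinality $L$, and the matrix ${\bf M}_\Lambda$ in the monomial basis $1,Y_1,\dots,Y_1^{L-1}$ is exactly the $L\times L$ Vandermonde matrix on $\Lambda_1$; invertibility is classical, and the norm estimate $\ell_{v_\infty}(b_i)\le A+L\log(M+1)+\log(L)$ is precisely Lemma~\ref{lemma:vdm}, which matches the claimed bound with $m=1$.

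For the inductive step, suppose the statement holds for $m-1$ variables. Write $\Lambda=\Lambda_m=\cup_{\y\in\Lambda_{m-1}}(\y\times\Lambda_{m-1,\y})$, and decompose a polynomial $f\in\Q[\Y]_L$ in powers of $Y_m$ as $f=\sum_{j=0}^{L-1} f_j(Y_1,\dots,Y_{m-1})\,Y_m^j$ with each $f_j\in\Q[Y_1,\dots,Y_{m-1}]_L$. Evaluating at a point $(\y,y)$ with $\y\in\Lambda_{m-1}$ and $y\in\Lambda_{m-1,\y}$ gives $f(\y,y)=\sum_j f_j(\y)\,y^j$. Hence $ev_\Lambda$ factors as a two-stage process: first recover, for each $\y\in\Lambda_{m-1}$, the $L$ values $f_0(\y),\dots,f_{L-1}(\y)$ from the $L$ evaluations at $\y\times\Lambda_{m-1,\y}$ by inverting a single $L\times L$ Vandermonde matrix (this is the univariate step along $Y_m$); then for each fixed $j$ recover the coefficients of $f_j$ from its values on $\Lambda_{m-1}$ by inverting ${\bf M}_{\Lambda_{m-1}}$ (the $(m-1)$-variable evaluation map). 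Both stages are invertible — the first by the classical Vandermonde fact, the second by the induction hypothesis — so $ev_\Lambda$ is invertible, proving the first bullet.

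For the norm estimate, I would track the height through these two stages. Starting from data ${\bf a}$ with $\ell_{v_\infty}(a_i)\le A$: applying Lemma~\ref{lemma:vdm} to each of the $L^{m-1}$ Vandermonde inversions in the first stage produces intermediate values $f_j(\y)$ of height at most $A+L\log(M+1)+\log(L)$; then applying the induction hypothesis (with this new bound in place of $A$) to each of the $L$ polynomials $f_j$ in the second stage yields final coefficients of height at most $\big(A+L\log(M+1)+\log(L)\big)+(m-1)L\log(M+1)+(m-1)\log(L)=A+mL\log(M+1)+m\log(L)$, which is exactly the claimed bound. The only point requiring a little care is the bookkeeping of which inversion happens in which order and checking that the constants telescope cleanly; the substantive content is entirely carried by Lemma~\ref{lemma:vdm}, so there is no real obstacle beyond making the recursive description of $ev_\Lambda$ precise. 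A reader who prefers a non-inductive formulation can instead note that ${\bf M}_\Lambda$ is, up to a permutation of rows and columns, a product of $m$ block-diagonal matrices, each block an $L\times L$ Vandermonde matrix on a subset of $\Gamma_M$, and bound the inverse directly; I expect the inductive presentation to be cleaner to write.
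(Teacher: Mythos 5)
Your proof is correct and is essentially the paper's argument: the paper factors ${\bf M}_\Lambda$ as a product of $m$ block-diagonal matrices whose blocks are $L\times L$ Vandermonde matrices on the fibers $\Lambda_{i,\y}$, and applies Lemma~\ref{lemma:vdm} to each factor so that the $m$ contributions $L\log(M+1)+\log(L)$ add up. Your inductive presentation peeling off one variable at a time is just this same factorization written recursively, as you yourself note in your closing remark.
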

\begin{proof}
  To evaluate a polynomial $f \in \Q[\Y]$ at $\Lambda$, we first see
  it as a polynomial in $\Q[Y_1][Y_2,\dots,Y_m]$ and evaluate all its
  coefficients at $\Lambda_1$. We obtain $L$ polynomials $\{f_\y,\ \y
  \in \Lambda_1\}$ in $\Q[Y_2,\dots,Y_m]$, and we proceed recursively
  to evaluate each $f_\y$. This implies that the matrix ${\bf
    M}_\Lambda$ factors as ${\bf M}_\Lambda={\bf M}_m \cdots {\bf
    M}_1$, where, up to permutation of the rows and columns, ${\bf M}_i$ is a
  block diagonal matrix, whose blocks are Vandermonde matrices of size
  $L$ built on the sets $\Lambda_{i,\y}$ (each of them being repeated
  ${L}^{m-i}$ times).
  
  Thus, $\bf M$ is invertible, of inverse $\bf M$ given by ${\bf
    M}_1^{-1} \cdots {\bf M}_m^{-1}$. The second point is now a
  direct consequence of Lemma~\ref{lemma:vdm}.
\end{proof}

%%%%%%%%%%%%%%%%%%%%%%%%%%%%%%%%%%%%%%%%%%%%%%%%%%%%%%%%%%%%

\subsection{Good specializations}

We return to the study of an algebraic set $\V \subset \C^{m+n}$
satisfying Assumption~\ref{ass2}; we discuss here the ``good'' and
``bad'' specialization values for the polynomials $(T_1,\dots,T_n)$
and $(N_1,\dots,N_n)$.

For $\y=(y_1,\dots,y_m)$ in $\C^m$ and $F$ in $\Q(\Y)[\X]$, we denote
by $F_\y$ the specialized polynomial $F(\y,\X)$, assuming that the
denominator of no coefficient of $F$ vanishes at $\y$. We denote by
$\V_\y$ the fiber of the projection $\Pi_0: \C^{m+n}\to \C^m$
restricted to $\V$, that is, the algebraic set
$$\V_\y=\V \cap Z(Y_1 -y_1,\dots,Y_m-y_m) \ \subset \ \C^{m+n}.$$
Finally, we say that $\y$ is a {\em good specialization} if the
following holds:
\begin{itemize}
\item the denominator of no coefficient in $(T_1,\dots,T_n)$ vanishes
  at $\y$, so that all polynomials $T_{i,\y}=T_i(\y,\X)$ are
  well-defined;
\item the monic triangular set
  $(Y_1-y_1,\dots,Y_m-y_m,T_{1,\y},\dots,T_{n,\y})$ is the Gr\"obner
  basis of the defining ideal of $\V_\y$, for the lexicographic order
  $Y_1 < \cdots < Y_m < X_1 < \cdots < X_n$.
\end{itemize}
The following proposition shows that for any $L$, there exist
$(M,L)$-equiprojectable sets where all points are good
specializations, if we choose $M$ large enough.
\begin{Prop}\label{prop:spec}
  For any positive integer $L$, there exists an
  $(M,L)$-equiprojectable set $\Lambda$ such that all points in
  $\Lambda$ are good specializations, with $M=(3n
  d_{\V}+n^2)d_{\V}+L$.
\end{Prop}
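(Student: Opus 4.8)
The strategy is to replace the condition ``$\y$ is a good specialization'' by the non-vanishing at $\y$ of a single fixed nonzero polynomial $\mathfrak d\in\Q[\Y]$, bound $\deg\mathfrak d$ by $(3nd_\V+n^2)d_\V$, and then build the equiprojectable set so that every one of its $L^m$ points lies outside $Z(\mathfrak d)$.

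\textbf{Step 1: a hypersurface containing the bad points.} First I would produce $\mathfrak d\in\Q[\Y]$, nonzero, such that $\mathfrak d(\y)\ne 0$ forces $\y$ to be a good specialization. Two obstructions must be excluded. (i) The denominators of the coefficients of $(T_1,\dots,T_n)$ must not vanish at $\y$; for this one takes a common denominator of these coefficients, and using $\num_\ell=e_1\cdots e_{\ell-1}T_\ell$ together with Proposition~\ref{Prop:DH} applied to each $\V_\ell$ (which gives $a_\ell N_\ell$ and $a_\ell^{\G_\ell}\num_\ell$ in $\Z[\Y,\X]$ with controlled $\Y$-degrees) this denominator has $\Y$-degree $O(nd_\V^2)$. (ii) The specialization at $\y$ must commute with the Gr\"obner basis structure, i.e.\ $\V_\y$ must have the generic cardinality $d_1\cdots d_n$ and $(Y_1-y_1,\dots,Y_m-y_m,T_{1,\y},\dots,T_{n,\y})$ must be its reduced lex Gr\"obner basis; the set of $\y$ where this fails is contained in the ramification locus of $\Pi_0$ restricted to $\V$ (where the fiber cardinality drops) together with the locus where one of the finitely many normal-form reductions degenerates, and each such locus is confined to a hypersurface whose degree is controlled by the $\Y$- and $\X$-degrees of the $T_\ell$, $N_\ell$ and $e_\ell$ and by Bézout-type degree bounds through $\Pi_0$ (as in~\citep{Heintz83}). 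Collecting these contributions and bounding each one crudely gives a $\mathfrak d$ with $\deg\mathfrak d\le(3nd_\V+n^2)d_\V$.

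\textbf{Step 2: avoiding $Z(\mathfrak d)$ by an equiprojectable set.} Write $D=(3nd_\V+n^2)d_\V$, so $M=D+L$. I would use a cascade of leading coefficients: set $g_m=\mathfrak d$, and for $i=m,m-1,\dots,1$ let $g_{i-1}\in\Q[Y_1,\dots,Y_{i-1}]$ be the leading coefficient of $g_i$ viewed in $Y_i$; each $g_i$ is nonzero of degree at most $D$. Build the sets $\Lambda_1,\Lambda_2,\dots$ of the equiprojectable construction inductively while maintaining the invariant that every prefix $(y_1,\dots,y_i)$ produced satisfies $g_i(y_1,\dots,y_i)\ne 0$. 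For $\Lambda_1$ pick $L$ elements of $\Gamma_M$ avoiding the at most $D$ roots of $g_1\in\Q[Y_1]$, which is possible since $M-D=L$. Given a prefix $(y_1,\dots,y_i)$ with $g_i(y_1,\dots,y_i)\ne 0$, the polynomial $g_{i+1}(y_1,\dots,y_i,Y_{i+1})$ has leading coefficient $g_i(y_1,\dots,y_i)\ne 0$, hence is a nonzero univariate polynomial of degree at most $D$; choose $\Lambda_{i,(y_1,\dots,y_i)}\subset\Gamma_M$ of cardinality $L$ avoiding its roots, so that each extended prefix keeps the invariant. After $m$ stages every point $\y=(y_1,\dots,y_m)$ of $\Lambda=\Lambda_m$ satisfies $\mathfrak d(\y)=g_m(\y)\ne 0$, hence is a good specialization, and $\Lambda$ is $(M,L)$-equiprojectable by construction.

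The routine part is Step~2; the real work, and the main obstacle, is Step~1 — checking that the good specializations form the complement of a hypersurface and, more delicately, extracting the explicit degree bound $(3nd_\V+n^2)d_\V$. This needs a careful account of every place a denominator or a resultant enters (the denominators of the $T_\ell$, the iterated discriminants $e_\ell$, the leading coefficients $a_\ell$ of the primitive Chow forms of $\VS_\ell$, and the Bézout contribution from projecting the ramification locus), each bounded in $\Y$-degree through Proposition~\ref{Prop:DH}.
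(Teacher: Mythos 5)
Your overall architecture is exactly the paper's: reduce ``goodness'' to the non-vanishing of a single non-zero polynomial $\mathfrak d\in\Q[\Y]$ of degree at most $M_0=(3nd_\V+n^2)d_\V$, then build the equiprojectable set level by level, at each stage discarding at most $M_0$ bad values from $\Gamma_M$ (which is possible since $|\Gamma_M|=M_0+L$). Your Step~2 is complete and correct; it differs from the paper's only in the invariant maintained along the induction (you track the iterated leading coefficients $g_i$ and require $g_i(y_1,\dots,y_i)\ne 0$, whereas the paper requires that $\Delta(y_1,\dots,y_i,Y_{i+1},\dots,Y_m)$ not vanish identically). Both invariants guarantee that at the next stage one is excluding the roots of a non-zero univariate polynomial of degree at most $M_0$, so this is an immaterial variation.

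The gap is in Step~1. The existence of $\Delta$ with precisely the degree bound $(3nd_\V+n^2)d_\V$ is not something the paper derives either: it is quoted verbatim from Theorem~2 of~\citep{Schost03}, which is stated for exactly this setting (a $0$-dimensional triangular set over $\Q(\Y)$ coming from an $m$-equidimensional variety, with specialization preserving the reduced lexicographic Gr\"obner basis structure). Your sketch of a re-derivation is in the right spirit but does not constitute a proof: the assertion that the locus of bad specializations for the Gr\"obner-basis property is contained in a hypersurface of controlled degree, and above all the claim that ``collecting these contributions and bounding each one crudely'' lands on the specific value $(3nd_\V+n^2)d_\V$, are stated without justification, and the intermediate bound $O(nd_\V^2)$ you extract from Proposition~\ref{Prop:DH} would not by itself reproduce that constant. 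Since the proposition's statement of $M$ depends on this exact degree, the argument is incomplete as written; the fix is simply to invoke~\citep[Th.~2]{Schost03} rather than attempt the derivation from scratch.
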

\begin{proof}
  Theorem~2 in~\citep{Schost03} shows that there exists a non-zero
  polynomial $\Delta \in \Z[\Y]$ of degree at most $M_0=(3n
  d_{\V}+n^2)d_{\V}$ such that any $\y \in \C^m$ with $\Delta(\y)
  \ne 0$ is a good specialization; in what follows, we take $M=M_0+L$.

  We are going to use this to construct a sequence
  $\Lambda_1,\Lambda_2,\dots,\Lambda_m$ of $(M,L)$-equiprojectable
  sets in $\Gamma_M, \Gamma_M^2,\dots,\Gamma_M^m$, and we will take
  $\Lambda=\Lambda_m$. We will impose the following property for $i\le
  m$:
  \begin{itemize}
  \item [$({\bf P}_i)$] for all $\y=(y_1,\dots,y_i)$ in $\Lambda_i$, the
    polynomial $\Delta(y_1,\dots,y_i,Y_{i+1},\dots,Y_m)$ is not
    identically zero.
  \end{itemize}
The proof is by induction.
  \begin{itemize}
  \item For $i=1$, remark that there exist at most $M_0$ values $y_1$
    such that $\Delta(y_1,Y_{2},\dots,Y_m)$ vanishes identically, so
    that there exists a subset $\Lambda_1$ of $\Gamma_M$ of
    cardinality $L$ that satisfies~${\bf P}_1$.
  \item For $1 \le i<m$, assume that a subset $\Lambda_i$ satisfying
    ${\bf P}_i$ has been defined. Thus, for $\y=(y_1,\dots,y_i)$ in
    $\Lambda_i$, the polynomial
    $\Delta(y_1,\dots,y_i,Y_{i+1},\dots,Y_m)$ is not identically zero.
    Consequently, there exist at most $M_0$ values $y_{i+1}$ such that
    $\Delta(y_1,\dots,y_i,y_{i+1},Y_{i+2},\dots,Y_m)$ vanishes
    identically. Thus, there exists a subset $\Lambda_{i,\y}$ of
    $\Gamma_M$ of cardinality $L$ such that
    $\Delta(y_1,\dots,y_i,y_{i+1},Y_{i+2},\dots,Y_m)$ vanishes
    identically for no element $y_{i+1}$ in $\Lambda_{i,\y}$. Defining
    $\Lambda_{i+1} = \cup_{\y \in \Lambda_i}\ (\y \times
    \Lambda_{i,\y}),$ we see that this set satisfies ${\bf
      P}_{i+1}$.
\end{itemize}
  Taking $i=m$ shows that $\Lambda=\Lambda_m$  satisfies our requests.
\end{proof}

%%%%%%%%%%%%%%%%%%%%%%%%%%%%%%%%%%%%%%%%%%%%%%%%%%%%%%%%%%%%

\subsection{Norm estimates at good specializations}

We continue with estimates for good specializations of the polynomials
$a_n N_n$ and $a_n^{\G_n} \num_n$. In addition to the constant $\G_n =
1 + 2{\sum}_{i \leq n-1} (d_i-1)$ defined in the previous section, we
will also use the following quantities:
\begin{eqnarray*}
\corr_{n} &=& 5\log(n+3){\sum}_{i \leq n} d_i \\
\I_{n} & = & \corr_{n} + 3\log(2){\sum}_{i \leq n-1} d_i(d_i-1).\label{eq:main}
\end{eqnarray*}
One verifies that these constants satisfy the following upper bounds:
\begin{eqnarray*}
\corr_{n} &\le& 5 \log(n+3) (d_\V+n)\\
\I_{n} & \le &  3 d_\V^2 + 5 \log(n+3) (d_\V+n).\label{eq:main2}
\end{eqnarray*}
Considering only the dependency in $(d_\V,h_\V)$, the following
proposition gives a bound linear in $(d_\V+h_\V)$ for the
specialization of $a_n N_n$; the bound for $a_n^{\G_n} \num_n$ is
quadratic.
\begin{Prop}\label{prop:normspec}
  Let $\y=(y_1,\dots,y_m) \in \Z^m$ be a good specialization, such
  that all entries $y_i$ satisfies $\ell_{v_\infty}(y_i) \le M$. Then,
  the polynomials $a_{n,\y} N_{n,\y}$ and $ a_{n,\y}^{\G_n} \num_{n,\y}$ are
  well-defined, and they satisfy
\begin{eqnarray*}
 \ell_{v_\infty}(a_{n,\y} N_{n,\y})  &\le& 2h_{\V} + (6m+5)d_{\V}\log(m+n+2) 
+ (m+1) d_{\V} \log(M) + m \log(d_\V+1)
\\
& & +\corr_n,\\[1mm]
 \ell_{v_\infty}(a_{n,\y}^{\G_n} \num_{n,\y})  &\le&  \G_n\big[2 h_{\V} + (6m+5)d_{\V}\log(m+n+2)
+(m+1) d_{\V} \log(M)\\
& & + m \log(d_\V+1)\big]+{\sf I}_n.
\end{eqnarray*}
\end{Prop}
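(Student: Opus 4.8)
The plan is to reduce Proposition~\ref{prop:normspec} to the zero-dimensional bit-size estimates of \citep{DaSc04}, applied to the fiber $\V_\y$, and to control the error of this reduction using the specialization property of the primitive Chow form obtained in Section~\ref{sec:spec} together with the size bounds of Propositions~\ref{Prop:DH} and~\ref{Prop:DH1}.

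First I would exploit the hypothesis that $\y$ is a good specialization. By definition, $(Y_1-y_1,\dots,Y_m-y_m,T_{1,\y},\dots,T_{n,\y})$ is then the reduced lexicographic Gr\"obner basis of the defining ideal of the zero-dimensional set $\V_\y$; identifying $\V_\y$ with a subset of $\C^n$ through its (fixed) first $m$ coordinates, $(T_{1,\y},\dots,T_{n,\y})$ is a triangular set for it, having the same degrees $d_1,\dots,d_n$ as $(T_1,\dots,T_n)$, so in particular $d_{\V_\y}=d_1\cdots d_n\le d_\V$. Since $N_\ell$, the resultants $e_\ell$ and $\num_\ell$ are built from the $T_\ell$ by derivations, products, normal-form reductions and resultants, all of which commute with the substitution $\Y\leftarrow\y$ as soon as no denominator vanishes at $\y$, the specialized polynomials $N_{n,\y}$ and $\num_{n,\y}$ are exactly the polynomials $N_n$ and $\num_n$ attached to the zero-dimensional set $\V_\y$. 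Applying the same compatibility to the characteristic polynomial of a generic linear form modulo the ideal (which is how Chow forms of zero-dimensional sets arise, see the end of Section~\ref{sec:Chow}), one gets that $\Cho(\y,\U)\in\Z[\U]$ is a Chow form of $\V_\y$, of degree $d_1\cdots d_n$ in $U_0$; its leading coefficient in $U_0$ is $a_{n,\y}$, which is therefore nonzero, and $\Cho(\y,\U)=a_{n,\y}\,\widetilde\Cho_\y$, where $\widetilde\Cho_\y$ is the monic Chow form of $\V_\y$. In particular $a_{n,\y}N_{n,\y}$ and $a_{n,\y}^{\G_n}\num_{n,\y}$ are well-defined.

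Next I would estimate the data fed into the zero-dimensional bound, namely $d_{\V_\y}$, $h_{\V_\y}$ and $\ell_{v_\infty}(a_{n,\y})$. The first is $\le d_\V$. For $h_{\V_\y}$: since $\Cho(\y,\U)$ is an \emph{integer}-coefficient Chow form of $\V_\y$, in the defining formula for $h_{\V_\y}$ from Subsection~\ref{ssec:h} all non-Archimedean terms are $\le 0$, and the Mahler-measure term is bounded by $\ell_{v_\infty}(\Cho(\y,\U))$ up to explicit terms in $d_{\V_\y}$ and $n$. It remains to bound $\ell_{v_\infty}(\Cho(\y,\U))$ and $\ell_{v_\infty}(a_{n,\y})$: by Propositions~\ref{Prop:DH} and~\ref{Prop:DH1}, both $\Cho$ and $a_n$ have $\Y$-coefficients in $\Z$, degree in $\Y$ at most $d_\V$ (hence at most $(d_\V+1)^m$ monomials in $\Y$), and $v_\infty$-size at most $h_\V+5(m+1)d_\V\log(m+n+2)$, so evaluating at $\y$ (whose entries have absolute value at most $M$) only adds the routine terms $d_\V\log M$ and $m\log(d_\V+1)$. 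This gives $h_{\V_\y}\le h_\V+O\big((m+1)d_\V\log(m+n+2)+d_\V\log M+d_\V\log(d_\V+1)\big)$, and an analogous bound for $\ell_{v_\infty}(a_{n,\y})$.

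Finally I would apply the zero-dimensional estimates of \citep{DaSc04} to $\V_\y$: they bound $\ell_{v_\infty}(a_{n,\y}N_{n,\y})$ by essentially $2h_{\V_\y}$ plus terms in the $d_i$ and $\log(n+3)$ — these being collected into the constant $\corr_n$ — and bound $\ell_{v_\infty}(a_{n,\y}^{\G_n}\num_{n,\y})$ by $\G_n$ times an analogous expression plus $\I_n$, the exponent $\G_n$ and the extra part of $\I_n$ recording the contribution of the iterated resultants $e_1,\dots,e_{n-1}$ appearing in $\num_n$. Substituting the bounds of the previous paragraph and collecting terms yields the two stated inequalities. I expect the main difficulty to be bookkeeping rather than conceptual: one must track precisely how the factor $2$, the exponent $\G_n$, and the corrections $\corr_n$, $\I_n$ propagate through the inequalities $\mathbf{A_1}$--$\mathbf{A_3}$ when the zero-dimensional estimate is specialized and re-expressed in terms of $h_\V$, $d_\V$ and $M$; a secondary point requiring care is the verification that a good specialization is compatible with the passage from $\VS$, and from its primitive Chow form, to $\V_\y$ and $\Cho(\y,\U)$ — in particular the non-vanishing of $a_{n,\y}$ and the preservation of its degree $d_1\cdots d_n$ in $U_0$.
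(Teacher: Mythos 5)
Your overall architecture is the paper's: use the good-specialization hypothesis to identify $N_{n,\y}$ and $\num_{n,\y}$ with the polynomials attached to the $0$-dimensional fiber, apply Lemma~5 of \citep{DaSc04} to that fiber, and separately control $\ell_{v_\infty}(a_{n,\y})$ by evaluating $a_n$ at $\y$ using Proposition~\ref{Prop:DH} (this is exactly where one copy of $h_\V$ and the term $(d_\V\log M + m\log(d_\V+1))$ come from, and where the factor $\G_n$ enters for $\num_n$). The one place you genuinely diverge is the control of the fiber: you propose to bound the height of $v_\y=\pi(\V_\y)\subset\C^n$ via the specialized primitive Chow form $\Cho(\y,\U)$ and Proposition~\ref{Prop:DH1}, whereas the paper works with the monic Chow form $\widetilde{\ChoV}_\y$ of the full fiber $\V_\y\subset\C^{m+n}$, chains the Mahler-measure inequalities ${\sf m}(\widetilde{\Chov}_\y,n+1,1)\le{\sf m}(\widetilde{\ChoV}_\y,m+n+1,1)\le h_{\V_\y}$ (monicity makes the non-Archimedean contributions nonnegative), and then invokes the \emph{arithmetic B\'ezout inequality} for $\V\cap Z(Y_1-y_1,\dots,Y_m-y_m)$, which costs only $h_\V + md_\V\log M + md_\V\log(m+n+1)$.

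This divergence is where your proof would fail to deliver the stated inequality. First, Lemma~5 of \citep{DaSc04} is phrased in terms of the Mahler measure of the \emph{monic} Chow form of the fiber, so from $\Cho(\y,\U)=a_{n,\y}\widetilde{\Chov}_\y$ you must pass from $\ell_{v_\infty}(\Cho(\y,\U))$ to ${\sf m}(\widetilde{\Chov}_\y,n+1,1)$; this requires an upper bound on a Mahler measure by a coefficient size, i.e.\ an inequality in the direction opposite to ${\bf A_1}$, which is true but is not among ${\bf A_1}$--${\bf A_3}$ and introduces an extra $\log(\#\text{monomials})$ term. Second, and more importantly, Proposition~\ref{Prop:DH1} already carries the penalty $5(m+1)d_\V\log(m+n+2)$; using it for the fiber \emph{and} for $a_{n,\y}$ double-counts that penalty and yields a coefficient of roughly $10m+10$ in front of $d_\V\log(m+n+2)$, strictly larger than the stated $6m+5=5(m+1)+m$ (one copy from $a_n$, plus only $md_\V\log(m+n+1)$ from arithmetic B\'ezout). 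So your route proves a bound of the same order but not the proposition as stated; to recover the exact constants you need the arithmetic B\'ezout step. A minor further point: the definition of a good specialization does not by itself guarantee $a_{n,\y}\ne 0$, which your identification of $\Cho(\y,\U)$ with a Chow form of $v_\y$ of degree $d_1\cdots d_n$ in $U_0$ tacitly assumes; the paper sidesteps this by never needing it (if $a_{n,\y}=0$ the claimed inequalities are vacuous).
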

\begin{proof}
  If $\y$ is a good specialization, then the monic triangular set
  $(T_{1,\y},\dots,T_{n,\y})$ is well-defined and generates a radical
  ideal. As a consequence, 
\begin{eqnarray*}
  D_{n} &=&\prod_{1 \leq i \leq n-1} \frac{\partial T_{i}}{\partial X_i} \mod \langle T_{1},\ldots,T_{n-1}\rangle\\
 \text{and}~~~ e_{n} &=& \prod_{1 \leq i \leq n-1} \res(\cdots \res( \frac{\partial T_{i}}{\partial X_i}, T_i, X_i),\cdots,T_1, X_1)
\end{eqnarray*}
can be specialized at $\y$. Since $N_n=D_n T_n \bmod \langle
T_1,\dots,T_{n-1} \rangle$ and $\num_n = e_{n} T_n$, this
establishes our first claim.

Let next $\widetilde{\ChoV}_{\y}$ be the monic Chow form of $\V_{\y}$
(this is a polynomial in $m+n+1$ variables) and let $d_{\V_{\y}}$ be
its degree. The height $h_{\V_\y}$ of $\V_\y$ is
$$h_{\V_{\y}}= \sum_{p \in {\cal P}} \ell_{v_p}(\widetilde{\ChoV}_{\y}) +
{\sf m}(\widetilde{\ChoV}_{\y},1,m+n+1)+ d_{\V_{\y}} \sum_{i=1} ^{m+n}
\frac{1}{2i}.$$ Since $\widetilde{\ChoV}_{\y}$ has a coefficient equal
to~$1$, for every non-Archimedean absolute value $v_p$ we have
$\ell_{v_p}(\widetilde{\ChoV}_{\y}) \ge 0$.  Thus, we get the
inequality
\begin{equation*}
  {\sf m}(\widetilde{\ChoV}_{\y},1,m+n+1) +
  d_{\V_{\y}} \sum_{i=1} ^{m+n} \frac{1}{2i} \ \le\ h_{\V_{\y}}.  
\end{equation*}
Let further $v_\y \subset\C^n$ be the 0-dimensional algebraic set
obtained by projecting $\V_\y$ on the $\X$-space, and let
$\widetilde{\Chov_\y}$ be its monic Chow form. Thus,
$\widetilde{\Chov}_\y$ is obtained by setting all variables
corresponding to $Y_1,\dots,Y_m$ to 0 in $\widetilde{\ChoV}_\y$.

Because $\y$ is a good specialization, applying Lemma~5
in~\citep{DaSc04} to $v_\y$ and $\widetilde{\Chov_\y}$ gives the
following upper bounds:
\begin{eqnarray*}
\ell_{v_\infty}(N_{n,\y}) &\le& {\sf m}(\widetilde{\Chov}_{\y},n+1,1) + \corr_n\\  
\ell_{v_\infty}(\num_{n,\y}) &\le& \G_n {\sf m}(\widetilde{\Chov}_{\y},n+1,1) + {\sf I}_n,
\end{eqnarray*}
which imply, since $a_{n,\y}$ is actually in $\Z$,
\begin{eqnarray*}
\ell_{v_\infty}(a_{n,\y} N_{n,\y}) &\le&\ell_{v_\infty}(a_{n,\y})+  
     {\sf m}(\widetilde{\Chov}_{\y},n+1,1) + \corr_n\\
\text{and}\quad
\ell_{v_\infty}(a_{n,\y}^{\G_n} \num_{n,\y}) &\le& 
\G_n \ell_{v_\infty}(a_{n,\y})+ \G_n {\sf m}(\widetilde{\Chov}_{\y},n+1,1) + {\sf I}_n.
\end{eqnarray*}
Because $\widetilde{\Chov}_{\y}$ is obtained by specializing 
indeterminates at 0 in $\widetilde{\ChoV}_{\y}$, we deduce as 
in~\citep{KrPaSo01} that ${\sf m}(\widetilde{\Chov}_{\y},n+1,1) \le
{\sf m}(\widetilde{\ChoV}_{\y},m+n+1,1)$.  Using inequality ${\bf
  A}_2$, we deduce further
\begin{eqnarray}
\ell_{v_\infty}(a_{n,\y} N_{n,\y}) &\le& 
\ell_{v_\infty}(a_{n,\y}) + {\sf m}(\widetilde{\ChoV}_{\y},1,m+n+1) +
d_{\V_\y} \sum_{i=1} ^{m+n} \frac{1}{2i} + \corr_n\notag \\
&\le&   \ell_{v_\infty}(a_{n,\y}) + h_{\V_\y} +\corr_n\label{eq:N}
\end{eqnarray}
and similarly
\begin{eqnarray}
\ell_{v_\infty}(a_{n,\y}^{\G_n} \num_{n,\y}) &\le& 
\G_n \ell_{v_\infty}(a_{n,\y}) + \G_n h_{\V_\y} +{\sf I}_n. \label{eq:T}
\end{eqnarray}
Next, we give upper bounds on $\ell_{v_\infty}(a_{n,\y})$ and on
$h_{\V_\y}$.
We start with $\ell_{v_\infty}(a_{n,\y})=\ell_{v_\infty}(a_n(\y))$.
Recall from Proposition~\ref{Prop:DH} that $a_n$ is a polynomial
with integer coefficients, of total degree bounded by $d_\V$ and with
$$\ell_{v_\infty}(a_n) \le h_{\V} + 5(m+1)d_{\V}\log(m+n+2).$$
Since all $y_i$ are integers of absolute value bounded by $M$, we
deduce that 
$$\ell_{v_\infty}(a_{n,\y}) \le\ell_{v_\infty}(a_n) + d_{\V} \log(M)+ m \log (1+d_{\V}).$$
The previous bound on $\ell_{v_\infty}(a_n)$ gives
$$\ell_{v_\infty}(a_{n,\y}) \le
h_{\V} + 5(m+1)d_{\V}\log(m+n+2) + d_{\V} \log(M)+ m \log
(1+d_{\V}).$$ 
Next, we need to control $h_{\V_\y}$, with
$$\V_\y=\V \cap Z(Y_1 -y_1,\dots,Y_m-y_m).$$
All polynomials $Y_i-y_i$ have degree 1 and satisfy
$\ell_{v_\infty}(Y_i-y_i) \le \log(M)$. Using the arithmetic B\'ezout
inequality given in Corollary~2.11 of~\citet{KrPaSo01}, we obtain the
upper bound
$$h_{\V_\y} \le  h_{\V} + m d_{\V} \log(M) + m d_{\V}\log(m+n+1).$$
Using the bounds on $\ell_{v_\infty}(a_{n,\y})$ and 
$h_{\V_\y}$, Equations~\eqref{eq:N} and~\eqref{eq:T} give our
result after a quick simplification.
\end{proof}

%%%%%%%%%%%%%%%%%%%%%%%%%%%%%%%%%%%%%%%%%%%%%%%%%%%%%%%%%%%%

\subsection{Conclusion by interpolation}

Finally, we obtain the requested bounds on $N_n$ and $T_n$ using
interpolation at suitable equiprojectable sets. 

The degree bounds are already in~\citep{DaSc04}, and also follow from
Proposition~\ref{Prop:DH}. They state that, if we see $a_n N_n$ in
$\Z[\Y][\X]$, each coefficient of this polynomial is in
$\Z[\Y]_{L_1}$, with $L_1=d_{\V}+1$. Similarly, each coefficient of
$a_n^{\G_n} \num_n$ is in $\Z[\Y]_{L_2}$, with $L_2=\G_n
d_{\V}+1$. Let thus
$$M_1=(3n d_{\V}+n^2)d_{\V}+L_1\quad\text{and}\quad
M_2=(3n d_{\V}+n^2)d_{\V}+L_2.$$ For $i=1,2$, by
Proposition~\ref{prop:spec}, there exists an
$(M_i,L_i)$-equiprojectable set $\Lambda_i$ such that all points in
$\Lambda_i$ are good specializations. Hence, we will interpolate the
coefficients of $a_n N_n$ at $\Lambda_1$ and those of $a_n^{\G_n}
\num_n$ at~$\Lambda_2$, and deduce the height bounds on $N_n$ and
$T_n$ given in Theorem~\ref{theo:main}. As was said before, replacing
$\V$ by its projection $\V_\ell$ gives the analogue bounds for {\em
  all} polynomials $(N_1,\dots,N_n)$ and $(T_1,\dots,T_n)$.
 
\paragraph{Bound on $N_n$.}
Write $a_n N_n$ as
$$a_n N_n = \sum_{\i} g_{\i,n} X_1^{i_1} \cdots X_n^{i_n} + g_n X_n^{d_n},$$
where all multi-indices $\i=(i_1,\dots,i_n)$ satisfy $i_\ell < d_\ell$
for $\ell \le n$, and all coefficients $g_{\i,n}$ and $g_n$ are in
$\Z[\Y]$. Proposition~\ref{prop:normspec} shows that
for $\y$ in $\Lambda_1$, we have the inequality
$$ \ell_{v_\infty}(a_{n,\y} N_{n,\y})  \le 2h_{\V} + (6m+5)d_{\V}\log(m+n+2) + (m+1) d_{\V} \log(M_1)
+m\log(d_\V+1)+\corr_n.$$ Applying Proposition~\ref{prop:interp} to
interpolate each $g_{\i,n}$ and $g_n$, we deduce that they all satisfy
\begin{eqnarray*}
\ell_{v_\infty}(g_{\i,n}),\ \ell_{v_\infty}(g_n) &\le& 
2h_{\V} + (6m+5)d_{\V}\log(m+n+2) + (m+1) d_{\V} \log(M_1)  \\
&& + m\log(d_\V+1) +\corr_n +m L_1\log(M_1+1) + m \log(L_1).  
\end{eqnarray*}
To simplify this expression, we use the definition $L_1=d_\V+1$ and
the upper bounds
$$\corr_n \le 5 \log(n+3) (d_\V+n),
\quad
n+3 \le m+n+3,
\quad
m+n+2 \le m+n+3.$$ After a few simplifications, we
obtain that $\ell_{v_\infty}(g_{\i,n})$ and $\ell_{v_\infty}(g_n)$
both admit the upper bound
\begin{equation*}
2h_{\V} + 2m \log(d_\V+1) +
 \big((6m+10)d_{\V}+5n\big)\log(m+n+3) 
+ \big((2m+1) d_\V+m\big) \log(M_1+1).
\end{equation*}
We continue by remarking that we have the inequality
$$M_1+1 \le (m+n+3)^2 (d_\V+1)^2,$$
which gives
$$
\ell_{v_\infty}(g_{\i,n}),\ \ell_{v_\infty}(g_n) \le 
2 h_\V +\big((4 m +2)d_\V + 4m \big) \log(d_\V+1) + \big((10 m +  12)d_\V + 5n +2m\big)\log(m+n+3).
$$
Note that $\ell_{v_\infty}(a_n)$ satisfies the same upper bound, in
view of Proposition~\ref{Prop:DH}. To conclude, we write $N_n$ as
$$N_n = \sum_{\i} \frac{g_{\i,n}}{a_n}X_1^{i_1} \cdots X_n^{i_n} + 
\frac {g_n}{a_n} X_n^{d_n}.$$ After clearing common factors in the
coefficients $g_{\i,n}/{a_n}$ and ${g_n}/{a_n}$, the logarithmic
absolute value can increase by at most $4d_\V \log(m+1)$ (by ${\bf
  A_3}$), since we have seen that all polynomials involved have
degree at most $d_\V$. We let $\gamma_{\i,n}/\varphi_{\i,n}$ and
$\gamma_n/\varphi_n$ be the reduced forms $g_{\i,n}/{a_n}$ and
$g_n/{a_n}$, that is, obtained after clearing all common factors in
$\Z[\Y]$. This gives the height-related statement in the first point of
Theorem~\ref{theo:main}; the claim of the lcm of all $\varphi_{\i,n}$
and $\varphi_n$ follows, since this lcm divides $a_n$. 

\paragraph{Bound on $T_n$.} 
Similarly, for $\y$ in $\Lambda_2$, we have (from Proposition~\ref{prop:normspec})
$$ \ell_{v_\infty}(a_{n,\y}^{\G_n} \num_{n,\y})  \le  \G_n\big(2 h_{\V} + (6m+5)d_{\V}\log(m+n+2)
+(m+1) d_{\V} \log(M_2)+ m \log(d_\V+1)\big) +{\sf I}_n.$$
Proceeding for $a_n^{\G_n} \num_n$ as we did for $a_n N_n$, we
first write 
$$a_n^{\G_n} \num_n = \sum_{\i} b_{\i,n} X_1^{i_1} \cdots X_n^{i_n} + b_n X_n^{d_n},$$
where all multi-indices $\i=(i_1,\dots,i_n)$ satisfy $i_\ell <
d_\ell$ for $\ell \le n$, and all coefficients $b_{\i_n}$ and $b_n$ are
in $\Z[\Y]$. This time, we obtain after interpolation
\begin{eqnarray*}
\ell_{v_\infty}(b_{\i,n}),\ \ell_{v_\infty}(b_n)
&\le&
\G_n\big[2 h_{\V} +  (6m+5)d_{\V}\log(m+n+2) +  (m+1) d_{\V} \log(M_2) \\
&&   + m \log(d_\V+1)\big] + {\sf I}_n + m L_2\log(M_2+1) + m \log(L_2).
\end{eqnarray*}
Now, we use the upper bounds
$$\G_n \le 2d_\V,\quad
{\sf I}_n \le 3 d_\V^2 + 5 \log(m+n+3) (d_\V+n),\quad L_2 \le
2d_\V^2+1, \quad M_2+1 \le 2 (m+n+3)^2(d_\V+1)^2,$$ and $\log(L_2) \le
1+2\log(d_\V+1)$. We obtain the following upper bound on
$\ell_{v_\infty}(b_{\i,n})$ and $\ell_{v_\infty}(b_n)$:
$$\begin{array}{rcl}
  \ell_{v_\infty}(b_{\i,n}), \quad\ell_{v_\infty}(b_n) &\le &
4d_\V h_\V + 3 d_\V^2 + m + 2\big((4m+2)d_\V^2+md_\V+2m\big) \log(d_\V+1)\\[1mm]
&&+ \big((20 m+14)d_\V^2 + 5d_\V+ 5n+2m \big) \log(m+n+3).
\end{array}$$
% are upper-bounded by
% \begin{eqnarray*}
%   4d_\V h_{\V}
%   +(14m+3n+17)d_\V^2 \log(m+n+2)
%   +(10m+2)d_\V^2 \log(3n d_{\V}+n^2).
% \end{eqnarray*}
To obtain bounds on $T_n$ itself, we recall that this polynomial is
monic in $X_n$; thus, it is enough to divide $a_n^{\G_n} \num_n$ by
its leading coefficient $b_n$ to recover $T_n$. As in the previous
case, clearing common factors may induce a growth in logarithmic
absolute value, this time by at most $4 \G_n d_\V \log(m+1) \le
8d_\V^2 \log(m+1)$ (since all polynomials involved have degree at most
$\G_n d_\V$ by Proposition~\ref{Prop:DH}). Taking this into account
gives the estimate
$$\begin{array}{rcl}
&& 4d_\V h_\V + 3 d_\V^2 + m + 2\big((4m+2)d_\V^2+md_\V+2m \big) \log(d_\V+1)\\[1mm]
&&+ \big((20 m+22)d_\V^2  + 5d_\V+ 5n+2m \big) \log(m+n+3).
\end{array}$$
The second point
in Theorem~\ref{theo:main} follows after a few quick simplifications.

%%%%%%%%%%%%%%%%%%%%%%%%%%%%%%%%%%%%%%%%%%%%%%%%%%%%%%%%%%%%
%%%%%%%%%%%%%%%%%%%%%%%%%%%%%%%%%%%%%%%%%%%%%%%%%%%%%%%%%%%%
%%%%%%%%%%%%%%%%%%%%%%%%%%%%%%%%%%%%%%%%%%%%%%%%%%%%%%%%%%%%

\section{Application}\label{sec:example}

To conclude, we give details of an application of our results. We work
under our usual notation, and we suppose that we are given a system
$(f_1,\dots,f_n)$ in $\Z[\Y,\X]$, such that $\V=Z(f_1,\dots,f_n)$, and
such that the Jacobian determinant $J$ of $(f_1,\dots,f_n)$ with
respect to $(X_1,\dots,X_n)$ does not vanish identically on any
irreducible component of $\V$. As a consequence, $\V$ satisfies
the first condition of Assumption~\ref{ass2}; we will actually
suppose that $\V$ satisfies the second condition as well. 

These assumptions are satisfied if for instance $\V$ is the graph of a
dominant polynomial mapping $\C^n\to \C^n$, with $f_i$ of the form
$Y_i-\varphi_i(\X)$. More generally, we can make a few remarks on the
strength of these assumptions.
\begin{itemize}
\item If we did not make our assumption on the Jacobian determinant,
  it would still be possible to restrict the study to the components
  of $\V$ where $J$ does not vanish identically, by adjoining the
  polynomial $1-SJ$ to the system $(f_1,\dots,f_n)$, where $S$ is a
  new variable.
\item The second condition of Assumption~\ref{ass2} is stronger.  If
  we are not in a situation where we can guarantee it (as on the
  example above), the proper solution will be to replace the
  discussion below by a more general one that takes into account the
  {\em equiprojectable decomposition} of $\VS$~\citep{DaMoScWuXi05}. We
  do not consider this here.
\end{itemize}

Under our assumptions, the question we study here is the following. To
compute either $(T_1,\dots,T_n)$ or $(N_1,\dots,N_n)$, it is useful to
know in advance their degrees in the variables $\Y$ (exactly, not only
upper bounds, as in~\cite{DaSc04}): for instance, it can help
determine how far we proceed in a Newton-Hensel lifting process.

A natural solution is to use modular techniques, that is, to determine
the degrees after reduction modulo a prime $p$: indeed, for all $p$,
except a finite number, the degrees obtained by solving the system
modulo $p$ will coincide with those obtained over $\Q$. The obvious
question is then, how large to choose $p$ to ensure that this is
indeed the case, with a high enough probability? Before giving our
answer, we remark that in practice, one should as well reduce to the
case $m=1$ by restricting to a random line in the $\Y$-space; we will
not analyze this aspect, as the proof techniques are quite similar to
what we show here.

For a prime $p$, and a polynomial $f$ in $\Q(\Y)[\X]$, we denote by
$f_{p}$ the polynomial in $\F_p(\Y)[\X]$ obtained by reducing all
coefficients of $f$ modulo $p$, assuming the denominator of no
coefficient of $f$ vanishes modulo $p$. Besides, for $f$ in either
$\Q(\Y)[\X]$, or $\F_p(\Y)[\X]$, we let $\delta(f)$ be the maximum of
the quantities $\deg(a)+\deg(b)$, for any coefficient $a/b$ of $f$,
with $\gcd(a,b)=1$. Our question here will be to estimate
$(\delta(T_1),\dots,\delta(T_n))$.

The main result of this section is the following proposition; it takes
the form of a big-Oh estimate but the proof gives explicit
results. Remark that we choose to measure the size of $p$ using
quantities that can be read off on the system of generators
$(f_1,\dots,f_n)$, since this is the input usually available in
practice.
\begin{Prop}\label{Prop:appli}
  Suppose that $(f_1,\dots,f_{n})$ have height bounded by $h$ and
  degrees bounded by $d$. Then, there exists a non-zero integer $A$ of
  height
  $${\cal H}_A = O((m+n)^2 n^5 d^{4n+4} (n h + (m+n)^2 \log(d) )),$$
  such that, for any prime $p$, if $A \bmod p\ne 0$, the following
  holds:
  \begin{itemize}
  \item all polynomials $(T_{1,p},\dots,T_{n,p})$ are well-defined;
  \item the ideal $\langle T_{1,p},\dots,T_{n,p} \rangle$ is radical
    and coincides with the ideal $\langle f_{1,p},\dots,f_{n,p}
    \rangle$ in $\F_p(\Y)[\X]$;
  \item for all $\ell \le n$, the equality
    $\delta(T_\ell)=\delta(T_{\ell,p})$ holds.
  \end{itemize}
\end{Prop}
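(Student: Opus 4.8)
The plan is to write $A$ as a product $A_1A_2A_3$ of non-zero integers, each excluding one family of bad primes, and to bound their heights by feeding the geometric and arithmetic B\'ezout inequalities into Theorem~\ref{theo:main}. Indeed \citep{Heintz83,KrPaSo01} give $d_{\V_\ell}\le d^n$ and $h_{\V_\ell}\le d^n\big(nh+(4m+2n+3)\log(m+n+1)\big)$ for every $\ell$, and Theorem~\ref{theo:main} then bounds the degree in $\Y$ of each numerator $\beta_{\i,\ell}$ and denominator $\alpha_{\i,\ell}$ of a coefficient of $T_\ell$ by $2d^{2n}$ and their heights by ${\cal H}'_\ell=O\big(d^{2n}(nh+(m+n)^2\log d)\big)$; for each of the $n$ indices $\ell$ there are at most $d_1\cdots d_\ell\le d^n$ such coefficients.

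\emph{Faithful reduction ($A_1$: first two items).} Over $\Q(\Y)$ the configuration is rigid. Since $\V$ satisfies Assumption~\ref{ass2} and $J$ vanishes identically on no component of $\V$, the Jacobian $J$ lies in no associated prime of $\IS$, so the $0$-dimensional complete intersection quotient $\Q(\Y)[\X]/\langle f_1,\dots,f_n\rangle$ is \'etale over $\Q(\Y)$, hence reduced; it follows that $\langle f_1,\dots,f_n\rangle\cdot\Q(\Y)[\X]=\IS=\langle T_1,\dots,T_n\rangle\cdot\Q(\Y)[\X]$, and $J$ is a unit modulo this ideal. Clearing $\Y$-denominators in the corresponding certificates yields polynomial identities over $\Z[\Y,\X]$ witnessing (i) $\delta_i T_i=\sum_j C_{ij}f_j$ for each $i$ with $\delta_i\in\Z[\Y]\setminus\{0\}$, (ii) $\varepsilon_j f_j=\sum_i A_{ji}T_i$ for each $j$ with $\varepsilon_j\in\Z[\Y]\setminus\{0\}$ (no $\Y$-denominators beyond the $\alpha_{\i,\ell}$ enter here, since the $T_i$ are monic in $X_i$: this is just the division algorithm), and (iii) $\delta(JG-1)\in\langle f_1,\dots,f_n\rangle$ with $\delta\in\Z[\Y]\setminus\{0\}$. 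Let $A_1$ be the product of the $\Z$-contents of all these $\delta_i,\varepsilon_j,\delta$ and of the $\alpha_{\i,\ell}$. For $p\nmid A_1$, the $\alpha_{\i,\ell}$ survive mod $p$, so $(T_{1,p},\dots,T_{n,p})$ is well defined and still a reduced monic triangular set (monicity in $X_i$ and the degree pattern in the $X_j$ are preserved), hence the reduced lexicographic Gr\"obner basis of the ideal it generates; (i)--(ii) descend to give $\langle T_{1,p},\dots,T_{n,p}\rangle=\langle f_{1,p},\dots,f_{n,p}\rangle$; and (iii) shows $J_p$ is a unit modulo this ideal, so the fibre over $\F_p(\Y)$ is \'etale and the ideal radical. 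Bounding $h(A_1)$ is the crux: the $\alpha_{\i,\ell}$-part is governed by ${\cal H}'_\ell$, but $\delta_i$ and $\delta$ come from an \emph{effective arithmetic Nullstellensatz} in the spirit of \citep{KrPaSo01}, applied to a system of degree $O(d^n)$ in $m+n$ variables with heights $O(h_\V+{\cal H}'_\ell)$, and this produces the dominant contribution.

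\emph{Degree preservation ($A_2,A_3$: third item).} Once $p\nmid A_1$, the coefficients of $T_{\ell,p}$ are exactly the reductions of the $\beta_{\i,\ell}/\alpha_{\i,\ell}$, so $\delta(T_{\ell,p})<\delta(T_\ell)$ can only arise from a drop of $\Y$-degree of some $\beta_{\i,\ell}$ or $\alpha_{\i,\ell}$ modulo $p$, or from a new common factor of $\beta_{\i,\ell,p}$ and $\alpha_{\i,\ell,p}$ in $\F_p[\Y]$. I would rule out the first by letting $A_2$ be the product of the $\Z$-contents of all $\beta_{\i,\ell}$, $\alpha_{\i,\ell}$ and of their top-total-degree homogeneous parts. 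For the second, $\gcd(\beta_{\i,\ell},\alpha_{\i,\ell})=\pm1$ in $\Z[\Y]$ forces $Z(\beta_{\i,\ell},\alpha_{\i,\ell})$ to have codimension $\ge2$, so the primes $p$ for which $\gcd(\beta_{\i,\ell,p},\alpha_{\i,\ell,p})$ acquires positive degree form a proper closed subset of $\mathrm{Spec}\,\Z$; I would cut it out by the content of an iterated subresultant obstruction --- a subresultant chain in $Y_1$ after a height-controlled $\Z$-linear change of the $\Y$, followed by recursion on $Y_2,\dots,Y_m$ to kill common factors living there --- and define $A_3$ as the product of these contents over the relevant coefficients. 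When $p\nmid A_1A_2A_3$, faithfulness of the reduction for a coefficient realising $\delta(T_\ell)$ gives $\delta(T_{\ell,p})\ge\delta(T_\ell)$, while $\delta(T_{\ell,p})\le\delta(T_\ell)$ is automatic since passing to lowest terms only lowers degrees. Theorem~\ref{theo:main} bounds $h(A_2),h(A_3)$; matching the exponent announced for ${\cal H}_A$ here calls for economy --- in particular working with the better-behaved polynomials $(N_1,\dots,N_n)$ and $(\num_1,\dots,\num_n)$ of Section~\ref{sec:denom}, whose $\Y$-degrees are only $O(d^n)$, and recovering the $\delta(T_\ell)$ from them.

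\emph{Conclusion.} Taking $A=A_1A_2A_3$ produces a non-zero integer with the three announced properties for every $p\nmid A$; substituting the B\'ezout bounds into the heights of $A_1,A_2,A_3$ and simplifying (using $\ell\le n$ throughout) yields a bound of the announced shape ${\cal H}_A=O\big((m+n)^2n^5d^{4n+4}(nh+(m+n)^2\log d)\big)$. I expect the genuine difficulties to be, first, extracting explicit \emph{height} (not merely degree) bounds for the Nullstellensatz-type cofactors $C_{ij},\delta_i,G,\delta$, which needs an arithmetic Nullstellensatz rather than the geometric statements used earlier in the paper; and, second, organising the multivariate coprimality obstruction $A_3$ (and the passage through $N_\ell$, $\num_\ell$) so that eliminating the $\Y$-variables does not inflate the exponent past $d^{4n+4}$. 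Everything else is bookkeeping around Theorem~\ref{theo:main}.
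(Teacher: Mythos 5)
Your overall architecture (a product of integers, one factor per failure mode, with heights fed by Theorem~\ref{theo:main} and the B\'ezout inequalities) matches the paper's, and your treatment of the third bullet is essentially the right one. But there is a genuine gap in your treatment of the second bullet, and you have in fact flagged it yourself without resolving it. Your plan rests on explicit membership certificates $\delta_i T_i=\sum_j C_{ij}f_j$ and $\delta(JG-1)\in\langle f_1,\dots,f_n\rangle$ with \emph{height} bounds on the cofactors and on $\delta_i,\delta$. The arithmetic Nullstellensatz of \citet{KrPaSo01} only provides such bounds for representing $1$ in the \emph{trivial} ideal; it gives no height-controlled division or ideal-membership statement of the kind you need, and no such result is invoked (or available in the paper's toolkit) with bounds compatible with $d^{4n+4}$. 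The paper avoids this entirely: it uses only the \emph{easy} inclusion $\langle f_{1,p},\dots,f_{n,p}\rangle\subseteq\langle T_{1,p},\dots,T_{n,p}\rangle$, which comes from the division of the $f_j$ by the monic triangular set $(T_1,\dots,T_n)$ (no new denominators beyond those of the $T_i$, already controlled by the factor you call the $\alpha$-content); it gets radicality of $\langle f_{1,p},\dots,f_{n,p}\rangle$ from a Rabinowitsch-trick certificate $A_2\in\langle 1-SH,f_1,\dots,f_n,J\rangle$ in $\Z[\Y,\X,S]$, where the ideal \emph{is} trivial so \cite[Th.~2]{KrPaSo01} applies; and it then closes the reverse inclusion by a counting argument, $d_1\cdots d_n=|Z(T_{1,p},\dots,T_{n,p})|\le|\VS_p|\le|\VS|=d_1\cdots d_n$, where the middle inequality is proved by a Newton--Hensel lifting of the points of a good fibre modulo powers of $\langle p,Y_1-y_1,\dots,Y_m-y_m\rangle$. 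That counting-plus-lifting step is the idea your proposal is missing, and it is what makes the hard membership direction unnecessary.

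Two smaller points. First, as written your $A_2$ and $A_3$ protect the contents, top-degree parts and pairwise coprimality of \emph{all} coefficients $\beta_{\i,\ell}/\alpha_{\i,\ell}$; since there are up to $d_1\cdots d_\ell\le d^n$ of them per level and each obstruction already costs $O(d^{4n})$ in height, this inflates the bound by a factor $d^n$ beyond the announced exponent. Your own remark that the lower bound $\delta(T_{\ell,p})\ge\delta(T_\ell)$ only needs faithfulness for \emph{one} coefficient realising the maximum is the fix: the paper protects exactly one pair $(g_\ell,h_\ell)$ per $\ell$, using the resultant-based bound of \cite[Th.~7.5]{GeCzLa92}, and the upper bound $\delta(T_{\ell,p})\le\delta(T_\ell)$ is, as you say, automatic. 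Second, your suggestion to route the degree-preservation step through $(N_1,\dots,N_n)$ and $(\num_1,\dots,\num_n)$ is not what the paper does and is not needed once only one coefficient per level is protected.
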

\noindent In other words, if $A \bmod p \ne 0$, by solving the system
$(f_{1,p},\dots,f_{n,p})$ in $\F_p(\Y)[\X]$ by means of the
polynomials $(T_{1,p},\dots,T_{n,p})$, we can read off the quantities
$(\delta(T_1),\dots,\delta(T_n))$. Note that the bound on ${\cal H}_A$
is polynomial in the B\'ezout number; we believe that this is hardly
avoidable (as long as we express it using $n,d,h$), though the
exponent $4$ may not be optimal.

The last part of this section will be devoted to prove this
proposition; first, we give an estimate on the random determination of
a ``good prime'' $p$, whose proof is a consequence
of~\cite[Th.~18.10(i)]{GaGe99}.
\begin{Prop}
  One can compute in time $(n \log(mdh))^{O(1)}$ an integer $p$
  such that $6 {\cal H}_A \le p \le 12{\cal H}_A$, and, with
  probability at least $1/2$, $p$ is prime and does not divide $A$.
\end{Prop}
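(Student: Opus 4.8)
The plan is to run, on a carefully chosen integer interval, the standard randomized prime-search procedure analysed in~\cite[Th.~18.10(i)]{GaGe99}, and to combine that analysis with the trivial bound on the number of prime divisors of $A$. Let $\mathcal{H}_A$ denote the \emph{explicit} upper bound produced by the proof of Proposition~\ref{Prop:appli}, so that the actual height of $A$ is at most $\mathcal{H}_A$; enlarging the implied constant if necessary, we may assume both that $6\mathcal{H}_A$ is an integer $N$ and that $N$ exceeds the absolute threshold beyond which the effective prime-counting estimates used below are valid. Thus $[N,2N]=[6\mathcal{H}_A,12\mathcal{H}_A]$. The algorithm I would use draws integers from $\{N,\dots,2N\}$ uniformly at random, tests each for primality (by Miller--Rabin, or by the deterministic test of Agrawal--Kayal--Saxena), returns the first one found prime within $c\lceil\log N\rceil$ draws, and returns an arbitrary element of the interval if none is found; this is precisely the procedure of~\cite[Th.~18.10(i)]{GaGe99}.

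First I would dispatch the cost and range claims. From the shape of $\mathcal{H}_A$ in Proposition~\ref{Prop:appli}, $\log N=O\big(n\log d+\log(mnh)\big)=(n\log(mdh))^{O(1)}$; a single draw together with one primality test on an integer of $O(\log N)$ bits costs $(\log N)^{O(1)}$ bit operations, and $O(\log N)$ of them are performed, for a total of $(n\log(mdh))^{O(1)}$ bit operations; and by construction $6\mathcal{H}_A=N\le p\le 2N=12\mathcal{H}_A$.

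The heart of the proof is the probability estimate, which I would split according to two failure events. By an effective form of the prime number theorem there is an absolute constant $c_0>\tfrac23$ with $\pi(2N)-\pi(N)\ge c_0\,N/\log N$ for our range of $N$; hence the probability that all $c\lceil\log N\rceil$ independent draws avoid the primes of $[N,2N]$ is at most $(1-c_0/\log N)^{c\log N}\le e^{-c c_0}\le\tfrac14$, once $c$ is chosen large enough. Conditioned on the search succeeding, the returned $p$ is uniform among the primes of $[N,2N]$; and since each prime divisor of $A$ that lies in $[N,2N]$ contributes at least $\log N$ to $\log|A|\le\mathcal{H}_A=N/6$, there are at most $N/(6\log N)$ of them, whence
$$\frac{\#\{\text{prime divisors of }A\text{ in }[N,2N]\}}{\pi(2N)-\pi(N)}\ \le\ \frac{N/(6\log N)}{c_0\,N/\log N}\ =\ \frac{1}{6c_0}\ \le\ \tfrac14 .$$
Summing the two contributions, the probability that the search succeeds \emph{and} the returned prime divides $A$ is at most $\tfrac14+\tfrac14=\tfrac12$; equivalently, with probability at least $\tfrac12$ the output $p$ is prime and coprime to $A$, which is the claim.

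The one step I expect to require genuine care is the calibration of the constants: one has to confirm that the multiplicative width $2$ and the lower endpoint $6\mathcal{H}_A$ of the window are large enough that an effective (Chebyshev / Rosser--Schoenfeld type) inequality for $\pi(2N)-\pi(N)$ leaves strictly more than twice as many primes in the window as $A$ can possibly have prime divisors there. Everything else is the routine packaging of the random-prime algorithm and its complexity analysis, which is exactly what~\cite[Th.~18.10(i)]{GaGe99} provides.
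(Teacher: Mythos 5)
Your argument is correct and is essentially the paper's own route: the paper proves this proposition simply by invoking \cite[Th.~18.10(i)]{GaGe99}, and what you have written is exactly the proof of that cited theorem instantiated for the interval $[6\mathcal{H}_A,12\mathcal{H}_A]$ (random draws plus primality tests, an effective Chebyshev-type lower bound on $\pi(2N)-\pi(N)$, and the bound $\mathcal{H}_A/\log N$ on the number of prime divisors of $A$ in the window, which is where the factor $6$ earns its keep). The constant calibration you flag does check out, so there is nothing to add.
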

Remark that for $p$ as above, arithmetic operations in $\F_p$ can be
done in $(n \log(mdh))^{O(1)}$ bit operations. For a concrete example,
suppose that $m=1$, $n=12$, that $f_1,\dots,f_{12}$ have height
bounded by $h=20$ and degrees bounded by $d=3$: this is already quite
a large example, since the B\'ezout number is $531441$. In this case,
evaluating explicitly all bounds involved in the former results shows
that we would compute modulo primes of about 124 bits: this is
routinely done in a system such as Magma~\citep{BoCaPl97}.

\medskip \noindent We now prove Proposition~\ref{Prop:appli}; we start
by constructing explicitly the integer $A$. For $\ell \le n$, recall
that we wrote in Theorem~\ref{theo:main}
$$T_\ell = \sum_{\i} \frac{\beta_{\i,\ell}}{\alpha_{\i,\ell}}X_1^{i_1} \cdots X_\ell^{i_\ell} +  X_\ell^{d_\ell}.$$
\begin{itemize}
\item For $\ell \le n$, we first let $A_{0,\ell}$ be any non-zero
  coefficient of one of the polynomials $\alpha_{\i,\ell}$, and take
  $A_0=A_{0,1}\cdots A_{0,n}$.
\item By assumption, the Zariski-closure of
  $\Pi_0(Z(f_1,\dots,f_n,J))$ is not dense, so it is contained in a
  hypersurface. Thus, there exists a non-zero polynomial $H \in
  \Z[\Y]$ such that $Z(H) \subset \C^m$ contains
  $\Pi_0(Z(f_1,\dots,f_n,J))$. We let $A_1$ be any non-zero
  coefficient of such a polynomial $H$.
\item Let $S$ be a new variable; then by construction, the ideal
  $\langle 1-SH, f_1,\dots,f_n,J\rangle \subset \Q[\Y,\X,S]$ is the
  trivial ideal $\langle 1\rangle$.  We let $A_2$ be a non-zero
  integer that belongs to the ideal generated by $(1-SH,
  f_1,\dots,f_n,J)$ in $\Z[\Y,\X,S]$.
\item For $\ell \le n$, let $g_\ell/h_\ell$ be a coefficient of
  $T_\ell$ that maximizes the sum
  $\deg(\beta_{\i,\ell})+\deg(\alpha_{\i,\ell})$. We let $A_{3,\ell}$
  be a non-zero integer such that if $A_{3,\ell} \bmod p \ne 0$,
  $g_\ell$ and $h_\ell$ remain coprime modulo $p$, and their degrees
  do not drop modulo $p$.  Theorem~7.5 in~\citep{GeCzLa92} shows the
  existence of a non-zero integer $a_{3,\ell}$ that satisfies the
  first requirement; we take for $A_{3,\ell}$ the product of
  $a_{3,\ell}$ by one coefficient of highest degree in $g_\ell$ and
  one in $h_\ell$. As before, we take $A_3=A_{3,1}\cdots A_{3,n}$.
\end{itemize}
We then let $A=A_0 A_1 A_2 A_3$, and we first show that this choice of
$A$ satisfies our requirements.
\begin{Lemma}
  For any prime $p$, if $A \bmod p\ne 0$, the conclusions
  of Proposition~\ref{Prop:appli} hold.
\end{Lemma}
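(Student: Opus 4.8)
The plan is to establish the three assertions of Proposition~\ref{Prop:appli} one at a time, matching each to the factors of $A$ built into its construction. \emph{First}, $A_0$ collects non-zero coefficients of the denominators $\alpha_{\i,\ell}$ of the coefficients of the polynomials $T_\ell$, so $A\bmod p\ne 0$ forces every $\alpha_{\i,\ell}\bmod p\ne 0$; hence each $T_{\ell,p}$ is well-defined and still monic in $X_\ell$ of degree $d_\ell$. Thus $(T_{1,p},\dots,T_{n,p})$ is again a monic triangular set, and as its leading monomials $X_1^{d_1},\dots,X_n^{d_n}$ are pairwise coprime, Buchberger's criterion shows it is the reduced lexicographic Gr\"obner basis of the ideal it generates in $\F_p(\Y)[\X]$, an ideal that is zero-dimensional of colength $D:=d_1\cdots d_n$. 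This already gives the first assertion.

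\emph{Second}, I would prove that $\langle f_{1,p},\dots,f_{n,p}\rangle\cdot\F_p(\Y)[\X]$ is radical and equals $\langle T_{1,p},\dots,T_{n,p}\rangle$. One inclusion is cheap: over $\Q(\Y)$ the Jacobian hypothesis makes $\langle f_1,\dots,f_n\rangle$ a radical ideal (a generically reduced complete intersection has no embedded primes, hence is reduced), so it equals $\IP$ and $\IS=\langle T_1,\dots,T_n\rangle$; dividing each $f_i\in\Z[\Y,\X]$ by the monic Gr\"obner basis $(T_1,\dots,T_n)$ only ever divides by the leading coefficients $1$, so the cofactors have denominators dividing powers of the $\alpha_{\i,\ell}$, and reducing the resulting identities modulo $p$ (licit since $p\nmid A_0$) gives $f_{i,p}\in\langle T_{1,p},\dots,T_{n,p}\rangle$; in particular the left-hand ideal has colength $\ge D$. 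For radicality I would use $A_1$ and $A_2$: since $p\nmid A_2$ the identity $1\in\langle 1-SH,f_1,\dots,f_n,J\rangle$ of $\Z[\Y,\X,S]$ survives mod $p$, and since $p\nmid A_1$ makes $H_p$ a non-zero element of $\F_p[\Y]$, hence a unit of $\F_p(\Y)$, substituting $S\mapsto H_p^{-1}$ shows $J_p$ is invertible modulo $\langle f_{1,p},\dots,f_{n,p}\rangle$ in $\F_p(\Y)[\X]$; by the Jacobian criterion the quotient is then a finite \'etale $\F_p(\Y)$-algebra, in particular reduced and zero-dimensional.

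\emph{The crux, and the step I expect to be hardest,} is to upgrade this to colength exactly $D$ --- equivalently, to show $T_{\ell,p}\in\langle f_{1,p},\dots,f_{n,p}\rangle$, i.e.\ that reducing modulo a prime not dividing $A$ does not lose any generic solution. Granting it, the inclusion $\langle f_{i,p}\rangle\subseteq\langle T_{i,p}\rangle$ between zero-dimensional ideals of equal colength forces equality, and since the smaller one is radical so is $\langle T_{1,p},\dots,T_{n,p}\rangle$, completing the second assertion. To obtain the colength statement I would reduce modulo $p$ the identities $T_\ell=\sum_i c_{\ell,i}f_i$ valid over $\Q(\Y)$, after checking that a common denominator of the $c_{\ell,i}$ (and the leading coefficients needed below) can be absorbed into $A$ without affecting the stated order of magnitude of ${\cal H}_A$; alternatively, one can argue geometrically that after inverting the integer coefficients entering $A$ the family $\Z[\Y,\dots][\X]/\langle f_1,\dots,f_n\rangle$ is a complete intersection with non-empty zero-dimensional fibres over a regular connected base, hence finite and flat by miracle flatness, hence locally free of constant rank $D$, so its fibre over $\F_p(\Y)$ has dimension $D$.

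\emph{Third}, for $\delta(T_\ell)=\delta(T_{\ell,p})$: every coefficient of $T_{\ell,p}$ is the reduction of some $\beta_{\i,\ell}/\alpha_{\i,\ell}$, and passing to lowest terms modulo $p$ can only cancel common factors, so $\deg(\mathrm{num})+\deg(\mathrm{den})$ cannot increase and $\delta(T_{\ell,p})\le\delta(T_\ell)$. Conversely, by Theorem~\ref{theo:main} the coefficient $g_\ell/h_\ell$ realizing $\delta(T_\ell)$ is already in lowest terms over $\Q$; the factor $A_{3,\ell}$, built from the integer furnished by Theorem~7.5 of~\citep{GeCzLa92} together with leading coefficients of $g_\ell$ and $h_\ell$, guarantees that $g_{\ell,p}$ and $h_{\ell,p}$ remain coprime with unchanged degrees whenever $p\nmid A$, so $\delta(T_{\ell,p})\ge\delta(T_\ell)$. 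The two inequalities give the third assertion and finish the proof.
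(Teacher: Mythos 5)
You correctly isolate the crux --- showing that reduction modulo $p$ loses no generic solutions, i.e.\ that the colength of $\langle f_{1,p},\dots,f_{n,p}\rangle$ in $\F_p(\Y)[\X]$ is exactly $D=d_1\cdots d_n$ --- but you do not actually prove it, and this is precisely where the paper's proof does its real work. Your first sub-route (reduce the identities $T_\ell=\sum_i c_{\ell,i}f_i$ modulo $p$ after ``absorbing a common denominator of the $c_{\ell,i}$ into $A$'') does not prove the lemma as stated: the lemma is about the specific integer $A=A_0A_1A_2A_3$ already constructed, and enlarging $A$ changes the statement and forces you to re-derive a height bound for the new factor (essentially another arithmetic Nullstellensatz application), which is exactly what the construction of $A$ was designed to avoid for this step. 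Your second sub-route asserts that the arithmetic family is a complete intersection with zero-dimensional fibres over a regular base and invokes miracle flatness; but equidimensionality of $\Z[\Y][\X]/\langle f_1,\dots,f_n\rangle$ near the prime $(p)$, the absence of components supported in characteristic $p$, and the Cohen--Macaulay/flatness hypotheses are precisely the content at stake and are stated rather than verified.

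The paper's mechanism is different and you should supply it (or an equivalent): it never shows $T_{\ell,p}\in\langle f_{1,p},\dots,f_{n,p}\rangle$ directly, but instead proves the cardinality inequality $|\VS_p|\le|\VS|$. Concretely, one picks $\y\in\overline{\F_p}^{\,m}$ (via Heintz's proposition, using separability from the Jacobian criterion) such that the fibre $\V_\y$ has $|\V_\y|=|\VS_p|$ and $J$ vanishes nowhere on it; Newton--Hensel iteration modulo powers of $\langle p,Y_1-y_1,\dots,Y_m-y_m\rangle$ then lifts every point of $\V_\y$ to a solution of $(f_1,\dots,f_n)$ in $\Z_q[[\Y-\y]]$, and since $\Z_q[[\Y-\y]]\supset\Z[\Y]$ the system has exactly $|\VS|$ solutions over the algebraic closure of its fraction field, giving $|\VS_p|\le|\VS|$. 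Combined with the containment $Z(T_{1,p},\dots,T_{n,p})\subseteq\VS_p$ (which you do establish) and radicality (from $A_1,A_2$, as you argue), the chain $d_1\cdots d_n=|Z(T_{1,p},\dots,T_{n,p})|\le|\VS_p|\le|\VS|=d_1\cdots d_n$ closes the argument. Your treatment of the first and third assertions matches the paper and is fine.
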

\begin{proof}
  Let us fix $p$, and let us write $\VS=Z(f_{1},\dots,f_{n})
  \subset\overline {\Q(\Y)}^n$ and $\VS_p=Z(f_{1,p},\dots,f_{n,p})
  \subset\overline {\F_p(\Y)}^n$. If $A \bmod p \ne 0$, $A_0 \bmod p
  \ne 0$, so all polynomials $(T_{1,p},\dots,T_{n,p})$ are
  well-defined and still form a Gr{\"o}bner basis.  Since
  $(T_{1},\dots,T_{n})$ reduce $(f_{1},\dots,f_{n})$ to zero in
  $\Q(\Y)[\X]$, the reduction relation can be specialized modulo $p$,
  as it involves no new denominator.  We deduce that the zero-set
  $Z(T_{1,p},\dots,T_{n,p}) \subset \overline {\F_p(\Y)}^n$ is
  contained in $\VS_p$.

  If $A \bmod p \ne 0$, we also have $A_1 \bmod p\ne 0$ and $A_2 \bmod
  p \ne 0$; as a consequence, $H \bmod p \ne 0$ and thus the ideal
  $\langle f_{1,p},\dots,f_{n,p},J_p\rangle \subset \F_p(\Y)[\X]$ is
  the trivial ideal. This implies that $\VS_p$ is finite, by the
  Jacobian criterion, since the Jacobian determinant $J_p$ vanishes
  nowhere on $\VS_p$. Besides, we also obtain that the roots of
  $\langle f_{1,p},\dots,f_{n,p}\rangle$ have multiplicity 1; the
  claims in the previous paragraph show that this is the case for
  $\langle T_{1,p},\dots,T_{n,p}\rangle$ as well. Thus, to obtain the
  second point, it suffices to prove that $|Z(T_{1,p},\dots,T_{n,p})|=
  |\VS_p|$.
  
  First, we prove the inequality $|\VS_p| \le |\VS|$. Let
  $\y=(y_1,\dots,y_m) \in \overline{\F_p}^m$ be such that $J$ vanishes
  nowhere on the fiber $\V_\y=Z(f_{1,p}(\y,\X),\dots,f_{n,p}(\y,\X))
  \subset \overline{\F_p}^n$, and such that $|\V_\y|=|\VS_p|$; such an
  $\y$ exists, by~\cite[Prop.~1]{Heintz83} (therein, the author
  assumes that the extension $\overline{\F_p}(\Y)
  \to\overline{\F_p}(\Y)[\X]/\langle f_{1,p},\dots,f_{n,p}\rangle$ be
  separable: this is the case here by the Jacobian criterion). Let
  $\F_q$ be a finite extension of $\F_p$ that contains all coordinates
  of all points in $\V_\y$. Then, using Newton iteration modulo powers
  of $\langle p,Y_1-y_1,\dots,Y_m-y_m\rangle$, all points in $\V_\y$
  can be lifted to solutions of $(f_1,\dots,f_n)$ in $\Z_q[[\Y-\y]]$,
  where $\Z_q$ is a finite integral extension of $\Z_p$.  Since
  $\Z_q[[\Y-\y]]$ contains $\Z[\Y]$, the number of solutions of
  $(f_1,\dots,f_n)$ in an algebraic closure of the fraction field of
  $\Z_q[[\Y-\y]]$ is $|\VS|$. As a consequence, the cardinality of
  $\VS_p$, which equals that of $\V_\y$, is at most that of $\VS$, as
  claimed.

  Let $d_1,\dots,d_n$ be the degrees of $T_1,\dots,T_n$ in
  respectively $X_1,\dots,X_n$.  In view of the inclusion proved
  above, we deduce the inequalities
$$d_1 \cdots d_n \ =\ |Z(T_{1,p},\dots,T_{n,p})|
\ \le \ |\VS_p| \ \le\ |\VS|\ =\ d_1 \cdots d_n.$$
As said before, this establishes the second point of the proposition.

It remains to deal with the last point. If $A_3 \bmod p \ne 0$, then
for all $\ell \le n$, $A_{3,\ell} \bmod p \ne 0$; the definition we
adopted for $A_{3,\ell}$ ensures that in this case, $g_\ell$ and
$h_\ell$ remain coprime and keep the same degree modulo~$p$, as
needed.
\end{proof}

It remains to estimate $h(A)=h(A_0)+h(A_1)+h(A_2)+h(A_3)$. Combining
the results of the next paragraphs finishes the proof of
Proposition~\ref{Prop:appli}.

\paragraph{Height of $A_0$.}  By construction, using the notation of
Theorem~\ref{theo:main}, we have $h(A_0) \le {\cal H}'_1 + \cdots +
{\cal H}'_n$, with for all $\ell$
$${\cal H}'_\ell=O\big ( d^{2n}(n h  + mn\log(d)+ (m+n)\log(m+n))   \big).$$
In particular, we have
$$h(A_0) =O\big ( d^{2n}(n^2 h  + mn^2\log(d)+ n(m+n)\log(m+n))   \big).$$

\paragraph{Height of $A_1$.}
Next, we estimate the degree and height of the polynomial $H$.  Let
$\V'=Z(f_1,\dots,f_n,J)$; if $\V'$ is empty, we take $H=1$ and we are
done. Otherwise, we get $\dim(\V') \le m-1$. By B\'ezout's theorem,
the degree $d_{\V'}$ of $\V'$ is bounded from above by $nd^{n+1}$. Further,
note that $h(J) \le h'$, with $h'=n(h+\log(nd)+d\log(n+1))$, in view
of the discussion following~\cite[Lemma~1.2]{KrPaSo01}. Applying twice
the arithmetic B\'ezout theorem (in the form
of~\cite[Coro.~2.11]{KrPaSo01}), first to bound the height of $\V$ and
then of $\V'$, we deduce the inequality
\begin{equation*}
h_{\V'} \le nd^{n+1} (nh +h'+(m+2n+1)\log(m+n+1))
\end{equation*}
and thus
\begin{equation*}  \label{eq:HV}
h_{\V'} \le nd^{n+1} (2nh+n\log(nd)+nd\log(n+1)+(m+2n+1)\log(m+n+1)).
\end{equation*}
Let us decompose $\V'$ into its irreducible components
$\V'_1,\dots,\V'_K$. For each such component $\V'_k$, there exists a
subset $\Y_k$ of $Y_1,\dots,Y_m$ such that the Zariski-closure of the
projection of $\V'$ on the $\Y_k$-space is a hypersurface.

Fix $k \le K$, let $\varphi_k$ be the corresponding projection, and
let $\mathscr{W}_k$ be the Zariski-closure of $\varphi_k(\V'_k)$. The
degree of $\mathscr{W}_k$ is at most that of $\V'_k$;
by~\cite[Lemma~2.6]{KrPaSo01}, the height of $\mathscr{W}_k$ is at
most $h_{\V'_k} + 3md_{\V'_k} \log(n+m+1)$. As a consequence, using
the remarks on~\cite[p.~347]{Philippon95}, we deduce that there exists
a non-zero polynomial $H_k \in \Z[\Y_k]$ of degree at most $d_{\V'_k}$
and height at most $h_{\V'_k} + d_{\V'_k}(3m \log(n+m+1)+2)$ that
defines $\mathscr{W}_k$.

We can take $H=H_1\cdots H_K$. The degree of $H$ is bounded by $d_{\V'}$;
using~\cite[Lemma~1.2.1.b]{KrPaSo01}, we see that its height is bounded
by
$$h''=nd^{n+1} (2nh+(4m+2n+2)\log(m+n+1)+n\log(nd)+nd\log(n+1)+2).$$
We deduce in particular
$$h(A_1) = O(nd^{n+1} (nh+nd\log(n)+(m+n)\log(m+n))).$$

\paragraph{Height of $A_2$.} 
We are going to apply a suitable version of the arithmetic
Nullstellensatz~\cite[Th.~2]{KrPaSo01}. We would also like to mention
the recent work of~\citet{Jelonek05}: it gives finer estimates
for the degrees of polynomials in the effective Nullstellensatz, and
it might be possible to derive also better height estimates with his
technique. We will not pursue this here.

The polynomials $(1-SH,J,f_1,\dots,f_n)$ have degrees at most
$(d'',d',d,\dots,d)$, with $d'=nd$ and $d''=nd^{n+1}+1$; their heights
are bounded by $(h'',h',h,\dots,h)$, with $h'$ and $h''$ as above.
Definition~4.7 in~\citep{KrPaSo01} associates to such a system of
equations a {\em degree} $\delta$ and a {\em height} $\eta$; Theorem~2
in~\citep{KrPaSo01} then shows that we can take
\begin{equation}
  \label{eq:A2}
h(A_2) \le 
(m+n+2)^2 d'' (2\eta + (h''+\log(n+2))\delta + 21 (m+n+2)^2 d''\delta \log(d''+1)).  
\end{equation}
The quantities $\delta$ and $\eta$ satisfy the following inequalities.
Let $\Gamma$ be the set of $(n+2)\times(n+2)$ integer matrices with
coefficients of height at most $\nu=2(m+n+2)\log(d''+1)$. To a matrix
$\bA$ in $\Gamma$, associate the polynomials
$$g_{\bA,i}=A_{i,1} (1-SH) + A_{i,2} J + A_{i,3} f_1 + \cdots +
A_{i,n+2}f_n, \quad 1 \le i \le n+2$$ and the algebraic set
$\mathscr{W}_{\bf A} = Z(g_{\bA,1},\dots,g_{\bA,n+2})$.  Then $\delta
\le \max_{\bA \in \Gamma} d_{\mathscr{W}_{\bf A}}$ and $\eta \le
\max_{\bA \in \Gamma} h_{\mathscr{W}_{\bf A}}$.

To give good estimates on these quantities, we perform linear
combinations of the equations $g_{\bA,i}$ to partially triangulate
them, by eliminating $1-SH$ from all equations except the first one
and $J$ from all equations except the first two ones. As in the proof
of~\cite[Lemma~4.8]{KrPaSo01}, for any $\bA\in\Gamma$, the ideal
$\langle g_{\bA,1},\dots,g_{\bA,n+2} \rangle$ is equal to the ideal
$\langle g^\star_{\bA,1},\dots,g^\star_{\bA,n+2}\rangle$ with the
shape just described:
\begin{itemize}
\item $g^\star_{\bA,1}=A^\star_{1,1} (1-SH)+A^\star_{1,2} J + A^\star_{1,3} f_1 + \cdots +A^\star_{1,n+2}f_n$,
\item $g^\star_{\bA,2}=                     A^\star_{2,2} J + A^\star_{2,3} f_1 + \cdots +A^\star_{2,n+2}f_n$,
\item $g^\star_{\bA,i}=                     A^\star_{i,3} f_1 + \cdots +A^\star_{i,n+2}f_n$ for $i \ge 3$.
\end{itemize}
Besides, one can take the coefficients $A^\star_{1,j}$ as entries of
$\bA$, the coefficients $A^\star_{2,j}$ as minors of $\bA$ of size 2,
and the coefficients $A^\star_{i,j}$ as minors of $\bA$ of size 3, for
$i\ge 3$. This implies that we have
\begin{itemize}
\item $\deg(g^\star_{\bA,1}) \le d''$ and $h(g^\star_{\bA,1}) \le \ell''=h'' + \nu+ \log(n+2)$;
\item $\deg(g^\star_{\bA,2}) \le d'$ and $h(g^\star_{\bA,2}) \le \ell'=h' + 2\nu+\log(2) + \log(n+2)$;
\item $\deg(g^\star_{\bA,i}) \le d$ and $h(g^\star_{\bA,i}) \le \ell=h + 3\nu+\log(6) + \log(n+2)$ for $i \ge 3$.
\end{itemize}
It follows that $\delta \le d^n d' d''=n^2d^{2n+2}+nd^{n+1}$, as
showed in~\cite[Lemma~4.8]{KrPaSo01}. The estimate we obtain on $\eta$
is finer than the one in that lemma, though: we apply a first time the
arithmetic B\'ezout theorem~\cite[Coro.~2.11]{KrPaSo01}, to obtain a
bound on the height of $Z(g^\star_{\bA,3},\dots,g^\star_{\bA,n+2})$,
then intersect it with $Z(g^\star_{\bA,2})$ and
$Z(g^\star_{\bA,1})$. This results in the inequality
$$ \eta \le d^n\big(d'' d' ((n+2)\ell+(m+2n+3)\log(m+n+2)) + \ell' d''   + \ell'' d'\big),$$
from which a bound on $h(A_2)$ follows by means of~\eqref{eq:A2}.
All formulas given in this paragraph yield explicit bounds;
however, after a few simplifications, we find the big-Oh estimate
$$h(A_2) = O((m+n)^2 n^5 d^{4n+4} (n h + (m+n)^2 \log(d) )).$$

\paragraph{Height of $A_3$.} Let us fix $\ell \le n$. The corollary to
Theorem~7.5 in~\citep{GeCzLa92} shows that the integer $a_{3,\ell}$ has
height at most $2d^{2n}m ({\cal H}'_\ell + m \log(2d^{2n}+1))$; the
height bound of $A_{3,\ell}$ follows by adding $2{\cal H}'_\ell$.
This leads to an upper bound on $h(A_3)$ by summing for
$\ell=1,\dots,n$; we obtain
$$h(A_3)=O\big(d^{4n} (mn h+m^2n\log(d)+ m(m+n)\log(m+n))\big).$$

\bigskip
\paragraph{Acknowledgments}
We acknowledge the support of NSERC, the Canada Research Chairs
Program and MITACS; and of the Japanese Society for the Promotion
of Science (Global-COE program ``Maths-for-Industry'').

%%%%%%%%%%%%%%%%%%%%%%%%%%%%%%%%%%%%%%%%%%%%%%%%%%%%%%%%%%%%
%%%%%%%%%%%%%%%%%%%%%%%%%%%%%%%%%%%%%%%%%%%%%%%%%%%%%%%%%%%%
%%%%%%%%%%%%%%%%%%%%%%%%%%%%%%%%%%%%%%%%%%%%%%%%%%%%%%%%%%%%

\bibliographystyle{plainnat} 
\bibliography{DaKaSc09}

%%%%%%%%%%%%%%%%%%%%%%%%%%%%%%%%%%%%%%%%%%%%%%%%%%%%%%%%%%%%
%%%%%%%%%%%%%%%%%%%%%%%%%%%%%%%%%%%%%%%%%%%%%%%%%%%%%%%%%%%%
%%%%%%%%%%%%%%%%%%%%%%%%%%%%%%%%%%%%%%%%%%%%%%%%%%%%%%%%%%%%

\end{document}